\title{Non-Expansive Fuzzy Coalgebraic Logic}
\theoremstyle{definition}
\declaretheorem[name=Definition,style=definition,numberwithin=section]{definition}
\declaretheorem[name=Example,style=definition,sibling=definition]{example}
\declaretheorem[name=Remark,style=definition,sibling=definition]{remark}
\declaretheorem[name=Convention,style=definition,sibling=definition]{convention}
\declaretheorem[name=Theorem,sibling=definition]{theorem}
\newcommand{\Pow}{\mathcal{P}}
\newcommand{\Dist}{\mathcal{D}}
\newcommand{\Rat}{\mathbb{Q}} \newcommand{\ALC}{\mathcal{ALC}}
\newcommand{\EXP}{\textsc{ExpTime}\xspace}
\newcommand{\NEXP}{\textsc{NExpTime}\xspace}
\newcommand{\PSPACE}{\textsc{PSpace}\xspace}
\newcommand{\NP}{\textsc{NP}\xspace}
\newcommand{\GENERALLY}{\boldsymbol{G}}
\newcommand{\PROBABLY}{\boldsymbol{E}}
\newcommand{\SET}{\operatorname*{Set}}
\newcommand{\op}{\operatorname*{op}}
\newcommand{\Lgen}{\mathcal{L}_{\mathsf{gen}}}
\newcommand{\tin}{%
    \mathrel{%
        \stackMath%
        \stackinset{c}{0.3pt}{c}{0.6ex}{\cdot}{\in}%
    }%
}
\def\Xint#1{\mathchoice
	{\XXint\displaystyle\textstyle{#1}}
	{\XXint\textstyle\scriptstyle{#1}}
	{\XXint\scriptstyle\scriptscriptstyle{#1}}
	{\XXint\scriptscriptstyle\scriptscriptstyle{#1}}
	\!\int}
\def\XXint#1#2#3{{\setbox0=\hbox{$#1{#2#3}{\int}$}
		\vcenter{\hbox{$#2#3$}}\kern-.5\wd0}}
\def\dashint{\Xint-}
\begin{document}
\author[S. Gebhart]{Stefan Gebhart}
\authornote{Supported by Deutsche Forschungsgemeinschaft (DFG, German Research Foundation) -- project number 531706730}          
\affiliation{
	\institution{Friedrich-Alexander-Universit\"{a}t Erlangen-N\"{u}rnberg}            
	\country{Germany}                    
}
\email{stefan.gebhart@fau.de}          
\author[L. Schr\"oder]{Lutz Schr\"oder}
\affiliation{
	\institution{Friedrich-Alexander-Universit\"{a}t Erlangen-N\"{u}rnberg}            
	\country{Germany}                    
}
\email{lutz.schroeder@fau.de}          
\author[P. Wild]{Paul Wild}
\affiliation{
\institution{Friedrich-Alexander-Universit\"{a}t Erlangen-N\"{u}rnberg}            
\country{Germany}                    
}
\email{paul.wild@fau.de}          

\begin{abstract}
  Fuzzy logic extends the classical truth values ``true'' and
  ``false'' with additional truth degrees in between, typically real
  numbers in the unit interval. More specifically, fuzzy modal logics
  in this sense are given by a choice of fuzzy modalities and a fuzzy
  propositional base. It has been noted that fuzzy modal logics over
  the Zadeh base, which interprets disjunction as maximum, are often
  computationally tractable but on the other hand add little in the
  way of expressivity to their classical counterparts. Contrastingly,
  fuzzy modal logics over the more expressive \L{}ukasiewicz base have
  attractive logical properties but are often computationally less
  tractable or even undecidable. In the basic case of the modal logic
  of fuzzy relations, sometimes termed \emph{fuzzy $\ALC$}, it has
  recently been shown that an intermediate \emph{non-expansive}
  propositional base, known from characteristic logics for behavioural
  distances of quantitative systems, strikes a balance between these
  poles: It provides increased expressiveness over the Zadeh base
  while avoiding the computational problems of the \L{}ukasiewicz
  base, in fact allowing for reasoning in \PSPACE. Modal logics, in
  particular fuzzy modal logics, generally vary widely in terms of
  syntax and semantics, involving, for instance, probabilistic,
  preferential, or weighted structures. Coalgebraic modal logic
  provides a unifying framework for wide ranges of semantically
  different modal logics, both two-valued and fuzzy. In the present
  work, we focus on \emph{non-expansive coalgebraic fuzzy modal
    logics}, providing a criterion for decidability in \PSPACE. Using
  this criterion, we recover the mentioned complexity result for
  non-expansive fuzzy $\ALC$ and moreover obtain new \PSPACE upper
  bounds for various quantitative modal logics over probabilistic and
  metric transition systems. Notably, we show that the logic of
  \emph{generally}, which has recently been shown to characterize
  $\epsilon$-distance on Markov chains, is decidable in \PSPACE.
\end{abstract}

\maketitle


\section{Introduction}\label{intro}

Logics with real-valued truth degrees in the unit interval, widely
known as \emph{fuzzy logics} (e.g.~\cite{ZadehAliev18}), offer a more
fine-grained notion of truth than two-valued classical logics. They
have been popularized in knowledge representation as providing
expressive means for vague real-world phenomena such as the tallness
of a person or mutual dislike between persons. Beyond basic
propositional fuzzy logics, fuzzy \emph{modal} logics have
correspondingly seen applications in their incarnation as fuzzy
description logics (e.g.~\cite{LukasiewiczStraccia08}), where, for
instance, people who like only tall people would be described by the
concept $\forall\,\mathsf{likes}.\,\mathsf{tall}$. Fuzzy modal logics
additionally play a role in concurrency and model checking, where the
term \emph{quantitative modal logic} is more common. A well-known
example is the probabilistic
$\mu$-calculus~\cite{MorganMcIver97,HuthKwiatkowska97}, in which
quantitative truth values are combined using fixpoints, propositional
operators, and a probabilistic expectation modality. Various next-step
modal logics have moreover appeared in Hennessy-Milner type
characterization theorems stating coincidence of logical distance and
various forms of \emph{behavioural distance}. The prototypical result
of this kind is the coincidence of a behavioural distance on
probabilistic transition systems defined using the Kantorovich
distance of distributions on the one hand, and logical distance in a
quantitative probabilistic modal logic based on an expectation
modality on the other hand~\cite{BreugelWorrell05}. Similar results
have been obtained for various types of fuzzy relational systems
(e.g~\cite{Fan15}) as well as for systems combining probability and
non-determinism~\cite{EleftheriouEA12,DuEA16}, and moreover have been
proved in coalgebraic
generality~\cite{KonigMikaMichalski18,WildSchroder22}, thus covering
also relational, weighted, neighbourhood-based, and other system
types.

Our main concern in the present work is automated reasoning in fuzzy
modal logics. The complexity of the central reasoning problems in
fuzzy modal logics depends strongly on the underlying propositional
base. Two well-known poles in the landscape of fuzzy propositional
systems are, on the one hand, the \emph{Zadeh} base, which features only
minimum, maximum, and fuzzy negation $x\mapsto 1-x$; and on the other
hand, the more expressive \emph{\L{}ukasiewicz} base, essentially given
by interpreting disjunction as truncated addition. In the basic case
of fuzzy-relational modal logic (in description logic parlance,
\emph{fuzzy $\ALC$}~\cite{LukasiewiczStraccia08}), modal logic over
the Zadeh base has moderate complexity, but on the other hand in fact
essentially coincides with two-valued modal logic in the sense that a
formula is satisfiable with some threshold truth degree $p > 0.5$ iff
it is classically
satisfiable~\cite{Straccia01,KellerHeymans09}. Hence, satisfiability
with threshold~$p$ is (only) \PSPACE complete (while satisfiability
with any threshold $p\le 0.5$ is decidable in linear time by
straightforward recursion over the formula
syntax~\cite{BonattiTettamanzi03}). Contrastingly, fuzzy $\ALC$ over
the \L{}ukasiewicz base is more expressive but has less favourable
computational properties: Reasoning under global assumptions (i.e.\
under a TBox, in description logic terms) is undecidable, and the best
known algorithms for satisfiability checking in the absence of global
assumptions take non-deterministic exponential
time~\cite{Straccia05,StracciaBobillo07,SchroderPattinson11,KulackaEA13}.

Many of the above-mentioned characteristic modal logics for
behavioural distances (including quantitative probabilistic modal
logic)~\cite{BreugelWorrell05,KonigMikaMichalski18,WildSchroder22} in
fact work with an intermediate \emph{non-expansive} propositional
base, which extends the Zadeh base with constant shifts (alternatively
restricts \L{}ukasiewicz disjunction by requiring one disjunct to be
constant); this is owed precisely to the fact that the \L{}ukasiewicz
base contains operations that increase distance, such as addition,
while the Zadeh base is insufficient to characterize behavioural
distance. The non-expansive base is moreover employed in the recently
introduced fuzzy description logic \emph{non-expansive fuzzy
  $\ALC$}~\cite{ijcai2025p502}, which, unlike Zadeh fuzzy $\ALC$,
supports the specification of actual quantitative effects in knowledge
representation (such as the degree of football fandom being passed on
among friends with a constant decrease). Nevertheless, non-expansive
fuzzy $\ALC$ still allows reasoning in \PSPACE. Since
non-expansiveness is the quantitative analogue of bisimulation
invariance~\cite{WildEA18}, this confirms the slogan that the
tractability of modal logics is owed largely to their bisimulation
invariance, which, for instance, classically entails a tree model
property~\cite{Vardi96}.  In a nutshell, the contribution of the
present work is to lift this result to the level of generality of
quantitative coalgebraic
logic~\cite{SchroderPattinson11,KonigMikaMichalski18,WildSchroder22}.
Generally, coalgebraic logic provides a unified treatment of the
syntax, semantics and algorithmics of wide ranges of modal
logics~\cite{Schroder07,SchroderPattinson09}, such as
probabilistic~\cite{LarsenSkou91}, game-based~\cite{Pauly02,AlurEA02},
and neighbourhood-based~\cite{Chellas80} logics. Its generality is
based on abstracting systems as coalgebras for a given functor
determining the system type following the paradigm of universal
coalgebra~\cite{Rutten00}, and modalities as predicate liftings for
the given functor~\cite{Pattinson04,Schroder08}. Quantitative modal
logics are specifically captured using liftings of
unit-interval-valued predicates to the given
functor~\cite{SchroderPattinson11,KonigMikaMichalski18,WildSchroder22}. We
develop a criterion for a quantitative coalgebraic modal logic over
the non-expansive propositional base in this sense to allow
satisfiability checking in \PSPACE, reducing the proof work to
properties that are fairly straightforward to check in concrete
instances. As one such instance, we recover the \PSPACE decidability
of non-expansive fuzzy
$\ALC$~\cite{ijcai2025p502,Hermes23}. Additionally, we obtain new
instances. In particular, we newly establish \PSPACE decidability of
the modal logics of crisp~\cite{AlfaroEA09} and
fuzzy~\cite{ForsterEA25} metric transition systems and of several
quantitative probabilistic modal logics that relate to two-valued
modalities appearing in a logical characterization of
$\epsilon$-bisimilarity on Markov chains~\cite{DesharnaisEA08}. Most
notably, this concerns the non-expansive logic of `generally', a
variant of the fuzzy modality `probably' that has been introduced in
the context of vague knowledge
representation~\cite{SchroderPattinson11}.  While `probably' is just
the expectation modality~\cite{BreugelWorrell05}, and as such is
essentially based on standard integration (or weighted sums in the
discrete setting), the `generally' modality instead employs Sugeno
integration as used in fuzzy measure theory~\cite{1975TheoryOF}. It
has recently been shown that in the same way as `probably' induces the
Kantorovich distance on distributions in the manner recalled above,
`generally' induces the popular L\'evy-Prokhorov
distance~\cite{wild2025generalizedkantorovichrubinsteindualityhausdorff},
which in turn has recently been shown~\cite{DesharnaisSokolova26} to
induce $\epsilon$-bisimulation distance on Markov
chains~\cite{DesharnaisEA08}. By general results in quantitative
coalgebraic logic~\cite{KonigMikaMichalski18,WildSchroder22}, this
implies that the modal logic of `generally' characterizes
$\epsilon$-bisimulation distance on finitely branching Markov chains.

\paragraph*{Related Work}\label{relwork} The use of (rational)
truth constants in \L{}ukasiewicz-type fuzzy logics goes back to
(rational) Pavelka logic~\cite{Hajek95,Pavelka79}.  Constraining
\L{}ukasiewicz fuzzy $\ALC$ to finitely many truth values ensures
decidability of threshold satisfiability in
$\PSPACE$~\cite{BouEA11}. As mentioned above, only an $\NEXP$ upper
bound is known for infinite-valued \L{}ukasiewicz fuzzy
$\ALC$~\cite{Straccia05,SchroderPattinson11,KulackaEA13}. The same
holds for our probabilistic instance logics (the logic of
\emph{generally} and quantitative fuzzy $\ALC$), for whose
\L{}ukasiewicz versions an upper bound $\NEXP$ has been obtained by
coalgebraic methods~\cite{SchroderPattinson11} while we obtain \PSPACE
completeness for the non-expansive variants. The satisfiability
problem in fuzzy $\ALC$ with product semantics is decidable, but no
complexity bound has been given~\cite{CeramiEsteva22}.  For fuzzy
description logics over the G\"odel propositional base, the threshold
satisfiability problem remains in $\PSPACE$ for the basic logic
$\ALC$~\cite{CaicedoEA17}, and decidability is retained even in very
expressive logics \cite{BorgwardtEA16}. Similarly, over the Zadeh
base, reasoning is decidable even for highly expressive
logics~\cite{StoilosEA07,StoilosEA14}.  There is a tableaux algorithm
for fuzzy $\ALC$ (over the empty TBox) that works with any continuous
t-norm~\cite{Baader2015}, subject to varying complexity. Our
coalgebraic algorithm is based on a quantitative extension of the
principle of reduction to \emph{one-step satisfiability},
i.e.~satisfiability in a small fragment of the logic that, roughly
speaking, prohibits nesting of modalities. This principle has
previously been used in the two-valued
setting~\cite{SchroderPattinson08,HausmannSchroder24} and, in a
different formulation than we employ here, in work on \L{}ukasiwicz
coalgebraic fuzzy logic~\cite{SchroderPattinson11}.

\paragraph*{Organization} We recall two key examples, the logic of
\emph{generally} and quantitative fuzzy $\ALC$, in
\Cref{generallyintro}, and introduce our general framework of
non-expansive fuzzy coalgebraic logic in \Cref{nonexpfuzlog}. The
technical development of our main result stretches over
\Cref{onesteplogics,tableau,polspacebound}, where we
respectively introduce the key notion of one-step logic, a tableau
calculus, and a criterion for a complexity estimate of the tableau
algorithm. \Cref{lgen,quantfuzzy,metrictracelogic} are devoted to
instantiations of the generic complexity estimate, specifically to the
logic of \emph{generally} (\Cref{lgen}), to \emph{quantitative
  fuzzy $\ALC$} (\Cref{quantfuzzy}), and to fuzzy metric modal
logic (\Cref{metrictracelogic}).

\section{Two Quantitative Probabilistic Modal Logics}\label{generallyintro}

\noindent We proceed to introduce two introductory examples of
non-expansive fuzzy modal logics, both interpreted over probabilistic
structures.

\paragraph*{The Logic of `generally'} The fuzzy qualification
\emph{probably} is understood as deeming the probability of some
property as being `high' in a vague sense, allowing that the property
itself may be vague. One standard formal interpretation of
\emph{probably} is to take expected truth
values~\cite{Zadeh68,Hajek07,SchroderPattinson11}. As an alternative
with possibly better computational properties, the following
interpretation has been proposed, with the suggested pronunciation
`\emph{generally}'~\cite{SchroderPattinson11}:

Fix a continuous piecewise linear monotone function
$h\colon [0,1] \rightarrow [0,1]$, the \emph{conversion function}, assuming for convenience $h(0)=0$,
and let $\mathsf{At}$ be a set of atoms. Formulae $\phi,\psi,\dots$ of
the \emph{non-expansive logic of `generally'}, or briefly $\Lgen$, are
given by the grammar
\begin{equation*}
  \phi, \psi ::= 0 \mid a \mid \lnot \phi \mid \phi \ominus c \mid \phi \sqcap \psi \mid \GENERALLY \phi\qquad(a\in\mathsf{At}, c\in[0,1]\cap\Rat)
\end{equation*}
The semantics is defined over \emph{probabilistic models}
$M=(X, \tau, \pi)$ consisting of a set~$X$ of \emph{states}, an
evaluation function $\pi\colon X \times \mathsf{At} \rightarrow [0,1]$
assigning fuzzy truth values to atoms at each state, and a transition
structure $\tau\colon X \rightarrow \mathcal{D}(X)$, where
$\mathcal{D}(X)$ is the set of discrete probability distributions
on~$X$. The truth degree $\llbracket \phi \rrbracket_M (x)\in[0,1]$ of
a formula~$\phi$ at a state $x \in X$ is defined recursively by
\begin{equation*}
  \llbracket 0 \rrbracket_M (x) = 0 \qquad \llbracket a \rrbracket_M (x) = \pi(x)(a) \qquad \llbracket \lnot \phi \rrbracket_M (x) = 1 - \llbracket \phi \rrbracket_M (x) 
\end{equation*}
\begin{equation*}
  \llbracket \phi \ominus c \rrbracket_M (x) = \max(0, \llbracket \phi \rrbracket_M (x) - c)
\end{equation*}
\begin{equation*}
	\llbracket \phi \sqcap \psi \rrbracket_M (x) = \min (\llbracket \phi \rrbracket_M (x), \llbracket \psi \rrbracket_M (x))
\end{equation*}
\begin{equation*}
  \llbracket \GENERALLY \phi \rrbracket_M (x) = \sup_{\alpha \in [0,1]}  \min (\alpha, h(\tau(x)(\{y \in X \mid \llbracket \phi \rrbracket_M (y) \geq \alpha\}) ))
\end{equation*}
As usual, we define disjunction~$\sqcup$ by
$\phi\sqcup\psi=\neg(\neg\phi\sqcap\neg\psi)$, so that
$\llbracket \phi \sqcup \psi \rrbracket_M (x) = \max (\llbracket \phi
\rrbracket_M (x), \llbracket \psi \rrbracket_M (x))$.  The conversion
function~$h$ acts as a fuzzy predicate indicating the degree to which
a probability is considered `high'.  \emph{Unless explicitly mentioned
  otherwise, we restrict ourselves to $h = \operatorname*{id}$.} Over
models in which we think of the transition function as expressing a
probabilistic relationship `associated with' between real-world
entities, we can, for instance, describe very professional football
players with an inclination to either playing unfairly or suffering
grave injuries by the formula
$(\mathsf{professional}\ominus
0.2)\sqcap\mathsf{football\_player}\sqcap
\GENERALLY\,(\mathsf{unfairness}\sqcup\mathsf{grave\_injury})$. Note
how the emphasis `very professional' is reflected by the shift
${\ominus}\, 0.2$, which implies that an even higher degree of
professionality is required to give the formula a high truth
degree. Over the full \L{}ukasiewicz base (i.e.~a more expressive
propositional base than in the above grammar), the logic of
\emph{generally} is decidable in \textsc{NExpTime}, while the best
known upper bound for the logic of \emph{probably} over the full
\L{}ukasiewicz base is \textsc{ExpSpace}~\cite{SchroderPattinson11}.

As mentioned earlier, the logic of \emph{generally} closely relates to
Sugeno integrals \cite{1975TheoryOF}: \sgnote{Could someone please
  write a short paragraph detailing the relation of L\'evy-Prokhorov
  and generally? While I understand the basics of it, I do not feel
  confident in expressing it concisely and precisely} The Sugeno
integral is a generalized notion of integral from fuzzy measure
theory. Let $(X, \Omega)$ be a measurable space,
$f \colon X \rightarrow [0,1]$ an $\Omega$-measurable function,
$A \subseteq X$ and $g$ be a \emph{fuzzy} or \emph{monotone measure},
i.e.\ a monotone function $g\colon\Omega\to[0,1]$ such that
$g(\emptyset)=0$\lsnote{This is in the definition of both fuzzy and
  monotone measure. I guess we don't necessarily have this}. Then the
\emph{Sugeno integral} over $A$ of $f$ with respect to $g$ is
\begin{equation*}\textstyle
	\dashint_A f(x) \circ g = \sup_{\alpha \in [0,1]} (\min (\alpha, g(A \cap f_\alpha)))
\end{equation*}
where $f_\alpha := \{x \mid f(x) \geq \alpha\}$. In particular,
\begin{equation*}\textstyle
	\dashint_X f(x) \circ g = \sup_{\alpha \in [0,1]} (\min (\alpha, g(f_\alpha))),
\end{equation*}
which is exactly the semantics of the `generally' modality when we take~$g$ to be the composite of the successor distribution of a state and the conversion function, and $f$ the evaluation map of the inner formula. 

\paragraph*{Quantitative Fuzzy $\ALC$} Despite its name, the logic
\emph{quantitative fuzzy $\ALC$}~\cite{SchroderPattinson11} is
actually rather similar to the logic of \emph{generally}, and in
particular is interpreted over the same type of probabilistic models
$M=(X,\tau,\pi)$. It features modalities~$\boldsymbol{M}_p$ for
$p \in [0,1]\cap\Rat$, read `with probability more than~$p$', in place
of the modality $\GENERALLY$. Again, we are specifically interested in
\emph{non-expansive quantitative fuzzy $\ALC$}, i.e.\ the variant of
quantitative fuzzy $\ALC$ that employs a non-expansive propositional
base. The syntax and semantics is defined in the same way as for the
logic of \emph{generally}, except that~$\GENERALLY$ is swapped out for
the modalities~$\boldsymbol{M}_p$, with the semantics defined by
\begin{equation*}\textstyle
  \llbracket \boldsymbol{M}_p \phi\rrbracket_M  (x) = \sup\{\alpha \mid  \tau(x)(\{y \in X\mid \llbracket\phi\rrbracket_M(y) \geq \alpha\}) > p\}
\end{equation*}
-- that is, $\boldsymbol{M}_p \phi$ picks the largest truth
degree~$\alpha$ such that satisfaction of~$\phi$ with truth degree at
least~$\alpha$ is ensured with probability more than~$p$. For
instance, under the same understanding of~$\tau$ as in the previous
example, the formula
$\mathsf{grave\_injury}\sqcap\boldsymbol{M}_{0.9}\,\mathsf{recovery}$
describes grave injuries that nevertheless have a chance of more than
$90\%$ for a successful recovery, where the term `recovery' is
understood in a vague sense as describing a more or less full
recovery.

For purposes of later complexity results, we measure formula size in
non-expansive quantitative fuzzy $\ALC$ in binary; that is, we count
the syntactic size of $\boldsymbol{M}_p$ as
$\lvert \boldsymbol{M}_p \rvert := \log a + \log b$ where $p = a/b$ is
an irreducible fraction.

\begin{remark}[L\'evy-Prokhorov distance]\label{rem:LP}
  Although we have motivated the modalities~$\GENERALLY$
  and~$\boldsymbol{M}_p$ by examples from knowledge representation,
  they do, much like the expectation modality \emph{probably} (which
  features both in the probabilistic
  $\mu$-calculus~\cite{MorganMcIver97,HuthKwiatkowska97} and in the
  characteristic modal logic of probabilistic transition
  systems~\cite{BreugelWorrell05}), equally relate to a view of
  probabilistic models as probabilistic transition systems that
  regards elements of models as system states. In particular, as
  mentioned in the introduction, it follows from recent results that
  the behavioural distance on finitely branching probabilistic
  transition systems induced by the notion of
  \emph{$\epsilon$-bisimulation}~\cite{DesharnaisEA08} is
  characterized by 
  the logic of `generally' in the sense that behavioural distance
  coincides with logical distance.

  In a bit more detail, one defines a notion of
  $\epsilon$-bisimulation on probabilistic systems such as labelled
  Markov chains in the same manner as standard notions of
  probabilistic bisimilarity, but allowing for a deviation of up
  to~$\epsilon$ between probabilities of sets of successors. This, in
  turn, gives rise to a notion of $\epsilon$-bisimilarity, and the
  induced \emph{$\epsilon$-bisimulation distance}, or just
  \emph{$\epsilon$-distance}, between states~$x,y$ is defined as the
  infimum over all~$\epsilon$ such that $x,y$ are
  $\epsilon$-bisimilar~\cite{DesharnaisEA08}. It has recently been
  shown~\cite{DesharnaisSokolova26} that $\epsilon$-distance coincides
  with a distance defined as a least fixpoint using the
  L\'evy-Prokhorov lifting of metrics to probability distributions,
  which is popular in statistics and machine learning due to its
  favourable stability properties. Specifically, given a metric space
  $(X,d)$, the associated L\'evy-Prokhorov distance $d_{LP}$ on the
  space $\Dist(X)$ of discrete probability distributions on~$X$ is
  given by
  \begin{equation*}
    d_{LP}(\mu,\nu)=\inf\{\epsilon\ge 0 \mid\forall A\subseteq X
    .\,\nu(A^d_\epsilon)\ge\mu(A)-\epsilon\}
  \end{equation*}
  where we write $A^d_\epsilon=\{y\in X\mid d(x,A)\le\epsilon\}$ for
  the $\epsilon$-neigh\-bour\-hood of~$A$ under~$d$ (where, as usual,
  $d(x,A)=\inf_{x\in A}d(x,y)$).  Additionally, it has been proved
  recently that the L\'evy-Prokhorov distance~$d_{LP}$ is induced by
  \emph{generally} in the sense of a generalized categorical
  \emph{Kantorovich lifting}~\cite{BaldanEA18,WildSchroder22} of
  metrics along functors; we give additional details in
  \Cref{rem:LP-Kant}. By general quantitative Hennessy-Milner
  theorems~\cite{KonigMikaMichalski18,WildSchroder22}, this implies
  that the modal logic of \emph{generally} characterizes
  $\epsilon$-distance in the sense that the induced logical distance
  is precisely $\epsilon$-distance.
\end{remark}

\section{Non-Expansive Fuzzy Coalgebraic Logic}\label{nonexpfuzlog}
We next introduce our unifying framework of \emph{non-expansive fuzzy
  coalgebraic logic}. This allows us to abstract from specific logics
and instead reason about whole classes of logics based on their
properties. We assume basic familiarity with category
theory~\cite{AdamekEA91}.\sgnote{@Lutz: It appears that with the new format, subparagraphs are not very well highlighted, causing the text to look weird. However, trying to make it bold or redefine them entirely causes acmart to spit out errors as we are not supposed to change this. Any suggestions on how to adjust this to be more readable?}

\paragraph*{Syntax} We parametrize the logic over a set~$\Lambda$ of modal operators. For readability, we restrict the
technical exposition to unary modal operators; the treatment of higher
finite arities requires no more than additional indexing. Indeed, we
will later sketch a treatment of propositional atoms as nullary modal
operators. The set $\mathcal{F}(\Lambda)$ of \emph{$\Lambda$-formulas}
$\phi,\psi,\dots$ is defined by the grammar
\begin{equation*}
  \mathcal{F}(\Lambda)\owns\phi, \psi ::= 0  \mid \lnot \phi \mid \phi \ominus c \mid \phi \sqcap \psi \mid \heartsuit \phi \qquad(c\in[0,1]\cap\Rat, \heartsuit\in\Lambda).
\end{equation*}
We write $\mathsf{S}(\phi)$ for the set of subformulas of $\phi$. More
generally, for a set $L$ of formulas, we write~$\mathsf{S}(L)$ for the
set of subformulas of formulas in~$L$. 
Moreover, we write $\mathsf{S}_0(\phi)$ for the set of
\emph{propositional} subformulas of $\phi$, i.e.~subformulas of~$\phi$
not in the scope of a modal operator, and again extend the notation
writing $S_0(L)$ for the set of all propositional subformulas of formulas
in a set~$L$.

\paragraph*{Semantics} We parametrize the semantics over several
components carrying the largest share of the generality of the
framework. First, we fix an endofunctor
$T\colon \SET \rightarrow \SET$ on the category $\SET$ of sets and
maps. We interpret the logic over \emph{$T$-coalgebras} or
\emph{$T$-models}, i.e.~pairs $M=(X,\xi)$ where~$X$ is a set of
\emph{states} and $\xi\colon X\to TX$ is a \emph{transition map}
specifying for each state $x\in X$ a structured collection $\xi(x)$ of
successors, with the notion of structure determined by~$T$. A basic
example is the covariant powerset functor $T=\Pow$; in this case,
$T$-coalgebras $\xi\colon X\to\Pow X$ assign to every state a
\emph{set} of successors, i.e.~a $T$-coalgebra is just a set~$X$
equipped with a binary successor relation, that is, a Kripke
frame. Our main examples will centrally involve the \emph{discrete
  distribution functor}~$\Dist$, which maps a set~$X$ to the set
$\Dist X$ of discrete probability distributions on~$X$, and a map
$f\colon X\to Y$ to the map $\Dist f\colon\Dist X\to\Dist Y$ that maps
a distribution~$\mu$ on~$X$ to the distribution $\Dist f(\mu)$ on~$Y$
given by $\Dist f(\mu)(A)=\mu(f^{-1}[A])$. Recall here that a
distribution~$\mu$ on~$X$ is \emph{discrete} if
$\sum_{x\in X}\mu(\{x\})=1$ (implying that $\mu(\{x\})=0$ for all but
countably many~$x$). For instance, take~$T$ to be the functor given on
sets by $TX=\Dist X\times[0,1]^\mathsf{At}$ where $\mathsf{At}$ is a fixed
set of propositional atoms as in \Cref{generallyintro}. Then
$T$-coalgebras are precisely probabilistic models in the sense defined
in \Cref{generallyintro}. 

A \emph{quantitative predicate lifting} for~$T$ is then a natural
transformation of type $[0,1]^-\to [0,1]^{T^{\op}}$, where we
generally write $[0,1]^X$ for the set of $[0,1]$-valued predicates, or
\emph{quantitative predicates}, on a set~$X$. Thus, a quantitative
predicate lifting turns quantitative predicates on a set~$X$
into quantitative predicates on the set~$TX$, subject to a naturality
condition. We then assign to every modal operator
$\heartsuit\in\Lambda$ a predicate lifting
$\llbracket \heartsuit \rrbracket$. The semantics of the logic is then
given by assigning truth values $\llbracket\phi\rrbracket_M(x)\in[0,1]$
to states $x\in X$ in $T$-coalgebras $M=(X,\xi)$, defined recursively by
the same clauses for propositional operators as for the logic of
\emph{generally} (\Cref{generallyintro}), and
\begin{equation*}
  \llbracket\heartsuit\phi\rrbracket_M(x)=\llbracket\heartsuit\rrbracket(\llbracket{\phi}\rrbracket_M)(\xi(x)),
\end{equation*}
exploiting that the truth values $\llbracket\phi\rrbracket_M(x)$
aggregate into a quantitative predicate $\llbracket\phi\rrbracket_M$
on~$X$, the \emph{extension} of~$\phi$. We refer to the entirety of
the above data as the \emph{logic}.

\paragraph*{Atoms} Before we go into the details of how the
probabilistic logics of \Cref{generallyintro} fit into this
framework, we discuss the treatment of propositional atoms as nullary
modalities. Specifically, let~$\mathcal L$ be a logic given by data as
above, and let~$\mathsf{At}$ be a set of \emph{atoms}. Then we write
$\mathcal{L} + \mathsf{At}$ for the logic determined by the functor
$T_{\mathsf{At}}$ given on sets~$X$ by
$T_{\mathsf{At}} X= TX \times [0,1]^{\mathsf{At}}$ and the set
$\Lambda_{\mathsf{At}}=\Lambda \cup \{a \mid a \in {\mathsf{At}}\}$ of
modal operators, with interpretations of modal operators given by
applying the original interpretation of $\heartsuit\in\Lambda$ to the
first component~$t$ of a pair $(t,g)\in T_{\mathsf{At}}X$, and for
$a\in\mathsf{At}$ by $\llbracket a \rrbracket(f)(t,g)=g(a))$. Thus,
$\llbracket a \rrbracket$ ignores its argument~$f$, which we therefore
omit, effectively making~$a$ a nullary modality. We will show that
our complexity criterion is stable under passing from~$\mathcal{L}$ to
$\mathcal{L}+\mathsf{At}$, so most of the time we will just elide
propositional atoms in the presentation.

\paragraph*{Examples} The functor determining probabilistic models
is just $\Dist_{\mathsf{At}}$. We interpret the modalities
$\GENERALLY$ and $\boldsymbol{M}_p$ over~$\Dist$ by 
\begin{align*}
  (\llbracket \GENERALLY \rrbracket_X (f)) \mu & = \sup_{\alpha \in [0,1]} \{ \min (\alpha, h(\mu(\{x \in X \mid f (x) \geq \alpha\}) )\}\\
  (\llbracket \boldsymbol{M}_p \rrbracket_X (f)) \mu & = \sup\{\alpha\in[0,1]\mid \mu(\{x\in X\mid f(x)\ge\alpha\})>p\}\\
  &\text{for $f \colon X \rightarrow [0,1]$, $\mu \in \mathcal{D}(X)$}.
\end{align*}
This recovers exactly the logic of \emph{generally} and quantitative
fuzzy $\ALC$, respectively, as recalled in
\Cref{generallyintro}. (For these logics, it is important to note
that atoms are present implicitly, although elided in the further
presentation, as otherwise the logics become trivial due to the fact
that all states in $\Dist$-coalgebras are behaviourally equivalent.)


Further, we subsume non-expansive fuzzy $\ALC$~\cite{ijcai2025p502}
under the framework as follows. Simplifying to the case where there is
only a single role (i.e.~fuzzy relation), we take~$\Lambda$ to consist
of a single modal operator~$\diamondsuit$; as~$T$, we take the
(covariant) fuzzy powerset~functor, defined on sets~$X$ by~
$TX = [0,1]^{X}$. The coalgebras of~$T$ are fuzzy relational
structures, in which every pair of states~$x,y$ is assigned a
transition degree from~$x$ to~$y$. We interpret~$\diamondsuit$ by the
predicate lifting
\begin{equation*}\textstyle
  (\llbracket \diamondsuit \rrbracket_X (\nu)) \mu = \sup_{x \in X} \min (\nu(x), \mu(x))
\end{equation*}
where $\nu\colon X \rightarrow [0,1]$ and $\mu \in TX$. Thus,
$\diamondsuit\phi$ designates the degree to which a state has a
successor satisfying~$\phi$.



\begin{remark}\label{rem:LP-Kant}
  We have now assembled the requisite notation to formulate the
  characterization of L\'evy-Prokhorov distance in terms of
  \emph{generally} as indicated in \Cref{rem:LP}. The general
  definition of the \emph{Kantorovich lifting} of a metric~$d$ on a
  set~$X$ to the set~$TX$ for a functor~$T$ is given for $a,b\in TX$
  by maximizing differences $\lambda(f)(a)-\lambda(f)(b)$ over all
  $\lambda\in\Lambda$ and all non-expansive functions
  $f\colon (X,d)\to[0,1]$; we write $(X,d)\to_1[0,1]$ for the space of
  all such non-expansive functions~\cite{BaldanEA18}. The recent
  characterization of L\'evy-Prokhorov distance $d_{LP}$ on $\Dist(X)$
  as a Kantorovich lifting for \emph{generally}, i.e.~for
  $\Lambda=\{\GENERALLY\}$~\cite{wild2025generalizedkantorovichrubinsteindualityhausdorff},
  thus means that
  \begin{equation*}
    d_{LP}(\mu,\nu)=\sup_{f\colon (X,d)\to_1 [0,1]}\GENERALLY(f)(\mu)-\GENERALLY(f)(\nu).
  \end{equation*}
\end{remark}\medskip

\subsubsection*{Tableau sequents} We introduce some key technical
notions regarding the labels of nodes in tableaux. We keep the notion
of tableau sequent, defined next, general enough to serve as the
syntactic core of both the one-step logic (\Cref{onesteplogics}) and
the tableau method for the full modal logic (\Cref{tableau}).

\begin{convention}
	Throughout, let $\triangleleft \in \{<, \leq\}$, $\triangleright \in \{>, \geq\}$ and $\bowtie \in \{<, \leq, >, \geq\}$. Furthermore let $\llparenthesis \in \{ ( , [\}$ and $\rrparenthesis \in \{ ) , ]\}$.
\end{convention}

\begin{definition}
	Let $L$ be a set of \emph{labels}.
	\begin{enumerate}[wide]
    \item  A \emph{tableau literal} over~$L$ is an expression of the form $\ell \tin I$ where $\ell \in L$ and $I \subseteq [0,1]$ is an interval (possibly empty).\pwnote{Suggestion for new notation}
		\item A \emph{tableau sequent} over $L$ is a set of tableau literals over~$L$.
		\item A tableau sequent~$\Gamma$ over~$L$ is \emph{clean/complete/exact} if for  each $\ell \in L$, it contains at most/at least/exactly one literal of the shape $\ell \tin I$.
		\item For a clean tableau sequent~$\Gamma$ over~$L$, we write $\Gamma(l)=I$ for the unique~$I$ in the tableau literal $(\ell \tin I) \in \Gamma$ if such a tableau literal exists (otherwise, $\Gamma(l)$ is undefined).
		\item For  exact tableau sequents $\Gamma$, $\Gamma'$ over $L$, we say that $\Gamma'$ is a \emph{sub-sequent} of $\Gamma$ if for all $\ell \in L$ we have $\Gamma'(l) \subseteq \Gamma(l)$.
	\end{enumerate}
\end{definition}

\begin{definition}
	\begin{enumerate}[wide]
		\item Let $\Gamma$ be a tableau sequent over a set $L$ of formulas. A state $x$ in a coalgebra $M$ \emph{satisfies} $\Gamma$ if for every literal $(\phi \tin I) \in \Gamma$ we have $\llbracket \phi \rrbracket_M (x) \in I$. We then write $M, x \models \Gamma$ or $M \models \Gamma$.
		\item A tableau sequent $\Gamma$ over a set $L$ of formulas is \emph{satisfiable} if there exists a $T$-coalgebra $M=(X, \xi)$ and a state $x \in X$ that satisfies $\Gamma$.
	\end{enumerate}
\end{definition}

\begin{remark}
	Usually, we care more about \emph{validity} rather than satisfiability. That is, we want to prove that a given tableau literal is satisfied in all states of all models. As usual, validity can be reduced to satisfiability, in the following way: To check that a tableau literal $\phi \tin \llparenthesis_1 a,b \rrparenthesis_2$ is valid, check that both $\phi \tin [0,a \rrparenthesis_1$ and $\phi \tin \llparenthesis_2 b, 1]$ are  unsatisfiable. 
\end{remark}

\begin{example}\label{example:sequents}
	\begin{enumerate}[wide]
		\item Let $p$ be a modality emulating an atom. Then the tableau sequent $(\lnot p) \sqcup p \tin [0, 0.5)$ is obviously not satisfiable, which means its negation $(\lnot p) \sqcup p \tin [0.5, 1]$ is valid, i.e. it is satisfied by all models and states. For clarity, we omitted the inner formula for the modality $p$.\sgnote{Is it okay to state it like this? I left out both what the tableau sequent is defined over as it is obvious and why we can ignore the inner formula for $p$. I could also change the $p$ to just any $\phi$, but I thought a very simplistic example would be nice as well.}
		\item Recall the formulas $\mathsf{grave\_injury}$ and $\boldsymbol{M}_{0.9}\,\mathsf{recovery}$ from non-expansive quantitative fuzzy $\ALC$. Then the tableau sequent $\Gamma := \{{\mathsf{grave\_injury} \tin [0.6,1]}, {\boldsymbol{M}_{0.9}\,\mathsf{recovery} \tin [0.8,1]}\}$ is satisfiable, meaning that there can be grave injuries that nevertheless have a $90\%$ chance for the patient to make a good recovery. 
	\end{enumerate}
\end{example}

\begin{definition}
	\begin{enumerate}[wide]
		\item For a formula $\phi$, we define the \emph{syntactic size} $\lvert \phi \rvert$ recursively:
		\begin{gather*}
			\lvert 0 \rvert = 1 \qquad \lvert \lnot \psi \rvert = \lvert \psi \rvert + 1 \qquad \lvert \psi_1 \sqcap \psi_2 \rvert = \lvert \psi_1 \rvert + \lvert \psi_2 \rvert + 1\\
			\lvert \psi \ominus \frac{a}{b} \rvert = \lvert \psi \rvert + \log(a) + \log(b) + 1 \qquad \lvert \heartsuit \psi \rvert = \lvert \psi \rvert + \lvert \heartsuit \rvert
		\end{gather*}
		where $\psi, \psi_1, \psi_2$ are formulas and $\frac{a}{b}$ is an irreducible fraction and $\lvert \heartsuit \rvert$ is specified by the logic (unless otherwise noted we put $\lvert \heartsuit \rvert = 1$).
		\item For a tableau literal $\phi \tin \llparenthesis \frac{a_1}{b_1}, \frac{a_2}{b_2} \rrparenthesis$ we define its syntactic size as $\lvert \phi \tin \llparenthesis \frac{a_1}{b_1}, \frac{a_2}{b_2} \rrparenthesis \rvert = \lvert \phi \rvert + \log(a_1) + \log(b_1) + \log(a_2) + \log(b_2) + 3$.
		\item For a tableau sequent $\Gamma$ over formulas, we then define its \emph{combined syntactic size} as the sum of the sizes of its literals.
	\end{enumerate}
\end{definition}

\section{One-Step Logics}\label{onesteplogics}
A core theme of coalgebraic logic at large is to reduce properties of the full modal logic to properties of a much simpler one-step logic, in which dedicated \emph{variables}, i.e.~placeholders for formulae, each appear under exactly one modality (in particular, the one-step logic precludes nesting of modalities). We next introduce the relevant notion of one-step logic for our present setup; we will later  reduce the satisfiability problem of the full logic  to that of the one-step logic.

\begin{definition}
  Let $\mathcal{L}$ be a logic with endofunctor $T$ and modal operators $\Lambda$.
  \begin{enumerate}[wide]
  \item For a set $V$, write $\Lambda(V) := \{\heartsuit v \mid v \in V, \heartsuit \in \Lambda\}$.
  \item For a set $V$ of \emph{variables}, the set $\operatorname{Prop}(\Lambda(V))$ of \emph{one-step formulas}  over $\Lambda$ is defined by the grammar
    \begin{equation*}
      \operatorname{Prop}(\Lambda(V))\owns\phi, \psi ::= 0 \mid \lnot \phi \mid \phi \ominus c \mid \phi \sqcap \psi \mid \heartsuit v
    \end{equation*}
    where $v \in V$ is a variable, $c$ is a constant, and $\heartsuit \in \Lambda$.
  \item A \emph{one-step $T$-model} for $\mathcal{L}$ and variables $V$ is a tuple $M=(X, \tau, t)$ consisting of a set $X$, an element $t \in TX$, and a \emph{valuation} $\tau\colon V \rightarrow (X \rightarrow [0,1])$.
		\item The \emph{one-step extension} $\llbracket - \rrbracket_M\colon \operatorname{Prop}(\Lambda(V)) \rightarrow [0,1]$ is defined recursively by:
		\begin{equation*}
			\llbracket 0 \rrbracket_M = 0 \qquad \llbracket \lnot \phi \rrbracket_M = 1 - \llbracket \phi \rrbracket_M \qquad \llbracket \phi \ominus c \rrbracket_M = \max(0, \llbracket \phi \rrbracket_M - c)
		\end{equation*}
		\begin{equation*}
			  \llbracket \phi \sqcap \psi \rrbracket_M = \min (\llbracket \phi \rrbracket_M, \llbracket \psi \rrbracket_M) \qquad \llbracket \heartsuit v \rrbracket_M = \llbracket \heartsuit \rrbracket_X (\tau(v)) (t)
		\end{equation*}
		\item A tableau sequent $\Gamma$ over one-step formulas $L \subseteq \operatorname{Prop}(\Lambda(V))$ is \emph{satisfiable} if there exists a one-step $T$-model $M$ such that we have $\llbracket \phi \rrbracket_M \in I$ for each literal $(\phi \tin I) \in \Gamma$. We then write $M \models \Gamma$.
	\end{enumerate}
\end{definition}
Note that we will mostly omit the endofunctor $T$ when talking about one-step $T$-models, i.e. we will refer to just one-step models whenever $T$ is clear from the context.

\begin{remark}
	The syntactic sizes for formulae, tableau literals, and sequents over one-step formulas are completely analogous to the full logics, with the adjustment that the syntactic size of a variable $v$ is evaluated as just $1$.
\end{remark}

\begin{definition}
	Let $\Gamma$ be an exact tableau sequent over a set $L$ of formulas.
	Suppose we can equivalently describe $\Gamma$ as a set of variables $V$, a map $\Gamma^\flat\colon V \rightarrow \mathcal{F}(\Lambda)$, and an exact tableau sequent $\Gamma^\sharp$ over a subset $U \subseteq \operatorname{Prop}(\Lambda(V))$ such that each $v \in V$ occurs exactly once in $U$ and replacing each $v$ by $\Gamma^\flat (v)$ in $\Gamma^\sharp$ gives us back $\Gamma$. In that case, we call such a description a \emph{top-level decomposition}.
\end{definition}

\begin{remark}
	One can always describe an exact tableau sequent $\Gamma$ over a set of formulas $L$ as a top-level decomposition by replacing each formula behind the first layer of modalities in $L$ with a fresh variable $v$ and recording the formula in $\Gamma^\flat$.
	It is also easy to see that such a decomposition is unique up to bijections of $V$.
\end{remark}

\begin{lemma}\label{localreduction}
	An exact tableau sequent $\Gamma$ over formulas $L \subseteq \mathcal{F}(\Lambda)$ is satisfiable in a logic $\mathcal{L}$ iff its top-level decomposition $(V, \Gamma^\flat, \Gamma^\sharp)$ has the following property: $\Gamma^\sharp$ is satisfiable in a one-step model\lsnote{@Stefan: one-step model} $M=(X, \tau, t)$ where for each $x \in X$ the tableau sequent $\Gamma_x = \{\Gamma^\flat (v) \tin [\tau (v) (x), \tau (v) (x)]\mid v \in V\}$ is satisfiable.\lsnote{@Stefan: Wie im Seminar besprochen kann das so noch nicht stimmen. Korrekt wird's vermutlich, wenn man verlangt, dass die Intervalle in $\Gamma_x$ einpunktig sind}\sgnote{@Lutz: Sollte nun gefixt sein.}
\end{lemma}
\begin{proof}[Proof (sketch)]
	One can extract a one-step model from a full coalgebra by just taking the state satisfying $\Gamma$ and its successor structure. This then immediately also gives us that each $\Gamma_x$ is satisfiable. On the other hand, one obtains a full coalgebra for $\Gamma$ by combining the models satisfying $\Gamma_x$ for each $x \in X$ and introducing a new state that has the successor structure as described in the one-step model.
\end{proof}

\section{A Propositional Tableau Calculus for Propositional Satisfiability}\label{tableau}
We next introduce a tableau calculus to decide propositional
satisfiability of tableau sequents. Effectively, this will allow us to
eliminate propositional operators from a target sequent, reducing to
tableau sequents over modalized formulae.

\begin{definition}
	Let $I \subseteq [0,1]$ be an interval. We define:
	\begin{equation*}
		1-I := \{1-x \mid x \in I\} \qquad
		I+c := \{x+c \mid x \in I, x+c \leq 1\}
	\end{equation*}
\end{definition}

\begin{table*}
	\centering
	\large
	\begin{tabular}{c}
		\toprule
		Tableau Rules \\
		\midrule
		\vspace{1em}
		$
		(\text{Ax}) \  \frac{\Gamma, \phi \tin \emptyset}{\bot} \qquad (\text{Ax 0}) \  \frac{\Gamma, 0 \tin I}{\bot} \quad (\text{if } 0 \notin I) \qquad (\cap) \frac{\Gamma, \phi \tin I, \phi \tin J}{\Gamma, \phi \tin I \cap J} \qquad (\lnot) \  \frac{\Gamma, \lnot \phi \tin I}{\Gamma, \phi \tin (1-I)}
		$
		\\
		\vspace{1em}
		$
		(\ominus) \   \frac{\Gamma, \phi \ominus c \tin I}{\Gamma, \phi \tin I+c} \quad (\text{if } 0 \notin I) \qquad (\ominus') \   \frac{\Gamma, \phi \ominus c \tin [0 ,b \rrparenthesis}{\Gamma, \phi \tin [ 0, b + c \rrparenthesis \cap [0,1]} \qquad (\sqcap) \   \frac{\Gamma, \phi \sqcap \psi \tin \llparenthesis a,b \rrparenthesis}{\Gamma, \phi \tin \llparenthesis a,b \rrparenthesis, \psi \tin \llparenthesis a,1 ] \quad \Gamma, \phi \tin \llparenthesis a,1 ], \psi \tin \llparenthesis a,b \rrparenthesis}
		$
		\\
		\bottomrule
	\end{tabular}
	\caption{Propositional Tableau Calculus for a tableau sequent}
	\label{tableau:prop}
\end{table*}

\begin{definition}
	 A \emph{propositional tableau} for a tableau sequent $\Gamma$ over one-step formulas $L$ is a list of labelled nodes $x_1, \ldots, x_n$ such that $x_1$ has the label $\Gamma$, any consecutive nodes $x_i, x_{i+1}$ have the premise and a conclusion, respectively, of one of the tableau rules of \Cref{tableau:prop} as their labels, and no propositional tableau rule has a premise matching the label of $x_n$.
\end{definition}

\begin{example}
	We construct some full tableaux for tableau sequents, where each branch of a full tableau is a propositional tableau.
	\begin{enumerate}[wide]
		\item We start by constructing a full tableau for the first example in \ref{example:sequents}.
		Note, that $(\lnot p) \sqcup p \tin [0, 0.5) = \lnot (p \sqcap (\lnot p)) \tin [0, 0.5)$.
		{
			\setlength{\jot}{0pt}
			\begin{gather*}
				\cfrac{\lnot (p \sqcap (\lnot p)) \tin [0, 0.5)}{}(\lnot)\\
				\cfrac{p \sqcap (\lnot p) \tin (0.5, 1]}{}(\sqcap)\\
				\cfrac{p \tin (0.5, 1], \lnot p \tin (0.5, 1]}{}(\lnot) \hspace*{6mm} \cfrac{p \tin (0.5, 1], \lnot p \tin (0.5, 1]}{}(\lnot)\\
				\cfrac{p \tin (0.5, 1], p \tin [0, 0.5)}{}(\cap )\hspace*{6mm}\cfrac{p \tin (0.5, 1], p \tin [0, 0.5)}{}(\cap)\\
				\cfrac{p \tin \emptyset}{\bot}(\text{Ax})\hspace*{30mm}\cfrac{p \tin \emptyset}{\bot}(\text{Ax})
			\end{gather*}
		}
		\item We continue with an example that highlights how the constant shift rules $(\ominus)$ and $(\ominus)'$ work.
		{
			\setlength{\jot}{-6pt}
			\begin{gather*}
				\cfrac{
					\begin{array}{l}
						\lnot (p \ominus 0.2) \sqcap \lnot (0 \ominus 0.1) \tin [0.4, 0.9],\\
						p \tin [0,0.2]
					\end{array}
				}{}(\sqcap)\\[2mm]
				\cfrac{
					\begin{array}{l}
						\lnot (p \ominus 0.2) \tin [0.4,0.9],\\
						\lnot (0 \ominus 0.1) \tin [0.4, 1],\\
						p \tin [0,0.2]
					\end{array}
				}{}(\lnot)
				\hspace*{6mm}
				\cfrac{
					\begin{array}{l}
						\lnot (p \ominus 0.2) \tin [0.4,1],\\
						\lnot (0 \ominus 0.1) \tin [0.4, 0.9],\\
						p \tin [0,0.2]
					\end{array}
				}{}(\lnot)\\[2mm]			
				\cfrac{
					\begin{array}{l}
						(p \ominus 0.2) \tin [0.1,0.6],\\
						(0 \ominus 0.1) \tin [0, 0.6],\\
						p \tin [0,0.2]
					\end{array}
				}{}(\ominus)
				\hspace*{6mm}
				\cfrac{
					\begin{array}{l}
						(p \ominus 0.2) \tin [0,0.6],\\
						(0 \ominus 0.1) \tin [0.1, 0.6],\\
						p \tin [0,0.2]
					\end{array}
				}{}(\ominus')\\[2mm]			
				\cfrac{
					\begin{array}{l}
						p \tin [0.3,0.8],\\
						(0 \ominus 0.1) \tin [0, 0.6],\\
						p \tin [0,0.2]
					\end{array}
				}{}(\ominus')
				\hspace*{6mm}
				\cfrac{
					\begin{array}{l}
						p \tin [0,0.8],\\
						(0 \ominus 0.1) \tin [0.1, 0.6],\\
						p \tin [0,0.2]
					\end{array}
				}{}(\ominus)\\[2mm]			
				\cfrac{
					\begin{array}{l}
						p \tin [0.3,0.8],\\
						0 \tin [0, 0.7],\\
						p \tin [0,0.2]
					\end{array}
				}{}(\cap)
				\hspace*{6mm}
				\cfrac{
					\begin{array}{l}
						p \tin [0,0.8],\\
						0 \tin [0.2, 0.8],\\
						p \tin [0,0.2]
					\end{array}
				}{\bot}(\text{Ax 0})\\[2mm]
				\cfrac{
					\begin{array}{l}
						p \tin \emptyset,\\
						0 \tin [0, 0.7]
					\end{array}
				}{\bot}(\text{Ax})
				\hspace*{35mm}
			\end{gather*}
		}
		\sgnote{Same issue; Currently this is just quietly throwing away the fact that technically we would have to write $p(0)$}
	\end{enumerate}
\end{example}

\begin{definition}
	\begin{enumerate}[wide]
		\item A propositional tableau $G=(x_1, \ldots, x_n)$ is \emph{open} if the label $Y$ of $x_n$ is not $\bot$. We then write $\Gamma_G$ for the tableau sequent $\Gamma_G := \{\phi \tin I \mid \phi \neq 0, (\phi \tin I) \in Y\}$.
		\item A tableau sequent $\Gamma$ over formulas is \emph{propositionally satisfiable} if there exists an open propositional tableau $G=(x_1, \ldots, x_n)$ for its top-level decomposition $\Gamma^\sharp$.
	\end{enumerate}
\end{definition}

\begin{lemma}\label{tableau:correctness}
	Let $\Gamma$ be a tableau sequent over a set $L$ of one-step formulas. Let $G$ be an open propositional tableau for $\Gamma$. Then $\Gamma_G$ is an exact tableau sequent over a set of formulas of the form $\heartsuit v \in \mathsf{S}_0(L)$.
	Furthermore, $\Gamma$ is satisfiable if and only if there exists an open propositional tableau $G$ for $\Gamma$ and $\Gamma_G$ is satisfiable.
\end{lemma}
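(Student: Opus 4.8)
The statement has two parts — a structural description of $\Gamma_G$ and the satisfiability equivalence — and I would prove both by reading off the propositional tableau rules. For the structural part, first observe by induction along the list $x_1,\dots,x_n$ that every formula occurring in a literal of any label lies in $\mathsf{S}_0(L)$: this holds at $x_1$ (label $\Gamma$ over $L$), and each of $(\lnot),(\ominus),(\ominus'),(\sqcap),(\cap)$ only introduces propositional subformulas of formulas already present in its premise, while $(\text{Ax})$ and $(\text{Ax 0})$ produce $\bot$, which cannot be the label of any node of an \emph{open} tableau (no rule has premise $\bot$, so a $\bot$-label would have to be the last one, contradicting openness). It then remains to analyse the terminal label $Y$ of $x_n$: since no rule applies, $Y$ contains no literal $\phi\in\emptyset$ (else $(\text{Ax})$), no literal $0\in I$ with $0\notin I$ (else $(\text{Ax 0})$), no two literals about the same formula (else $(\cap)$), and no literal whose formula is $\lnot\phi$, $\phi\ominus c$, or $\phi\sqcap\psi$. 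For the last point one checks that the rules are exhaustive: every non-empty interval either contains $0$, forcing the shape $[0,b\rrparenthesis$ (handled by $(\ominus')$), or does not contain $0$ (handled by $(\ominus)$), and analogously for $(\sqcap)$, empty intervals being caught by $(\text{Ax})$. Hence every literal of $Y$ is of the form $\heartsuit v\in I$ with $I\neq\emptyset$, or $0\in I$ with $0\in I$; deleting the $0$-literals leaves $\Gamma_G$, which is thus an exact tableau sequent over $\{\heartsuit v\mid(\heartsuit v\in I)\in\Gamma_G\}\subseteq\mathsf{S}_0(L)$.

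For the equivalence I would isolate the key property that each rule is \emph{sound and invertible with respect to a fixed one-step model}: for any one-step model $M$, $M$ satisfies the premise iff $M$ satisfies at least one conclusion. For $(\cap),(\lnot),(\ominus),(\ominus')$ this is a direct calculation with the semantic clauses and the interval operations $1-I$ and $I+c$ — in the $(\ominus)$ cases using the side condition $0\notin I$, resp.\ the shape $I=[0,b\rrparenthesis$, to invert $x\mapsto\max(0,x-c)$ — and for $(\sqcap)$ it is the unfolding of $\min(\llbracket\phi\rrbracket_M,\llbracket\psi\rrbracket_M)\in\llparenthesis a,b\rrparenthesis$ into the disjunction of the two displayed branches, using $\min(x,y)\triangleright a\iff x\triangleright a\wedge y\triangleright a$ and $\min(x,y)\triangleleft b\iff x\triangleleft b\vee y\triangleleft b$; the premises of $(\text{Ax})$ and $(\text{Ax 0})$ are satisfied by no model. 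With this in hand, ``$\Rightarrow$'' goes as follows: given $M\models\Gamma$, build a tableau by repeatedly applying some applicable rule (one always exists unless the sequent is terminal, and if $M$ satisfies the sequent it cannot be $(\text{Ax})/(\text{Ax 0})$), choosing for $(\sqcap)$ a conclusion that $M$ satisfies; every label then stays $M$-satisfiable, so no label is $\bot$, i.e.\ the resulting tableau $G$ is open, and its terminal label $Y$ satisfies $M\models Y$, whence $M\models\Gamma_G$ since $\Gamma_G\subseteq Y$. For this I also need termination of the rule-application process, which I would prove by a well-founded measure that strictly decreases at each step — e.g.\ the multiset, over the literals of the current sequent, of the numbers of propositional operators in their formulas, under the multiset ordering: $(\lnot),(\ominus),(\ominus')$ replace a formula by a strictly simpler one, $(\sqcap)$ by two strictly simpler ones, and $(\cap)$ deletes a literal. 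Conversely, for ``$\Leftarrow$'': given an open $G$ and $M\models\Gamma_G$, note $M\models Y$ too, since the only literals of $Y$ missing from $\Gamma_G$ are of the form $0\in I$ with $0\in I$, and $\llbracket 0\rrbracket_M=0$; then, walking up the list $x_n,\dots,x_1$ and applying invertibility at each step (no step uses $(\text{Ax})/(\text{Ax 0})$ as $G$ is open), we obtain $M\models\Gamma$.

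The substantive work is concentrated in the sound-and-invertible lemma: getting the interval arithmetic for $(\ominus)$ and $(\ominus')$ exactly right, correctly matching the two branches of $(\sqcap)$ to the $\min$ semantics, and verifying that the rule set is exhaustive enough that every non-terminal sequent indeed has an applicable rule (so that the structural analysis of $Y$ and the termination-to-a-terminal-label step are both valid). None of these points is deep, but this is where all the case distinctions reside; the remainder of the argument is just the bookkeeping of propagating a single model up and down the tableau list.
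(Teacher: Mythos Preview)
Your proposal is correct and follows essentially the same approach as the paper: both arguments hinge on the local sound-and-invertible property of each rule with respect to a fixed one-step model, verified case by case, and then propagate a single model along the tableau list. You are in fact more thorough than the paper, which leaves implicit the termination argument, the observation that the deleted $0$-literals must be restored when passing from $\Gamma_G$ to $Y$ in the ``$\Leftarrow$'' direction, and the check that no intermediate label of an open tableau can be~$\bot$.
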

\begin{proof}[Proof (sketch)]
	The first statement is trivially true, as otherwise there would still be a rule applicable to $\Gamma_G$. The second statement works as usual by investigating each tableau rule of \Cref{tableau:prop} and the satisfiability of the premise and its conclusions, i.e. a premise is satisfiable if and only if one of its conclusions is satisfiable.
\end{proof}

\begin{lemma}\label{lemma:expansionBound}
	Let $\Gamma$ be a tableau sequent over a set of formulas $L$. Then the problem of deciding if $\Gamma$ is propositionally satisfiable is in $\NP$ (with respect to the syntactic size of formulas in $L$).
\end{lemma}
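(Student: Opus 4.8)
The plan is to describe a nondeterministic polynomial-time procedure. On input $\Gamma$ (over $L\subseteq\mathcal F(\Lambda)$), one first computes the top-level decomposition $(V,\Gamma^\flat,\Gamma^\sharp)$, which is routine and polynomial; note that the combined syntactic size of $\Gamma^\sharp$ is linearly bounded by that of $\Gamma$. The machine then \emph{guesses} a propositional tableau $G=(x_1,\dots,x_n)$ for $\Gamma^\sharp$, i.e.\ a finite sequence of tableau sequents, and \emph{accepts} iff the label of $x_1$ is $\Gamma^\sharp$, every consecutive pair $(x_i,x_{i+1})$ is the premise and one of the conclusions of a rule from \autoref{tab:tableau}, no rule has a premise matching $x_n$, and the label of $x_n$ is not $\bot$. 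Every rule acts on one designated literal, and its conclusions are obtained by elementary operations on rational intervals (intersection, $I\mapsto 1-I$, $I\mapsto I+c$, intersecting with $[0,1]$, and comparisons of endpoints), so each of these checks runs in time polynomial in the size of the sequents involved. Hence it suffices to bound the size of the certificate $G$ polynomially in $\lvert\Gamma\rvert$.

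For the length bound I would attach to each tableau sequent $\Delta$ the weight $\mu(\Delta)=\sum_{(\phi\in I)\in\Delta}(1+2\,\mathsf{pc}(\phi))$, where $\mathsf{pc}(\phi)$ is the number of occurrences of propositional operators in $\phi$ that are not within the scope of a modality (so $\mathsf{pc}(0)=\mathsf{pc}(\heartsuit v)=0$). A case distinction over the rules shows that $\mu$ strictly decreases under every non-axiom rule: $(\lnot)$, $(\ominus)$ and $(\ominus')$ replace a literal whose formula has $k+1$ such operators by one whose formula has $k$, decreasing $\mu$ by $2$; $(\cap)$ replaces two literals over the same $\phi$ by one, decreasing $\mu$ by $1+2\,\mathsf{pc}(\phi)\ge 1$; and $(\sqcap)$ replaces $\phi\sqcap\psi\in I$, contributing $3+2\,\mathsf{pc}(\phi)+2\,\mathsf{pc}(\psi)$, by the two literals over $\phi$ and $\psi$, together contributing $2+2\,\mathsf{pc}(\phi)+2\,\mathsf{pc}(\psi)$, a net decrease of $1$. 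Since an axiom step can only occur last and $\mu\ge 0$ throughout, every propositional tableau for $\Gamma^\sharp$ has at most $\mu(\Gamma^\sharp)+2$ nodes, and $\mu(\Gamma^\sharp)\le 3\sum_{(\phi\in I)\in\Gamma^\sharp}\lvert\phi\rvert$ is linear in $\lvert\Gamma\rvert$; in particular $n$ is polynomially bounded.

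It remains to see that the individual sequents stay small. The number of literals in any $x_i$ is at most $\mu(x_i)\le\mu(\Gamma^\sharp)$, hence polynomial. Every interval endpoint occurring in $G$ arises from an endpoint of $\Gamma^\sharp$ by repeatedly applying $q\mapsto 1-q$ and adding shift constants $c$ appearing in $\Gamma^\sharp$ (plus copying endpoints of earlier intervals in $(\cap)$); written as a fraction, its numerator and denominator are products and sums of numbers already present in $\Gamma^\sharp$, hence of bit length $O(\lvert\Gamma\rvert)$. Thus each sequent, and the whole certificate $G$, has size polynomial in $\lvert\Gamma\rvert$, and the verification above runs in polynomial time, placing propositional satisfiability in $\NP$.

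The only genuinely delicate point is the choice of the weight $\mu$: a plain count of propositional operators fails because $(\sqcap)$ increases the number of literals, so the weight must charge every literal a positive amount while still strictly dominating the splitting of a conjunction — which is exactly what the additive $1$ together with the coefficient $2$ in front of $\mathsf{pc}$ ensures. Controlling the bit sizes of the rational endpoints under iterated $(\ominus)$-steps is the remaining, purely bookkeeping, obstacle.
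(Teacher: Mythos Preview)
Your argument is correct and follows the same approach as the paper: nondeterministically guess the sequence of rule applications (and branch choices at $(\sqcap)$) and verify in polynomial time. The paper's own proof is a one-liner that merely appeals to a polynomial bound on the number of subformulas, whereas you supply an explicit termination measure~$\mu$ and additionally control the bit sizes of the rational interval endpoints under iterated $(\lnot)$/$(\ominus)$ steps; this extra care is warranted but does not constitute a different method.
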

\begin{proof}
	The fact that this is decidable in $\NP$ is clear, as we can guess which rule to apply and which branch to choose whenever we branch with the $(\sqcap)$ rule, and each formula has at most $O(n^2)$ subformulas, where $n$ is the size of the formula.
\end{proof}

\section{Polynomially Space-Bounded Logics}\label{polspacebound}
Finally, we introduce conditions under which the satisfiability problem of a non-expansive fuzzy logic remains in $\PSPACE$, and give an algorithm that decides satisfiability for such a logic in polynomial amounts of space. In essence, these properties describe a modal tableau rule that has favourable computational properties, and the algorithm uses this modal tableau rule, after eliminating propositional operators, to reduce the problem recursively until satisfiability coincides with propositional satisfiability, i.e. until there are no more modal operators. We illustrate these properties by showing them for non-expansive fuzzy $\ALC$, where the modal tableau rule we will introduce will be analogous to the one from \cite{ijcai2025p502}.
Finally, we also prove that if one of the properties holds for a logic, then it also holds for the same logic, but with atoms added. Essentially, this means that the existence of atoms can safely be ignored going forward, and modal tableau rules only have to be constructed for the actual, non-atomic modalities of a logic.

\begin{definition}
  A logic $\mathcal{L}$ is \emph{one-step exponentially bounded} if there is an exponential function $f_\mathcal{L}: \mathbb{N} \rightarrow \mathbb{N}$, such that a tableau sequent~$\Gamma$ over a set $L \subseteq \Lambda(V)$ of one-step formulas of size
  $\lvert L \rvert = n$ is satisfiable iff it is satisfiable in a one-step model $(X, \tau, t)$ of size $\lvert X \rvert \le f_\mathcal{L}(n)$ (and then w.l.o.g.\ $\lvert X \rvert = f_\mathcal{L}(n)$).\lsnote{Statt $\overline{n}$ mit expliziter Funktion $f(n)$ arbeiten}\sgnote{Ist hier als auch in der exponentially branching eigenschaft jetzt eine exponentialfunktion; sollte im rest auch angepasst sein}
\end{definition}

\begin{example}\label{example:notEverythingExpBounded}
	Not every logic is one-step exponentially bounded: Consider the logic where $T = \mathcal{P}$ and we have one modality $\heartsuit$ with predicate lifting $\llbracket \heartsuit \rrbracket\colon [0,1]^- \Rightarrow [0,1]^{\mathcal{P}^{\op} (-)}$ defined by $\llbracket \heartsuit \rrbracket_X (f) (U) = \sup_{x \in U, f(x) \neq 1} f(x)$. Now, let $L = \{\heartsuit v\}$, that is, only the formula taking the supremum over all successor degrees smaller than 1, and let $\Gamma(\heartsuit v) \mapsto [1,1]$ be an exact tableau sequent. Then, clearly, this is never satisfiable with only finitely many states $X$, but is satisfiable with infinitely many states $X$.
\end{example}

\begin{convention}
	For $n \in \mathbb{N}$, we will write $X_n := \{x_1, \ldots, x_n\}$ for a generic set containing $n$ elements.
\end{convention}

\begin{example}
	Non-expansive fuzzy $\ALC$ is one-step exponentially bounded: Given a satisfiable tableau sequent $\Gamma$ over one-step formulas $L \subseteq \Lambda(V)$ with $\lvert L \rvert = n$, we construct a one-step model with only $n$ states in the following way:
	First, without loss of generality, assume $V=\{v_1, \ldots, v_n\}$. We take a one-step model $M=(X, \tau, t)$ satisfying $\Gamma$ and define $M' = (X_n, \tau', t')$ by $t' (x_i) = \llbracket \diamondsuit v_i \rrbracket_M$, $\tau' (x_i) (v_i) = \llbracket \diamondsuit v_i \rrbracket_M$, and $\tau' (x_i) (v_j) = 0$ for $i \neq j$. This now trivially satisfies the tableau sequent $\Gamma$, as we have $\llbracket \diamondsuit v \rrbracket_{M'} = \llbracket \diamondsuit v \rrbracket_M$ for all $v \in V$.
\end{example}

\begin{definition}
	\begin{enumerate}[wide]
		\item We say that a pair $(X, \tau \colon X \times V \rightarrow [0,1])$ \emph{realizes} an exact tableau sequent $\Gamma$ over $V$ if there is $x \in X$ such that for every $(v \tin I) \in \Gamma$ we have $\tau(x,v) \in I$.
		\item Similarly, if we have a one-step model $M=(X,\tau \colon X \times V \rightarrow [0,1], t)$, we say it \emph{realizes} $\Gamma$ if the pair $(X, \tau)$ realizes $\Gamma$.
		\item We say a pair $(X, \tau \colon X \times V \rightarrow [0,1])$ \emph{strictly realizes} a set $S$ of exact tableau sequents over $V$ if for each $\Gamma \in S$, there is a unique $x \in X$ that realizes $\Gamma$ and each $x \in X$ realizes at least one $\Gamma \in S$.
	\end{enumerate}
\end{definition}

\begin{definition}
	\begin{enumerate}[wide]
		\item Let $\Gamma$ be an exact tableau sequent over one-step formulae $L \subseteq \Lambda(V)$. Then $\frac{\Gamma}{Q_1 \mid\quad \ldots \quad\mid Q_m}$ is a \emph{modal tableau rule} for~$\Gamma$, where $Q_1, \ldots Q_m$ are sets of exact tableau sequents over~$V$, if $\Gamma$ is satisfied in a one-step model $M$ only if for some $i\in\{1,\dots,m\}$, all tableau sequents of $Q_i$ are realized in $M$, and conversely, for each pair $(X, \tau)$ that realizes all tableau sequents for at least one~$Q_i$, we can find a one-step model $M = (X, \tau, t)$ that satisfies $\Gamma$.\lsnote{@Stefan: Add intuition on this as discussed in the seminar}
		\item If $\mathcal{L}$ is one-step exponentially bounded, it is called \emph{one-step rectangular} if there is a \emph{rule scheme} which specifies for each exact tableau sequent~$\Gamma$ over one-step formulae $L \subseteq \Lambda(V)$ a modal tableau rule for~$\Gamma$ in a uniform way, such that all conclusions of the modal tableau rule consist of at most~$f_\mathcal{L}(\lvert L \rvert)$ tableau sequents. Here,~$f_\mathcal{L}$ refers to the exponential function in the one-step exponentially bounded property. 
	\end{enumerate}
\end{definition}
Intuitively, we can describe the conclusions of a modal tableau rule as a representation of all possible combinations of successor states and their values for each variable that can be extended to a one-step model satisfying $\Gamma$. More explicitly, for any one-step model $M$ that satisfies $\Gamma$, there is at least one conclusion $Q$, where the tableau sequents of $Q$ match to the states in $M$, and any combination of successor states $X$ that matches with at least one conclusion yields a one-step model satisfying $\Gamma$.
When we, from here on, refer to a modal tableau rule in a one-step rectangular logic $\mathcal{L}$, we refer to the modal tableau rules produced by the rule scheme.

\begin{example}\label{example:fuzzyalcOneStepRectangular}
	Non-expansive fuzzy $\ALC$ is one-step rectangular: The idea is that we only need a single conclusion, which contains one tableau sequent for every tableau literal in the premise; each tableau sequent ensures that the lower bound of its respective tableau literal is met, while also ensuring any relevant upper bounds of tableau literals are also met. Specifically, an upper bound is relevant for some tableau literal $\diamondsuit v \tin I$, if it is still smaller than the lower bound, i.e. if there is no value in $I$ smaller than this upper bound. One can then show that a pair $(X, \tau)$ that realizes this conclusion can be extended to a one-step model that satisfies the original tableau sequent and that all models that satisfy the original tableau sequent realize this conclusion.
\end{example}

\begin{example}
	We give an example of an instance of the rule scheme in non-expansive fuzzy $\ALC$: We interpret the modality $\diamondsuit$ as the degree of thermal coupling between components and take $\mathsf{highTemp}, \mathsf{highLoad}, \mathsf{poorCooling}$ as atoms. 
	We investigate the tableau sequent $\Gamma = \{\mathsf{highTemp} \sqcap \diamondsuit (\mathsf{highLoad} \sqcup \mathsf{poorCooling}) \tin [0.7,0.9], \diamondsuit \mathsf{highTemp} \tin [0, 0.5]\}$: It being satisfiable would mean there can be components that have a high temperature and affect another component, either with poor cooling or a high workload, in a critical way, but not enough to cause their temperature to rise to a dangerous degree.
	The top-level decomposition of this tableau sequent then gives us $V = \{v_1, v_2\}$, $\Gamma^\flat (v_1) = \mathsf{highLoad} \sqcup \mathsf{poorCooling}$,  $\Gamma^\flat (v_2) = \mathsf{highTemp}$, $\Gamma^\sharp = \{\diamondsuit v_1 \tin [0.7,0.9], \diamondsuit v_2 \tin [0, 0.5]\}$. Then a modal tableau rule for $\Gamma^\sharp$ would be:
	\begin{equation*}
		\frac{\diamondsuit v_1 \tin [0.7,0.9], \diamondsuit v_2 \tin [0, 0.5]}{\{v_1 \tin [0.7,1], v_2 \tin [0, 0.5]\}}
	\end{equation*}
	Note, that we do not need a second tableau sequent for the tableau literal $\diamondsuit v_2 \tin [0, 0.5]$ as the lower bound is just $0$. So the singular conclusion to the modal tableau rule asserts that we need a one-step model with a singular state $x$, where $\tau(v_1, x)$ has at least the value $0.7$ and $\tau(v_2, x)$ is capped by $0.5$. We do not need to put an upper bound on $v_1$ as we can just choose the value of successorship as somewhere below the upper bound $0.9$ for the tableau literal $\diamondsuit v_1 \tin [0.7,0.9]$. However, we do need to take the upper bound of $\diamondsuit v_2 \tin [0, 0.5]$ into consideration, as we could not choose a value for the successorship of $x$ that is lower than $0.5$.
\end{example}

\begin{definition}
	Let $\Gamma$ be a tableau sequent over formulas $L \subseteq \mathcal{F}(\Lambda)$ with top-level decomposition $(V, \Gamma^\flat, \Gamma^\sharp)$, and $Q$ be an exact tableau sequent $Q$ over $V$. We then write $Q_{\Gamma^\flat}$ as the exact tableau sequent over the image of $\Gamma^\flat$ with $Q_{\Gamma^\flat}(\phi) = \bigcap_{v \in V, \Gamma^\flat (v) = \phi} Q(v)$, i.e. substituting the formulas $\Gamma^\flat (v)$ for each $v$ in $Q$.
\end{definition}

\begin{lemma}\label{localreduction2}
	Let $\mathcal{L}$ be a one-step rectangular logic. Then a tableau sequent $\Gamma$ over formulas $L \subseteq \mathcal{F}(\Lambda)$ with top-level decomposition $(V, \Gamma^\flat, \Gamma^\sharp)$ is satisfiable if and only if there exists an open propositional tableau $G$ for $\Gamma^\sharp$ where the conclusions of the modal tableau rule applied to $\Gamma_G$ contain a set of exact tableau sequents $Q = \{Q(1), \ldots, Q(s)\}$ such that $Q(i)_{\Gamma^\flat}$ is satisfiable for all $1 \leq i \leq s$.
\end{lemma}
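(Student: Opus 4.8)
The plan is to chain together the reductions already established: \autoref{localreduction} reduces satisfiability of the tableau sequent $\Gamma$ over $L\subseteq\mathcal{F}(\Lambda)$ to a statement about satisfiability of $\Gamma^\sharp$ in a one-step model in which every successor state realizes some satisfiable exact tableau sequent over $\functionImage(\Gamma^\flat)$; \autoref{tableau:correctness} reduces satisfiability of $\Gamma^\sharp$ to the existence of an open propositional tableau $G$ with $\Gamma_G$ satisfiable, where $\Gamma_G$ is an exact tableau sequent over modal atoms $\heartsuit v$; and finally one-step rectangularity supplies, for $\Gamma_G$, a modal tableau rule $\frac{\Gamma_G}{Q_1\mid\dots\mid Q_m}$ whose conclusions characterize one-step satisfiability of $\Gamma_G$ in terms of pairs $(X,\tau)$ realizing all sequents of some $Q_i$. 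The substitution operation $Q(i)_{\Gamma^\flat}$ translates each sequent $Q(i)$ over $V$ back into an exact tableau sequent over $\functionImage(\Gamma^\flat)$, so that the condition "$Q(i)_{\Gamma^\flat}$ is satisfiable for all $i$" is exactly what is needed to feed back into the successor-structure requirement of \autoref{localreduction}.

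For the forward direction, I would start from a model $M$ and state $x$ with $M,x\models\Gamma$. By \autoref{localreduction}, extract a one-step model $(X,\tau,t)$ satisfying $\Gamma^\sharp$ with the property that each $x'\in X$ carries a satisfiable exact tableau sequent $\Gamma_{x'}$ over $\functionImage(\Gamma^\flat)$ with $\tau(v)(x')\in\Gamma_{x'}(\Gamma^\flat(v))$ for all $v\in V$. Since $(X,\tau,t)$ satisfies $\Gamma^\sharp$, by \autoref{tableau:correctness} there is an open propositional tableau $G$ for $\Gamma^\sharp$ with $(X,\tau,t)\models\Gamma_G$ in the one-step sense. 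Because $\mathcal{L}$ is one-step rectangular, a modal tableau rule for $\Gamma_G$ exists; by the "only if" half of the modal-tableau-rule definition, there is an $i$ such that $(X,\tau)$ realizes every sequent in $Q_i$. For each such sequent $Q(i)_k$, the realizing state $x'$ (which exists by definition of "realizes") combined with the satisfiable $\Gamma_{x'}$ witnesses satisfiability of $Q(i)_{k,\Gamma^\flat}$: the values $\tau(v)(x')$ lie in the intersection $\bigcap_{\Gamma^\flat(v)=\phi}Q(i)_k(v)=Q(i)_{k,\Gamma^\flat}(\phi)$ on the one hand and in $\Gamma_{x'}(\phi)$ on the other, so $\Gamma_{x'}$ restricted appropriately — or rather, the model for $\Gamma_{x'}$ — shows $Q(i)_{k,\Gamma^\flat}$ satisfiable. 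This gives the desired $Q=Q_i$.

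For the converse, suppose $G$ is an open propositional tableau for $\Gamma^\sharp$ and $Q=\{Q(1),\dots,Q(s)\}$ is a conclusion of the modal tableau rule for $\Gamma_G$ with each $Q(i)_{\Gamma^\flat}$ satisfiable. Pick for each $i$ a model $M_i$ and state $x_i$ satisfying $Q(i)_{\Gamma^\flat}$; the tuple $(\{x_1,\dots,x_s\},\tau)$ with $\tau(v)(x_i)$ read off from $M_i$ at $x_i$ then realizes every $Q(i)$ by construction of the substitution. By the "conversely" half of the modal-tableau-rule definition, this $(X,\tau)$ extends to a one-step model $M'=(X,\tau,t)$ satisfying $\Gamma_G$, hence satisfying $\Gamma^\sharp$ by \autoref{tableau:correctness}. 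Moreover each $x_i$ carries the satisfiable tableau sequent $Q(i)_{\Gamma^\flat}$ over $\functionImage(\Gamma^\flat)$, so the hypotheses of the "if" direction of \autoref{localreduction} are met, yielding a full model for $\Gamma$.

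The main obstacle I expect is bookkeeping around the substitution $Q(i)_{\Gamma^\flat}$ and the interplay between variables $V$ and formulas in $\functionImage(\Gamma^\flat)$: one must be careful that distinct variables $v,v'$ with $\Gamma^\flat(v)=\Gamma^\flat(v')$ are handled correctly (their constraints get intersected, and one must check that a single state can meet the intersection — this is exactly why the substitution is defined via $\bigcap$), and that "realizes at a state $x'$" in the modal-tableau-rule sense lines up with "the formula $\Gamma^\flat(v)$ has its truth value at a successor lying in the required interval" in the \autoref{localreduction} sense. A secondary subtlety is that \autoref{localreduction} as stated allows an arbitrary successor set $X$ with one tableau sequent per state, whereas the modal tableau rule produces a fixed finite family $Q_i$; one should note that realizing all sequents of $Q_i$ is precisely the needed matching, possibly after observing that states not constrained by any sequent of $Q_i$ can be dropped or that the rule's definition already demands each state realize at least one sequent when strict realization is in play — but for the plain "realizes" notion used in the modal tableau rule this is automatic. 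None of these points is deep; the proof is essentially a diagram chase through the three preceding lemmas.
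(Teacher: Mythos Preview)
Your proposal is correct and follows exactly the approach the paper intends: the paper's own proof is the single line ``See \autoref{localreduction}'', and what you have written is precisely the natural unpacking of that reference, chaining \autoref{localreduction}, \autoref{tableau:correctness}, and the definition of a modal tableau rule. The only point worth tightening is the one you already flag in your obstacles paragraph: in the forward direction, the fact that $\tau(v)(x')$ lies in the full intersection $\bigcap_{\Gamma^\flat(w)=\phi}Q_i(k)(w)$ relies on $\tau$ factoring through $\Gamma^\flat$, which holds because in the canonical extraction from \autoref{localreduction} one takes $\tau(v)(x')=\llbracket\Gamma^\flat(v)\rrbracket_M(x')$; making this explicit removes any residual ambiguity about which model witnesses satisfiability of $Q(i)_{k,\Gamma^\flat}$.
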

\begin{proof}
	See \Cref{localreduction}.
\end{proof}

\begin{definition}
	A one-step rectangular logic $\mathcal{L}$ is \emph{exponentially branching} if there is an exponential function $f^\mathcal{L} : \mathbb{N} \rightarrow \mathbb{N}$, such that for any tableau sequent $\Gamma$ over one-step formulas $L \subseteq \Lambda(V)$, the modal tableau rule has at most $f^\mathcal{L}(\lvert L \rvert)$ conclusions.
\end{definition}
When we, from here on, refer to a modal tableau rule in an exponentially branching logic $\mathcal{L}$, we refer to the modal tableau rules produced by the rule scheme, i.e. modal tableau rules where the upper bound on the number of conclusions holds, and the conclusions obey the upper bound on the number of tableau sequents.

\begin{example}
	Non-expansive fuzzy $\ALC$ is trivially exponentially branching, as there is a rule scheme specifying for each $\Gamma$ a modal tableau rule where the set of conclusions is always just a singleton.
\end{example}

\begin{definition}
	An exponentially branching logic $\mathcal{L}$ is \emph{polynomial-space bounded} if for any tableau sequent $\Gamma$ over one-step formulas $L \subseteq \Lambda(V)$ with $\lvert L \rvert = n$, we have the following property: Let $\{Q_1, \ldots, Q_m\}$ be the conclusions of the modal tableau rule for $\Gamma$ and $Q_i = \{Q_i (1), \ldots, Q_i (s_i)\}$ for all $1 \leq i \leq m$. Computing an exact tableau sequent $Q_i (j)$ for some $1 \leq i \leq m$ and $1 \leq j \leq s_i$ can be done in polynomial space.
	Here, this bound refers to the combined syntactic size of~$\Gamma$.
\end{definition}

\begin{example}
	Non-expansive fuzzy $\ALC$ is polynomial-space bounded by the procedure outlined for computing the conclusions of the modal tableau rule from the rule scheme of the one-step rectangular property.
\end{example}

\begin{remark}
	In a polynomial-space bounded logic, where $f_\mathcal{L}$ is the exponential function in the one-step exponentially bounded property, we can decide whether a set of exact tableau sequents $Q = \{Q(1), \ldots, Q(f_\mathcal{L}(n))\}$ over $V$ is implied via a $Q_i$, that is, $Q(j)$ is a sub-sequent of $Q_i(j)$ for all $j$, in $\PSPACE$. This, in turn, means that we can decide if for all $\tau$ with $\tau(v)(x_i) \in Q(i)(v)$ for all $v \in V, 1 \leq i \leq f_\mathcal{L}(n)$, there exists a $t \in TX$ such that $(X_{f_\mathcal{L}(n)}, \tau, t) \models \Gamma$. This works by computing if all $Q(j)$ are a sub-sequent of $Q_i(j)$ for at least one $Q_i$ of a conclusion of the modal tableau rule. 
\end{remark}

\begin{algorithm*}[!ht]
	\caption{checking satisfiability in polynomial-space bounded logic}
	\label{alg:sat}
	\KwIn{a tableau sequent $\Gamma$}
	\KwOut{true if $\Gamma$ is satisfiable, false otherwise}
	construct a top-level decomposition $(V, \Gamma^\flat, \Gamma^\sharp)$ for $\Gamma$;\\
	(non-deterministically) construct a propositional tableau $G$ for $\Gamma^\sharp$;\\
	\If{$G$ is open} {
		let $\frac{\Gamma^\sharp_G}{Q_1 \mid\quad \ldots \quad\mid Q_m}$ be the modal tableau rule for $\Gamma$;\\
		\ForAll{$1 \leq i \leq m$} {
			$\operatorname{sat} := \top$;\\
			let $Q_i = \{Q_i (1), \ldots, Q_i (m_i)\}$;\\
			\ForAll{$1 \leq j \leq m_i$} {
				\If{check satisfiability $(Q_i(j))_{\Gamma^\flat}$ is false} {
					$\operatorname{sat} := \bot$;\\
				}
			}
			\If{$\operatorname{sat} = \top$} {
			\Return{true};\\
			}
		}
	}
	\Return{false};\\
\end{algorithm*}

\begin{theorem}\label{thm:mainCorrect}
	Algorithm \ref{alg:sat} is correct, i.e. it computes satisfiability of a tableau sequent $\Gamma$ over formulas $L$ in a polynomial-space bounded logic $\mathcal{L}$.
\end{theorem}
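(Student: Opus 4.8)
The proof is a correctness argument for \autoref{alg:sat}, so the plan is to show that the algorithm returns \texttt{true} on input $\Gamma$ if and only if $\Gamma$ is satisfiable in $\mathcal{L}$. The key structural fact underpinning everything is \autoref{localreduction2}: a tableau sequent $\Gamma$ with top-level decomposition $(V,\Gamma^\flat,\Gamma^\sharp)$ is satisfiable iff there is an open propositional tableau $G$ for $\Gamma^\sharp$ and a modal tableau rule $\frac{\Gamma_G}{Q_1\mid\dots\mid Q_m}$ with some conclusion $Q_i=\{Q_i(1),\dots,Q_i(s_i)\}$ all of whose substituted sequents $Q_i(j)_{\Gamma^\flat}$ are satisfiable. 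So I would first observe that the algorithm does precisely a (nondeterministic) search matching this characterization: line~1 builds the decomposition, line~2 guesses $G$, lines~4--5 pick a modal tableau rule and loop over its conclusions $Q_i$, and the inner loop (lines~8--11) checks satisfiability of each $(Q_i(j))_{\Gamma^\flat}$ by a recursive call. Hence soundness and completeness reduce to: (a) the recursive calls terminate and are themselves correct, and (b) the search space the algorithm explores is exactly the one quantified over in \autoref{localreduction2}.

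For termination and the recursion, the main point is that each $Q_i(j)_{\Gamma^\flat}$ is an exact tableau sequent over the image of $\Gamma^\flat$, i.e.\ over formulas $\heartsuit v\in\mathsf{S}_0(L)$ by \autoref{tableau:correctness}; substituting $\Gamma^\flat(v)$ strips one layer of modalities, so the maximal modal nesting depth of the formulas involved strictly decreases with each recursive call. I would make this precise by induction on modal depth: the base case is a sequent over purely propositional formulas (no modal operators), where ``satisfiable'' coincides with ``propositionally satisfiable'' and the algorithm correctly decides it because, with $G$ open and no modal operators present, $\Gamma_G$ is empty so any modal tableau rule has a trivially-satisfiable conclusion and the algorithm returns \texttt{true} exactly when an open $G$ exists, which by \autoref{tableau:correctness} is exactly when $\Gamma$ is (propositionally, hence plainly) satisfiable. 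The induction step is then a direct application of \autoref{localreduction2} together with the induction hypothesis applied to the $(Q_i(j))_{\Gamma^\flat}$, whose formulas have strictly smaller modal depth.

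It remains to line up the algorithm's branching with the ``there exists'' quantifiers in \autoref{localreduction2}. For the ``if'' direction (completeness): if $\Gamma$ is satisfiable, \autoref{localreduction2} yields an open $G$, a modal tableau rule, and a good conclusion $Q_i$; the nondeterministic algorithm can guess $G$ at line~2, choose that rule at line~4, and will reach the iteration for that $i$, where by the induction hypothesis each recursive call on $(Q_i(j))_{\Gamma^\flat}$ returns \texttt{true}, so $\operatorname{sat}$ stays $\top$ and the algorithm returns \texttt{true}. For the ``only if'' direction (soundness): if some computation returns \texttt{true}, then at line~12 we had $\operatorname{sat}=\top$ for a concrete open $G$, a concrete modal tableau rule, and a concrete $i$, meaning every recursive call on $(Q_i(j))_{\Gamma^\flat}$ returned \texttt{true}, hence by the induction hypothesis every $(Q_i(j))_{\Gamma^\flat}$ is satisfiable; \autoref{localreduction2} then gives satisfiability of $\Gamma$. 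One technical point to address is that the definition of one-step rectangular only guarantees a modal tableau rule exists for each $n=\lvert L\rvert$; I would note that line~4 draws from exactly this supply and that, per the convention following the definition of exponentially branching logics, ``modal tableau rule'' here always means one obeying both the bound on the number of conclusions and the bound $\overline{n}$ on the number of tableau sequents per conclusion, so the loops in lines~5 and~8 are over finite, well-defined ranges.

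The main obstacle I expect is not any single deep step but the bookkeeping that makes the induction go through cleanly: one must fix the right induction measure (modal nesting depth of the formula set, not syntactic size, which need not decrease) and verify that $(Q_i(j))_{\Gamma^\flat}$ genuinely has smaller measure — this uses that $\Gamma_G$ ranges over formulas $\heartsuit v\in\mathsf{S}_0(L)$ and that substituting $\Gamma^\flat(v)$ for $v$ reattaches only strict subformulas that already lay behind the first modal layer, so the depth drops by exactly one. Everything else is an unwinding of \autoref{localreduction2} against the control flow of \autoref{alg:sat}.
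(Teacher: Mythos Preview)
Your proposal is correct and follows essentially the same approach as the paper: induction on modal depth, with the base case handled by propositional satisfiability and the induction step reduced to \autoref{localreduction2} applied to the recursive calls on the $(Q_i(j))_{\Gamma^\flat}$. Your write-up is in fact more explicit than the paper's about separating the soundness and completeness directions and about why the modal depth strictly decreases, but the underlying argument is the same.
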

\begin{proof}[Proof (sketch)]
	This can be checked via induction over the modal depth and combining the polynomial-space boundedness property with \Cref{tableau:correctness} and \Cref{localreduction2}.
\end{proof}

\begin{theorem}\label{thm:main}
	Algorithm \ref{alg:sat} uses at most polynomial amounts of space, i.e. satisfiability of a tableau sequent $\Gamma$ over formulas $L$ in a polynomial-space bounded logic $\mathcal{L}$ is decidable in $\PSPACE$ (bounded in the combined syntactic size of $L$).
      \end{theorem}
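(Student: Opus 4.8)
The plan is to read \autoref{alg:sat} as a depth-first recursive procedure and to bound separately (i)~the depth of the recursion and (ii)~the space consumed by a single stack frame; correctness is already provided by \autoref{thm:mainCorrect}, so only the space estimate remains. Write $n$ for the combined syntactic size of the input sequent~$\Gamma$ over~$L$. The procedure recurses only through the calls ``check satisfiability $(Q_i(j))_{\Gamma^\flat}$''. Since $(Q_i(j))_{\Gamma^\flat}$ is a tableau sequent over the image of~$\Gamma^\flat$, i.e.\ over formulas that lie strictly inside the first modal layer of~$L$, its modal depth is strictly smaller than that of~$L$. Hence the recursion depth is at most the modal depth of~$L$, which is bounded by~$n$.

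The crucial intermediate claim is that \emph{every} tableau sequent occurring anywhere in the recursion has combined syntactic size polynomial in~$n$, with a polynomial that does \emph{not} depend on the recursion level. The formula components of these sequents are always subformulas of~$L$, so their total syntactic size never exceeds~$n$. The only quantities that could grow are the bit-sizes of the rational endpoints of the intervals, which are produced by the propositional tableau rules ($1-I$, $I+c$ with~$c$ a constant occurring in~$L$, and $I\cap J$) and by the modal tableau rule. Since all constants occurring in~$L$ share a common denominator of magnitude at most~$2^{n}$, every rational produced by the propositional calculus is of bit-size $O(n)$; and by the polynomial-space bounded assumption, each conclusion sequent $Q_i(j)$ of a modal tableau rule is computable in polynomial space and hence of polynomial size, so that its substitution instance $(Q_i(j))_{\Gamma^\flat}$ -- obtained by finite intersections of intervals, which reuse existing endpoints -- again has polynomial size. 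Establishing this bound \emph{uniformly} over the linearly many recursion levels, so that the ``polynomial space relative to the current sequent'' guaranteed by the assumptions does not compound into a super-polynomial bound in~$n$, is the main technical obstacle.

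Granting the uniform bound, a single stack frame is seen to use only polynomial space: the top-level decomposition $(V,\Gamma^\flat,\Gamma^\sharp)$ merely replaces subformulas by fresh variables; an open propositional tableau~$G$ for~$\Gamma^\sharp$, together with the extracted sequent $\Gamma_G$ over formulas $\heartsuit v\in\mathsf{S}_0(L)$, has polynomial size by (the analysis underlying) \autoref{lemma:expansionBound}; a choice of modal tableau rule is recorded by an index into at most~$\tilde n$ conclusions, which by the exponentially branching property costs $\log\tilde n=\mathrm{poly}(n)$ bits, while each conclusion $Q_i$ comprises at most~$\overline n$ sequents by one-step rectangularity, so the inner index~$j$ costs $\log\overline n=\mathrm{poly}(n)$ bits; computing a single $Q_i(j)$ is in polynomial space by the polynomial-space bounded assumption; forming $(Q_i(j))_{\Gamma^\flat}$ is a finite intersection of intervals; and the remaining bookkeeping (the flag $\operatorname{sat}$ and the loop counters) is logarithmic. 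The nondeterministic steps -- guessing~$G$ and choosing the modal rule -- are absorbed either by cycling through the polynomially representable choices in place or, more simply, by $\PSPACE=\NPSPACE$ (Savitch's theorem).

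Combining the two bounds, the depth-first execution maintains a stack of at most~$n$ frames, each using $\mathrm{poly}(n)$ space, so the overall space usage is $n\cdot\mathrm{poly}(n)=\mathrm{poly}(n)$, i.e.\ polynomial in the combined syntactic size of~$L$. This establishes the theorem.
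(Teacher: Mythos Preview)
Your proposal is correct and follows essentially the same approach as the paper: bound the recursion depth by the modal depth (hence by~$n$), argue that each stack frame uses only polynomial space (top-level decomposition, propositional tableau, polynomially-sized counters for the exponentially many conclusions~$\tilde n$ and sequents~$\overline n$, polynomial-space computation of~$Q_i(j)$), and conclude by multiplying depth and per-frame cost. You are in fact more explicit than the paper on two points it leaves implicit---the bit-size of interval endpoints and the appeal to Savitch's theorem for the nondeterministic steps---and you rightly flag the uniformity of the per-level polynomial bound across recursion levels as the delicate point, which the paper's proof also does not spell out in detail.
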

\begin{proof}[Proof (sketch)]
	The idea is once again to use induction over the modal depth: For modal depth $0$, the algorithm only decides propositional satisfiability, which is done in nondeterministic polynomial time. For the induction step, constructing a propositional tableau can again be done in nondeterministic polynomial time. We then iteratively check the tableau sequents of each conclusion of the modal tableau rule for satisfiability; here, we only need to keep track, via binary counters, of which conclusion and which tableau sequent within that conclusion we are currently checking. Computing that tableau sequent can then be done in polynomial amounts of space by the polynomial-space boundedness property, and we can use the algorithm to then check satisfiability in polynomial amounts of space.
\end{proof}

\noindent This allows us to recover the following result from \cite{ijcai2025p502}:
\begin{corollary}
	The satisfiability problem of non-expansive fuzzy $\ALC$ is in $\PSPACE$.
\end{corollary}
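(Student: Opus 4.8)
The plan is to obtain the corollary by instantiating \autoref{thm:mainCorrect} and \autoref{thm:main} with the concrete logic, after assembling the structural properties of non-expansive fuzzy $\ALC$ that have already been verified in the running examples of this section. Recall that, in the single-role case, non-expansive fuzzy $\ALC$ is exactly the logic $\mathcal{L}+\mathsf{At}$, where $\mathcal{L}$ is determined by the fuzzy powerset functor $TX=[0,1]^X$ together with the single modality $\diamondsuit$ interpreted by $(\llbracket\diamondsuit\rrbracket_X(\nu))\mu=\sup_{x\in X}\min(\nu(x),\mu(x))$; finitely many roles are accommodated with no change to the argument by replacing $T$ with $X\mapsto([0,1]^X)^R$ for the set $R$ of roles and taking one diamond per role.

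First I would collect, from the examples above, that $\mathcal{L}$ is one-step exponentially bounded (with $\overline{n}=n$), one-step rectangular via the single-conclusion modal tableau rule that for each literal $\diamondsuit v\in I$ emits a tableau sequent enforcing the lower endpoint of $I$ together with the still-relevant upper bounds (\autoref{example:fuzzyalcOneStepRectangular}), trivially exponentially branching since that rule has exactly one conclusion, and polynomial-space bounded since each of its at most $n$ output tableau sequents is computed by the explicit procedure described there in polynomial space in the combined syntactic size of the premise. Hence $\mathcal{L}$ is polynomial-space bounded in the sense required by \autoref{thm:main}.

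Next I would invoke the closure results established above, to the effect that each of these conditions is preserved when atoms are adjoined to a logic; consequently $\mathcal{L}+\mathsf{At}$, i.e.\ non-expansive fuzzy $\ALC$, is again polynomial-space bounded. \autoref{thm:mainCorrect} and \autoref{thm:main} then provide a $\PSPACE$ decision procedure for satisfiability of arbitrary tableau sequents over $\mathcal{F}(\Lambda_{\mathsf{At}})$, measured in their combined syntactic size.

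Finally I would bridge tableau-sequent satisfiability and the usual notion of formula satisfiability: a concept (formula) $\phi$ is satisfiable to some degree in an interval $I$ — in particular, satisfiable at all (take $I=(0,1]$), or satisfiable with a rational threshold $p$ (take $I=[p,1]$ or $I=(p,1]$) — iff the singleton tableau sequent $\{\phi\in I\}$ is satisfiable, and this sequent has combined syntactic size linear in $\lvert\phi\rvert$ plus the bit-sizes of the endpoints of $I$, so the $\PSPACE$ bound transfers. I do not expect a genuine obstacle here: all the substantive work — constructing the $\diamondsuit$-rule, bounding its branching and the number, size, and computational cost of its conclusions, and verifying preservation under atoms — is already done in the preceding examples and lemmas, so the corollary amounts to the routine bookkeeping of threading those facts through \autoref{thm:main}. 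If one insists on naming the load-bearing step, it is the verification that the $\diamondsuit$-rule is at once one-step rectangular with a single conclusion and computable in polynomial space, which is exactly what makes the generic $\PSPACE$ algorithm applicable.
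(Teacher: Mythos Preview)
Your proposal is correct and follows the same route as the paper: the corollary is obtained by feeding the running examples (one-step exponential boundedness, one-step rectangularity via the single-conclusion $\diamondsuit$-rule, trivial exponential branching, and polynomial-space boundedness) together with the atom-closure lemmas into \autoref{thm:mainCorrect} and \autoref{thm:main}. The paper in fact treats this as an immediate consequence without spelling out the bookkeeping you provide; your added remark on reducing threshold satisfiability to singleton tableau sequents is a harmless elaboration rather than a deviation.
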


\begin{remark}\label{remark:freshvs}
	We could have also defined these properties only for tableau sequents over one-step formulae where each $v \in V$ is used in at most one tableau literal, and still would have obtained that any logic that has these properties has its satisfiability problem in $\PSPACE$. The reason for this is that when doing a top-level decomposition of some $\Gamma$, we already introduce a fresh $v \in V$ for every modal operator, regardless of whether the formula that is being substituted has been seen before.
\end{remark}

\subsection{Atoms}\label{atoms}
To be able to ignore atoms or modalities emulating atoms later on, we prove that as long as the other modalities of a logic do not operate on the endofunctor part that is responsible for atoms, the presence of atoms does not impact the one-step exponentially bounded, exponentially branching, and polynomial-space properties.

\begin{lemma}\label{atoms:onestepexp}
	Let $\mathcal{L}$ be a one-step exponentially bounded logic and $\mathsf{At}$ a set of atoms. Then the logic $\mathcal{L} + \mathsf{At}$ is also one-step exponentially bounded.
\end{lemma}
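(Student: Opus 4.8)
The plan is to split a one-step tableau sequent for $\mathcal{L}+\mathsf{At}$ into its genuine-modal part and its atom part, apply the small-model bound that $\mathcal{L}$ already supplies to the former, and exploit that the atom part imposes no constraint whatsoever on the carrier. Concretely, since one-step exponential boundedness is stated for tableau sequents $\Gamma$ over $L\subseteq\Lambda_{\mathsf{At}}(V)$, every literal of $\Gamma$ has the form $\heartsuit v\in I$ with $\heartsuit\in\Lambda$, or $a v\in I$ with $a\in\mathsf{At}$. Write $\Gamma=\Gamma_\Lambda\cup\Gamma_{\mathsf{At}}$ and $L=L_\Lambda\cup L_{\mathsf{At}}$ for the induced partitions, so that $\Gamma_\Lambda$ is a tableau sequent over $L_\Lambda\subseteq\Lambda(V)$ with $\lvert L_\Lambda\rvert\le\lvert L\rvert=n$. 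The semantic fact driving everything is that in a one-step $T_{\mathsf{At}}$-model $M=(X,\tau,(t,g))$ one has $\llbracket\heartsuit v\rrbracket_M=\llbracket\heartsuit\rrbracket_X(\tau(v))(t)$ — independent of $g$ — while $\llbracket a v\rrbracket_M=g(a)$ — independent of $X$, $\tau$, and $t$. In particular $(X,\tau,t)$, read as a one-step $T$-model for $\mathcal{L}$, satisfies $\Gamma_\Lambda$ iff $M$ satisfies $\Gamma_\Lambda$.

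For the nontrivial implication, suppose $\Gamma$ is satisfiable in $\mathcal{L}+\mathsf{At}$, witnessed by $M=(X,\tau,(t,g))$. Then $(X,\tau,t)$ satisfies $\Gamma_\Lambda$ in $\mathcal{L}$, so one-step exponential boundedness of $\mathcal{L}$ yields a one-step $T$-model $M_0=(X_0,\tau_0,t_0)$ of $\Gamma_\Lambda$ with $\lvert X_0\rvert$ at most the bound of $\mathcal{L}$ at $\lvert L_\Lambda\rvert$; taking $\overline{n}$ to be the (without loss of generality monotone) bound of $\mathcal{L}$ at $n$, we get $\lvert X_0\rvert\le\overline{n}$, still at most exponential in $n$, and we declare $\overline{n}$ to be the bound for $\mathcal{L}+\mathsf{At}$. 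Now reuse the same $g$ and set $M_0'=(X_0,\tau_0,(t_0,g))$. By the semantic fact above $M_0'\models\Gamma_\Lambda$, and for every $(a v\in I)\in\Gamma_{\mathsf{At}}$ we get $\llbracket a v\rrbracket_{M_0'}=g(a)=\llbracket a v\rrbracket_M\in I$ since $M\models\Gamma$; hence $M_0'\models\Gamma$ with $\lvert X_0\rvert\le\overline{n}$. If one insists on $\lvert X_0\rvert=\overline{n}$ exactly, pad with dummy states exactly as in the corresponding step for $\mathcal{L}$; the atom literals constrain no states, so padding is harmless. The converse implication is trivial.

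I do not anticipate a genuine obstacle: the whole point is that the atom component $g$ of a $T_{\mathsf{At}}$-coalgebra lives outside the carrier, so shrinking the carrier to satisfy $\Gamma_\Lambda$ cannot disturb satisfaction of $\Gamma_{\mathsf{At}}$, and the atom literals are just intersection constraints on the single values $g(a)$. The only care needed is size bookkeeping ($\lvert L_\Lambda\rvert\le n$ and, if desired, monotonising $\mathcal{L}$'s bound function) together with the degenerate case $\Gamma_\Lambda=\emptyset$, which is handled by invoking $\mathcal{L}$'s property for the empty sequent to obtain a small $T$-carrier on which to hang $g$.
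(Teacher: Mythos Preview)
Your proposal is correct and follows essentially the same idea as the paper: atom modalities are evaluated purely from the $g$-component and impose no constraint on the carrier, so the small-model bound for the $\Lambda$-part transfers unchanged. The paper compresses this into a one-line observation that atoms ``can be satisfied by a one-step model over the empty set,'' whereas you spell out the partition $\Gamma=\Gamma_\Lambda\cup\Gamma_{\mathsf{At}}$ and the bookkeeping explicitly; the underlying argument is the same.
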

\begin{proof}
	This is trivial as the modalities emulating atoms do not require any successors; i.e. they can be satisfied by a one-step model over the empty set.
\end{proof}

\begin{lemma}\label{atoms:onesteprectangular}
	Let $\mathsf{At}$ be a set of atoms and $\mathcal{L}$ a one-step rectangular logic. Then the logic $\mathcal{L} + \mathsf{At}$ is also one-step rectangular.
\end{lemma}
\begin{proof}
	Once again, this is trivial, as the predicate liftings of modalities emulating atoms do not depend on the value of their argument and any successor states. This means that as long as the intersection of intervals for a modality emulating the same atom is not the empty set, in which case the rule scheme produces a modal tableau rule with a single, immediately unsatisfiable conclusion (e.g. $\{v \in \emptyset\}$) as no model could satisfy these conditions, the set of conclusions of the modal tableau rule of the original logic (when ignoring the new modalities emulating atoms) still defines the modal tableau rule of this new logic.
\end{proof}

\begin{lemma}\label{atoms:expbranching}
	Let $\mathsf{At}$ be a set of atoms and $\mathcal{L}$ an exponentially branching logic. Then the logic $\mathcal{L} + \mathsf{At}$ is also exponentially branching.
\end{lemma}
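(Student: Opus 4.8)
The plan is to mirror the proofs of \autoref{atoms:onestepexp} and \autoref{atoms:onesteprectangular}: the modal operators $a\in\mathsf{At}$ are nullary, their interpretations ignore both the predicate argument and the $TX$-component of $T_{\mathsf{At}}X=TX\times[0,1]^{\mathsf{At}}$, so they impose only a constraint on the $[0,1]^{\mathsf{At}}$-component of the existentially chosen $t$ and require no successors at all. Consequently a modal tableau rule for a one-step sequent of $\mathcal{L}+\mathsf{At}$ can be taken to be \emph{literally} a modal tableau rule for its non-atomic part in $\mathcal{L}$, so the number of conclusions does not grow and any branching bound for $\mathcal{L}$ transfers to $\mathcal{L}+\mathsf{At}$.

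Concretely, I would fix a tableau sequent $\Gamma$ over one-step formulas $L\subseteq\Lambda_{\mathsf{At}}(V)$ with $\lvert L\rvert=n$, and split $L=L_0\cup L_1$ where $L_0$ collects the formulas $\heartsuit v$ with $\heartsuit\in\Lambda$ and $L_1$ the atomic formulas $a$ with $a\in\mathsf{At}$, splitting $\Gamma=\Gamma_0\cup\Gamma_1$ accordingly and putting $n_0:=\lvert L_0\rvert\le n$. If the intersection $\bigcap\{I\mid(a\in I)\in\Gamma_1\}$ is empty for some $a\in\mathsf{At}$, then $\Gamma$ is unsatisfiable, so the modal tableau rule for $\Gamma$ with no conclusions is (vacuously) correct and respects every bound. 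Otherwise I fix $g\in[0,1]^{\mathsf{At}}$ with $g(a)$ in that intersection for every atom $a$ occurring in $\Gamma_1$, and, using that $\mathcal{L}$ is exponentially branching, pick a modal tableau rule $\frac{\Gamma_0}{Q_1\mid\dots\mid Q_m}$ for $\Gamma_0$ with $m\le\tilde n_0$ conclusions, each consisting of at most $\overline{n_0}$ tableau sequents. The claim is then that $\frac{\Gamma}{Q_1\mid\dots\mid Q_m}$ is a modal tableau rule for $\Gamma$ in $\mathcal{L}+\mathsf{At}$.

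To verify this I would unwind the definition of modal tableau rule. For the soundness direction, given a one-step $\mathcal{L}+\mathsf{At}$-model $M=(X,\tau,(t_0,g'))$ satisfying $\Gamma$: by the definition of $\llbracket\heartsuit\rrbracket$ in $\mathcal{L}+\mathsf{At}$ the one-step extension of any $\heartsuit v$ depends only on $t_0$, so $(X,\tau,t_0)$ is an $\mathcal{L}$-model satisfying $\Gamma_0$; by the chosen rule for $\Gamma_0$ some $Q_i$ has all its sequents realized in $(X,\tau,t_0)$, and since ``realizes'' is a property of the pair $(X,\tau)$ alone, all of $Q_i$ is realized in $M$. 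For the converse, given $(X,\tau)$ realizing all sequents of some $Q_i$, the rule for $\Gamma_0$ yields $t_0\in TX$ with $(X,\tau,t_0)\models\Gamma_0$; setting $t:=(t_0,g)\in T_{\mathsf{At}}X$, the model $(X,\tau,t)$ satisfies each non-atomic literal exactly as in $\mathcal{L}$ (the interpretation of $\heartsuit\in\Lambda$ uses only the first component), and each atomic literal $a\in I$ because $\llbracket a\rrbracket(\cdot)(t_0,g)=g(a)\in I$ by the choice of $g$; hence $(X,\tau,t)\models\Gamma$.

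Finally I would do the bookkeeping on bounds: the number of conclusions is $m\le\tilde n_0$, and replacing $\tilde\cdot$ by $k\mapsto\max_{k'\le k}\tilde{k'}$ if necessary we may assume it is monotone, so $m\le\tilde n$, giving the required at-most-exponential branching bound; likewise each $Q_i$ has at most $\overline{n_0}$ sequents, and since $\mathcal{L}+\mathsf{At}$ is one-step exponentially bounded by \autoref{atoms:onestepexp} with a bound we may again take monotone and no smaller than that of $\mathcal{L}$, this respects the per-conclusion bound required of modal tableau rules in an exponentially branching logic. The main thing to get right — rather than a genuine obstacle — is the clean separation of concerns: the atomic constraints are discharged entirely through the existentially chosen $t$ (its $[0,1]^{\mathsf{At}}$-component $g$) and never touch the conclusions $Q_i$, which live over $V$; once one observes that ``realizes'' refers only to $(X,\tau)$, both directions of the modal-rule condition transfer verbatim from $\mathcal{L}$ to $\mathcal{L}+\mathsf{At}$, and the degenerate empty-intersection case is the only one needing separate treatment.
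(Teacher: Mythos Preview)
Your proposal is correct and takes essentially the same approach as the paper: split off the atomic literals, handle the empty-intersection case separately, and otherwise reuse the modal tableau rule for the non-atomic part verbatim. The paper's proof is a one-line reference to the argument of \autoref{atoms:onesteprectangular}, while you spell out both directions of the modal-rule condition and the bound bookkeeping explicitly (including the monotonicity adjustment), but the underlying idea is identical.
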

\begin{proof}
	This follows from the same argument as that in the proof of \Cref{atoms:onesteprectangular}.
\end{proof}

\begin{lemma}\label{atoms:polynomialspacebounded}
	Let $\mathsf{At}$ be a set of atoms and $\mathcal{L}$ a polynomial-space bounded logic. Then the logic $\mathcal{L} + \mathsf{At}$ is also polynomial-space bounded.
\end{lemma}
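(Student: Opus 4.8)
The plan is to reuse the structural analysis underlying the proof of \autoref{atoms:onesteprectangular}: in $\mathcal{L}+\mathsf{At}$ the modalities $a\in\mathsf{At}$ emulating atoms neither inspect their argument nor require any successor states, so a modal tableau rule for a one-step tableau sequent of $\mathcal{L}+\mathsf{At}$ can be obtained, essentially unchanged, from one for $\mathcal{L}$. First I would record that $\mathcal{L}+\mathsf{At}$ is exponentially branching by \autoref{atoms:expbranching} (and one-step exponentially bounded and one-step rectangular by \autoref{atoms:onestepexp} and \autoref{atoms:onesteprectangular}), so that the polynomial-space bounded property is even well-posed for it.

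Next I would make the modal tableau rule explicit. Given an exact tableau sequent $\Gamma$ over one-step formulas $L\subseteq\Lambda_{\mathsf{At}}(V)$ with $\lvert L\rvert=n$, split $\Gamma=\Gamma^\Lambda\uplus\Gamma^{\mathsf{At}}$ into the literals whose outermost operator lies in $\Lambda$ and those whose outermost operator is some $a\in\mathsf{At}$. For each atom $a$ occurring in $\Gamma^{\mathsf{At}}$ form $I_a:=\bigcap\{I\mid (av\in I)\in\Gamma^{\mathsf{At}}\text{ for some }v\}$, using that $\llbracket a\rrbracket$ ignores its variable argument, so that all these literals constrain the single real value $g(a)$ of the $[0,1]^{\mathsf{At}}$-component. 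If some $I_a=\emptyset$, the modal tableau rule for $\Gamma$ has an empty set of conclusions; otherwise, take as conclusions exactly the conclusions $Q_1,\dots,Q_m$ of a modal tableau rule for $\Gamma^\Lambda$ in $\mathcal{L}$. That this is a modal tableau rule for $\Gamma$ in $\mathcal{L}+\mathsf{At}$ is precisely the argument in the proof of \autoref{atoms:onesteprectangular}, and it respects the numerical bounds on the number of conclusions and of tableau sequents per conclusion inherited from \autoref{atoms:expbranching} and \autoref{atoms:onesteprectangular}.

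It then remains to bound the space needed to compute a single conclusion tableau sequent $Q_i(j)$ for this rule. The computation proceeds in two phases. Phase one checks, for every atom $a$, whether $I_a$ is empty: scanning the literals of $\Gamma^{\mathsf{At}}$ while maintaining the current greatest lower endpoint, least upper endpoint, and open/closed flags, this only involves comparisons of the rational endpoints already present in $\Gamma$ and is doable in space polynomial (indeed logarithmic) in the combined syntactic size of $\Gamma$. If some $I_a$ is empty the rule has no conclusions and there is nothing to output. Phase two, applied to $\Gamma^\Lambda$, invokes the polynomial-space procedure for computing $Q_i(j)$ guaranteed by $\mathcal{L}$ being polynomial-space bounded; since the combined syntactic size of $\Gamma^\Lambda$ is at most that of $\Gamma$, this phase runs in space polynomial in the combined syntactic size of $\Gamma$. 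The sequential composition of two polynomial-space computations is again polynomial-space, which yields the claim.

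The main obstacle is not the space accounting, which is routine, but making the reduction to \autoref{atoms:onesteprectangular} airtight: one must be careful that the modal tableau rule produced for $\mathcal{L}+\mathsf{At}$ really is, up to the harmless atom-feasibility check above, the \emph{same} rule as for $\mathcal{L}$ on $\Gamma^\Lambda$ -- in particular that the valuations $\tau$ witnessing realizability are unaffected by the extra $[0,1]^{\mathsf{At}}$-component of $T_{\mathsf{At}}X$, and that realizing sequents over the atom-modalities imposes no constraint beyond membership of $g(a)$ in $I_a$ -- so that the already-established space bound for $\mathcal{L}$ transfers verbatim.
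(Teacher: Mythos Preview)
Your proposal is correct and follows essentially the same approach as the paper: split off the atom literals, check their interval intersections for consistency (yielding no conclusions if some $I_a=\emptyset$), and otherwise reuse the modal tableau rule and its polynomial-space computation from $\mathcal{L}$ on the remaining $\Lambda$-part. The paper's proof is a terse one-line appeal to exactly this reduction, whereas you have spelled out the two-phase computation and the space accounting in detail.
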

\begin{proof}
	This is clear from the earlier argumentations that the presence of atoms does not impact the actual rule scheme, except when some modalities emulating the same atoms have conflicting bounds, in which case the rule scheme outputs a single, immediately unsatisfiable conclusion. One can easily check the latter, and the former reduces to $\mathcal{L}$ being polynomial-space bounded.
\end{proof}

\section{The Non-Expansive Logic of `Generally'}\label{lgen}
We can now examine the first real instantiation, apart from non-expansive fuzzy $\ALC$, of the properties introduced in the previous section. We recall the non-expansive logic of `generally' $\mathcal{L}_{\mathsf{gen}}$, which is a computationally more tractable alternative to the `probably' modality. While the latter takes the expected truth value of a formula in the successor states, the logic $\mathcal{L}_{\mathsf{gen}}$ takes the degree to which the probability that a successor state satisfies a formula is considered high. 
We prove, step by step, that the logic $\mathcal{L}_{\mathsf{gen}}$ is polynomial-space bounded, which means its satisfiability problem is in $\PSPACE$. 
\begin{lemma}\label{theorem:generallyexpbounded}
	The logic non-expansive fuzzy $\mathcal{L}_{\mathsf{gen}}$ is one-step exponentially bounded.
\end{lemma}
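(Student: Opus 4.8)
The plan is to reduce to the atom-free fragment and then construct a small one-step model explicitly. First I would invoke \autoref{atoms:onestepexp}: it suffices to prove that the logic $\mathcal{L}$ with functor $\Dist$ and the single modality $\GENERALLY$ (recall $h=\operatorname{id}$) is one-step exponentially bounded, since $\mathcal{L}_{\mathsf{gen}} = \mathcal{L} + \mathsf{At}$. So fix a satisfiable tableau sequent $\Gamma$ over one-step formulas $L \subseteq \Lambda(V)$ with $\lvert L\rvert = n$; every formula in $L$ has the form $\GENERALLY v$ for some $v\in V$, so we may write $L = \{\GENERALLY v_1,\dots,\GENERALLY v_k\}$ with $k \le n$. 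Pick a one-step model $M = (X,\tau,t)$ with $M \models \Gamma$ and set $\beta_i := \llbracket\GENERALLY v_i\rrbracket_M$, so that $\beta_i \in I$ for every literal $(\GENERALLY v_i \in I) \in \Gamma$; the converse direction of the bound (a small model is a model) is trivial.

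The main obstacle is that all the values $\beta_1,\dots,\beta_k$ have to be realized \emph{simultaneously} by a \emph{single} distribution on a \emph{single} carrier, so the trick used for fuzzy $\ALC$ (one state per variable) does not apply directly. The observation that unblocks this is that, unfolding the predicate lifting, $\beta_i = \sup_{\alpha\in[0,1]}\min(\alpha,g_i(\alpha))$ with $g_i(\alpha) := t(\{x\in X\mid \tau(v_i)(x)\ge\alpha\})$ non-increasing; hence $\llbracket\GENERALLY v_i\rrbracket_M$ depends only on the survival function $g_i$, and it suffices to reproduce the numbers $\beta_i$ by suitable survival functions on a new model. The resolution is to use nested witnessing regions ordered by magnitude. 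Concretely, after relabelling assume $\beta_1 \ge \dots \ge \beta_k$, put $\beta_{k+1} := 0$, and define $M' = (X',\tau',t')$ over $X' := \{a_0,a_1,\dots,a_k\}$ by $t'(a_0) := 1-\beta_1$, $t'(a_j) := \beta_j - \beta_{j+1}$ for $1\le j\le k$, and $\tau'(v_i)(a_j) := \beta_i$ if $j \ge i$, else $0$. Then $t'$ is a discrete probability distribution, and the nested sets $R_i := \{a_j \mid j\ge i\}$ satisfy $t'(R_i) = \sum_{j=i}^{k}(\beta_j-\beta_{j+1}) = \beta_i$.

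It then remains to check $\llbracket\GENERALLY v_i\rrbracket_{M'} = \beta_i$ for each $i$, which is a direct computation: for $\alpha\in(0,\beta_i]$ one has $\{x\mid\tau'(v_i)(x)\ge\alpha\} = R_i$ of mass $\beta_i$, for $\alpha>\beta_i$ the set is empty, so $\llbracket\GENERALLY v_i\rrbracket_{M'} = \sup_{\alpha\in(0,\beta_i]}\min(\alpha,\beta_i) = \beta_i$ (and trivially $=0=\beta_i$ if $\beta_i=0$). Hence $M' \models \Gamma$ and $\lvert X'\rvert = k+1 \le n+1$, which is (more than) exponentially bounded; to obtain the ``w.l.o.g.\ $\lvert X\rvert = \overline{n}$'' clause I would take $\overline{n} := n+1$ and pad $X'$ with states of mass $0$, which leave every $g_i$ untouched. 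The reduction via \autoref{atoms:onestepexp} and this padding are routine; the only point that needs care is the verification that the value-sorted nested regions $R_i$ indeed realize all the $\beta_i$ at once, which is the computation just indicated.
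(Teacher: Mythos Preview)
Your proof is correct and takes a genuinely different route from the paper. The paper represents a one-step model by the vector in $[0,1]^{2n}$ recording, for each literal $\GENERALLY v_i \in \llparenthesis a_i, b_i\rrparenthesis$, the two probabilities $t(\{x: \tau(v_i)(x) \triangleright a_i\})$ and $t(\{x: 1-\tau(v_i)(x) \triangleleft^\circ 1-b_i\})$; it then observes that this vector is a convex combination of the $\{0,1\}^{2n}$-vectors attached to the singleton submodels and invokes Carath\'eodory's theorem to cut down to at most $2n+1$ states. Your argument is more elementary and more direct: you reproduce the realised values $\beta_i$ exactly by sorting them and building nested witnessing regions $R_i$ by hand, which yields the tighter bound $k+1\le n+1$ without any convex geometry. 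The trade-off is that the paper's vector representation is precisely the infrastructure reused in the proofs of \autoref{theorem:generallyexpbranching} and \autoref{theorem:generallypolynomialspace} (and, mutatis mutandis, for quantitative fuzzy $\ALC$), where the $\{0,1\}^{2n}$-description of successor states drives the enumeration of modal-rule conclusions; your construction, while self-contained for the present lemma, does not set up that machinery.
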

\begin{proof}[Proof (sketch)]
	The idea is first to eliminate modalities emulating atoms via \Cref{atoms:onestepexp} and then to find a suitable representation of one-step models, allowing us to apply Caratheodory's theorem. More specifically, we can represent each one-step model as an element of $[0,1]^{2n}$, indicating the state of fulfillment of each relevant one-step formula $\GENERALLY v$ and its counterpart $\GENERALLY \lnot v$. The bounds of $\Gamma$ for some $\GENERALLY v$ are then inequalities about the corresponding parts of the vector. By Caratheodory's theorem, this vector can instead be written as a sum of $2n+1$ elements of $\{0,1\}^{2n}$, which correspond to one-step models containing only a single state. Combining them into one model then gives us a model that is indistinguishable from the original model via the one-step formulas $\GENERALLY v$.
\end{proof}

\begin{example}\label{example:generally1}
	We illustrate the approach of the proof of \Cref{theorem:generallyexpbounded} with an example, modelling the internal condition of a large language model: We take $\mathsf{helpful}, \mathsf{compliant}, \mathsf{confidence}$ as atoms and see successors as the internal state after processing the next token. As tableau sequent to check, we take $\Gamma := \{\GENERALLY (\mathsf{helpful} \sqcap \mathsf{compliant}) \tin [0.68, 1], \GENERALLY \mathsf{confidence} \tin [0.2,0.4]\}$. This sequent expresses that a state satisfying it will generally transition into a state that is highly helpful, yet compliant, but that it will end up in a state where it is not very confident about its answer.
	After top-level decomposition, this sequent turns into $\Gamma^\sharp = \{\GENERALLY v_1 \tin [0.68, 1], \GENERALLY v_2 \tin [0.2,0.4]\}$. We investigate the one-step model $M=(X:= \{x_1, \ldots, x_6\}, \tau, t)$ with:
	\begin{gather*}
		\tau(v_1, x_i) = \frac{7-i}{6} \qquad \tau(v_2, x_i) = \frac{i}{6}\\
		t(x_1) = 0.5, t(x_2) = 0.2, \\
		t(x_3) = 0.1, t(x_4) = 0.1, t(x_5) = 0.05, t(x_6) = 0.05
	\end{gather*}
	We then have $\llbracket \GENERALLY v_1 \rrbracket = 0.7$ and $\llbracket \GENERALLY v_2 \rrbracket = \frac{1}{3}$. The vector representation of this model would be $w=(0.7, 1, 0.5, 0.7)^t$\sgnote{This $t$ is quite unfortunate... LS: Why? SG: As this symbol coincides with the $t$ for the distribution}. The first $0.7$ indicates the probability of transitioning to a state where $v_1$ is at least $0.68$, i.e. helping us satisfy the lower bound of $\GENERALLY v_1 \tin [0.68, 1]$. The $1$ meanwhile indicates that the probability of ending in a state that helps us satisfy the upper bound is $1$; after all, any state has $v_1 \leq 1$. The $0.5$ and the second $0.7$ tell us the probabilities of ending in states that satisfy $v_2 \geq 0.2$ and $v_2 \leq 0.4$, respectively, i.e. whether a state helps satisfy the lower or upper bound of the literal $\GENERALLY v_2 \tin [0.2,0.4]$. The inequalities on the vector representation that tell us whether a one-step model satisfies $\Gamma^\sharp$ or not are: $w_1 \geq 0.68$, $w_2 \geq 0$, $w_3 \geq 0.2$ and $w_4 \geq 0.6$, where $w_i$ is the $i$-th element of the vector. By Caratheodory's theorem, we can, however, also represent the vector representation $w$ of our one-step model as a sum of at most $5$ elements of $\{0,1\}^{4}$. In this case, we can do so with just $3$ elements:
	\begin{equation*}
		w = 0.2 \cdot (1,1,1,1)^t + 0.5 \cdot (1,1,0,1)^t + 0.3 \cdot (0,1,1,0)^t
	\end{equation*}
	The first element corresponds to a state where $v_1 \in [0.68,1], v_2 \in [0.2, 0.4]$, the second one where $v_1 \in [0.68,1], v_2 \in [0,0.2)$ and the third one to a state where $v_1 \in [0, 0.68), v_2 \in [0.2,1]$. Thus, there are models with just $3$ states, which have the same vector representation as the model $M$, which means $M$ satisfies $\Gamma^\sharp$ iff one of these models satisfies $\Gamma^\sharp$. More generally, a model satisfies $\Gamma^\sharp$ iff there is a model with the same vector representation and at most $2n+1$ states that satisfies $\Gamma^\sharp$.
\end{example}

\begin{lemma}\label{theorem:generallyexpbranching}
	The logic non-expansive fuzzy $\mathcal{L}_{\mathsf{gen}}$ is exponentially branching.
\end{lemma}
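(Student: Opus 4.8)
The plan is to construct, for every $n\in\mathbb{N}$ and uniformly in the exact one‑step sequent $\Gamma$ over one‑step formulas $L\subseteq\Lambda(V)$ with $\lvert L\rvert=n$, a modal tableau rule $\frac{\Gamma}{Q_1\mid\dots\mid Q_m}$ whose number of conclusions is at most $\tilde{n}:=2^{O(n^2)}$ while each $Q_i$ remains a set of at most $\overline{n}$ exact tableau sequents over $V$, so that the rule has the shape permitted for a one‑step rectangular logic. Exactly as in the proof of \autoref{theorem:generallyexpbounded}, by \autoref{atoms:expbranching} together with \autoref{atoms:onesteprectangular} it suffices to argue for the atom‑free logic of $\GENERALLY$ over $\Dist$; and we reuse from that proof the point that the one‑step exponential bound $\overline{n}$ for this logic is in fact only \emph{polynomial} (indeed linear) in $n$, which will be exactly what keeps the branching count singly exponential.

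List the literals of $\Gamma$ as $\GENERALLY v_1\in I_1,\dots,\GENERALLY v_r\in I_r$ with $I_k=\llparenthesis a_k,b_k\rrparenthesis$ and $r\le n$. A short computation using that $\alpha\mapsto t(\{x:f(x)\ge\alpha\})$ is non‑increasing and left‑continuous (with right limit $t(\{f>\alpha\})$ at $\alpha$) shows that, for any one‑step model $M=(X,\tau,t)$ with $t\in\Dist X$, one has $\llbracket\GENERALLY v_k\rrbracket_M\in I_k$ if and only if $t(A_k)\triangleright a_k$ and $t(B_k)\triangleleft b_k$, where $A_k$ and $B_k$ are threshold sets of the form $\{x:\tau(v_k)(x)\ge c\}$ or $\{x:\tau(v_k)(x)>c\}$ with $c$ an endpoint of $I_k$, and the sets and comparisons $\triangleright\in\{>,\ge\}$, $\triangleleft\in\{<,\le\}$ depend only on the shape of $I_k$. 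Hence whether $M\models\Gamma$ depends on $M$ only through the \emph{type} $\mathsf{tp}(x)\in\{0,1\}^{2r}$ of each state $x$ — its pattern of membership in $A_1,B_1,\dots,A_r,B_r$ — together with the total $t$‑mass carried by the states of each type, since $t(A_k)$ and $t(B_k)$ are sums of such type masses. There are at most $2^{2r}=2^{O(n)}$ types, and a type $\theta$ pins down, for each $k$, the interval of $[0,1]$ between consecutive elements of $\{0,a_k,b_k,1\}$ in which $\tau(v_k)(x)$ must lie, hence an exact tableau sequent $\sigma_\theta$ over $\{v_1,\dots,v_r\}$ (trivially extended to all of $V$); every non‑empty such cell is witnessed by some value of $\tau(v_k)(x)$.

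Call a set $S$ of types \emph{consistent} if $\lvert S\rvert\le\overline{n}$ and there are weights $w_\theta\ge 0$ ($\theta\in S$) summing to $1$ for which the resulting type‑mass profile meets all $2r$ inequalities of the previous paragraph, and take as conclusions the sequent‑sets $Q_S:=\{\sigma_\theta\mid\theta\in S\}$ for all consistent $S$. Soundness — that $M\models\Gamma$ forces some $Q_S$ to be realized in $M$ — follows from the Carathéodory step of \autoref{theorem:generallyexpbounded}: the mass profile of the witnessing distribution is a convex combination of at most $\overline{n}$ of the $0/1$ vectors $\mathsf{tp}(x)$, $x\in X$, and the set $S$ of types occurring in this combination is then consistent, with each $\sigma_\theta$, $\theta\in S$, realized by the corresponding state of $M$. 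Conversely, if a pair $(X,\tau)$ realizes all sequents of some $Q_S$, then choosing for each $\theta\in S$ a witnessing state $x_\theta\in X$ and setting $t:=\sum_{\theta\in S}w_\theta\,\delta_{x_\theta}$ yields a one‑step model $(X,\tau,t)$ with exactly the consistent type‑mass profile, so $(X,\tau,t)\models\Gamma$. For the count, the conclusions are indexed by subsets of the at most $2^{O(n)}$ types of size at most $\overline{n}$; since $\overline{n}$ is linear in $n$, there are at most $(\overline{n}+1)\cdot 2^{O(n)\cdot\overline{n}}=2^{O(n^2)}$ of them, and each $Q_S$ has $\lvert S\rvert\le\overline{n}$ sequents, as required.

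The main obstacle is establishing the equivalence of the second paragraph — that $\GENERALLY$ is ``type‑determined'', so that membership of $\llbracket\GENERALLY v_k\rrbracket_M$ in $I_k$ can be decided from the $t$‑measures of the finitely many threshold sets $A_k,B_k$. This requires the explicit analysis of the expression $\sup_\alpha\min(\alpha,t(\{f\ge\alpha\}))$ and the one‑sided continuity of $\alpha\mapsto t(\{f\ge\alpha\})$, and is also the place where a non‑identity conversion function $h$ would have to be accommodated. The rest is bookkeeping, but it genuinely relies on the size bound $\overline{n}$ imported from \autoref{theorem:generallyexpbounded} being polynomial rather than merely exponential in $n$: an exponential bound would inflate the number of relevant type‑sets, and hence of conclusions, to doubly exponential.
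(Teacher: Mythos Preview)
Your proof is correct and follows essentially the same route as the paper's: both arguments encode states by their $\{0,1\}^{2n}$ membership pattern in the threshold sets extracted from the interval bounds on $\GENERALLY v_k$, enumerate the (at most $\overline{n}$-element) collections of such patterns whose convex hull meets the resulting linear inequalities, and take the induced interval constraints on the $v_k$ as the conclusions of the modal tableau rule, with Carath\'eodory supplying the size bound. Your write-up is in fact more explicit than the paper's sketch about the type-determinedness of $\GENERALLY$ and about why the \emph{polynomial} (linear) value of $\overline{n}$ from \autoref{theorem:generallyexpbounded} is what keeps the conclusion count singly exponential.
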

\begin{proof}[Proof (sketch)]
	Following the proof of \Cref{theorem:generallyexpbounded}, one can represent successor states as elements of $\{0,1\}^{2n}$ that indicate whether a one-state, one-step model of this state would satisfy a specific bound or not. In the full model, this indicates whether a state helps satisfy a specific bound or counts against that bound. In total, this means there are at most $2^{2n \cdot (2n+1)}$ possible successor structures we have to investigate; all other successor structures are equivalent to one of this form by Caratheodory's theorem. Then we can further filter only for successor structures where we can find a suitable $t$ such that its vector representation can actually fulfill the inequalities implementing the bounds of $\Gamma$. One then only has to associate each element of $\{0,1\}^{2n}$ with an exact tableau sequent over $V$, which tells us in which interval each $v$ is allowed to be for this particular state. Collecting all these tableau sequents for a configuration then yields one conclusion of a modal tableau rule, and doing so for all possibly relevant configurations yields at most an exponential number of such conclusions. This procedure, then, is our rule scheme.
\end{proof}

\begin{example}
We continue the example from \Cref{example:generally1}. We write:
\begin{align*}
	\gamma_1 &= \{v_1 \tin [0.68,1], v_2 \tin [0.2,0.4]\} \\ \gamma_2 &= \{v_1 \tin [0,0.68), v_2 \tin [0.2,0.4]\}\\
	\gamma_3 &= \{v_1 \tin [0.68,1], v_2 \tin [0, 0.2)\} \\ \gamma_4 &= \{v_1 \tin [0.68,1], v_2 \tin (0.4,1]\}\\
	\gamma_5 &= \{v_1 \tin [0,0.68), v_2 \tin [0, 0.2)\} \\ \gamma_6 &= \{v_1 \tin [0,0.68), v_2 \tin (0.4,1]\}
\end{align*}
Here, we have already filtered out any immediately unsatisfiable sequents, e.g. one where the upper bound $v_1 \leq 1$ is not satisfied. Then, the conclusions to the modal tableau rule for $\Gamma^\sharp$ would be any combination $\pi$ of at most $5$ elements of $\gamma_i$, such that there is a distribution $t$ with: $w = \sum_{\gamma_i \in \pi} t (\gamma_i) \gamma'_i$, where $\gamma'_i$ is the vector representation of $\gamma$, and we have $w_1 \geq 0.68$, $w_2 \geq 0$, $w_3 \geq 0.2$ and $w_4 \geq 0.6$, where $w_i$ is the $i$-th element of $w$. For example, in \Cref{example:generally1}, we have seen that the combination $\gamma_1, \gamma_3, \gamma_6$ (with $\gamma'_1 = (1,1,1,1)^t$, $\gamma'_3 = (1,1,0,1)^t$ and $\gamma'_6 = (0,1,1,0)^t$) has such a distribution.
\end{example}

\begin{theorem}\label{theorem:generallypolynomialspace}
	The logic non-expansive fuzzy $\mathcal{L}_{\mathsf{gen}}$ is polynomial-space bounded.
\end{theorem}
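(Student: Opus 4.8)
The plan is to reduce to the atom-free fragment and then show that each individual tableau sequent of a conclusion of the modal tableau rule constructed for \autoref{theorem:generallyexpbranching} can be produced in polynomial space. By \autoref{atoms:polynomialspacebounded} it suffices to treat the atom-free logic~$\mathcal{L}'$ with the single modality~$\GENERALLY$ over the distribution functor~$\Dist$; that $\mathcal{L}'$ is exponentially branching follows by the same Caratheodory argument that proves \autoref{theorem:generallyexpbranching}, so the property is well-posed. Fix a tableau sequent~$\Gamma$ over one-step formulas $L\subseteq\Lambda(V)$ with $\lvert L\rvert=n$; by \autoref{remark:freshvs} we may assume $V=\{v_1,\dots,v_n\}$ and $L=\{\GENERALLY v_1,\dots,\GENERALLY v_n\}$, with $\Gamma(\GENERALLY v_i)=I_i$ for intervals~$I_i$ whose rational endpoints $a_i\le b_i$ have bit-lengths counted in the combined syntactic size~$N$ of~$\Gamma$ (if some $I_i=\emptyset$ the rule has no conclusions and nothing is to be done).

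The second step recalls the shape of this modal tableau rule. Call $c\in\{0,1\}^{2n}$ a \emph{profile}: its two coordinates for index~$i$ record, for a state of a single-state model, whether its $v_i$-value lies above the comparison point~$a_i$ and whether it lies above~$b_i$. The crucial fact --- already exploited in \autoref{theorem:generallyexpbounded} --- is that over a probabilistic mixture of such single-state models the value $\llbracket\GENERALLY v_i\rrbracket$ is a \emph{linear} functional of the weight vector $\mu=(\mu_1,\dots,\mu_{\overline{n}})$, with $\overline{n}=2n+1$, namely the total weight of the slots whose profile has the relevant bit set; hence ``$\llbracket\GENERALLY v_i\rrbracket\in I_i$'' translates into two linear (in)equalities in~$\mu$, which together with $\mu\ge 0$ and $\sum_k\mu_k=1$ form a linear system $\mathcal{S}(\sigma)$ attached to any \emph{profile-selection} $\sigma=(c_1,\dots,c_{\overline{n}})$. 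The conclusions of the modal tableau rule are exactly the selections~$\sigma$ with $\mathcal{S}(\sigma)$ feasible, and the $j$-th exact tableau sequent of conclusion~$\sigma$ assigns to each $v_i$ the band of $[0,1]$ determined by the two $v_i$-bits of~$c_j$ (its precise half-openness being a routine function of those bits and of the endpoint types of~$I_i$), and $[0,1]$ to any variable not occurring in~$L$.

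The third step is the resource bound. Given an index~$i$ into the list of conclusions and $j\le\overline{n}$, I would locate the $i$-th feasible profile-selection by scanning all $2^{2n\overline{n}}$ profile-selections in numeric (lexicographic) order using a binary counter of $O(n^2)$ bits, testing each for feasibility of~$\mathcal{S}(\sigma)$; feasibility is a linear program with $\overline{n}$ variables, $O(n)$ constraints, $0/1$ left-hand coefficients and right-hand sides copied verbatim from~$\Gamma$, hence decidable in time --- and therefore in space --- polynomial in~$N$ (strict inequalities arising from half-open~$I_i$ are removed by the standard LP reduction). Once $\sigma$ is located, reading off $c_j$ is bit manipulation and the bands are obtained directly from~$a_i,b_i$, so $Q_i(j)$ is assembled within polynomial space. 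I expect the only genuine obstacle to be the verification that feasibility of~$\mathcal{S}(\sigma)$ indeed captures the existence of a completing distribution~$t$: this is exactly where the linearity of~$\GENERALLY$'s semantics over single-state mixtures from \autoref{theorem:generallyexpbounded} is indispensable, since without it the admissibility test need not be a polynomial-size linear program; the remainder is elementary bookkeeping with polynomial-size counters. Together with \autoref{thm:main} this yields that satisfiability in~$\mathcal{L}_{\mathsf{gen}}$ is decidable in~$\PSPACE$.
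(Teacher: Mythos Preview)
Your proposal is correct and follows essentially the same approach as the paper: iterate over profile-selections (the paper's ``configurations of successor structures'') using a polynomial-size counter, test each for feasibility via linear programming, stop at the $i$-th feasible one, and then read off the $j$-th sequent from the corresponding profile. The paper's proof is terser but makes exactly the same moves; your version simply spells out the atom reduction via \autoref{atoms:polynomialspacebounded} and the bit-length of the counter more explicitly.
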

\begin{proof}
	Following the proofs of \Cref{theorem:generallyexpbounded} and \Cref{theorem:generallyexpbranching}, we can compute the $i$-th tableau sequent of the $n$-th conclusion of a modal tableau rule in the following way: Iterate over the possible configurations of successor structures and check for each if it can solve the inequalities that correspond to the bounds of the original sequent $\Gamma$. Take the $n$-th configuration that can solve the inequalities. Finally, take the vector representation of the $i$-th successor state and construct the tableau sequent for it by using the inequalities it has to satisfy or not satisfy. Computing this for a configuration can be done in nondeterministic polynomial time as a linear programming problem.
\end{proof}
\noindent By \Cref{thm:main}, we thus obtain
\begin{corollary}\label{theorem:generallypspace}
  The satisfiability problem of non-expansive fuzzy
  $\mathcal{L}_{\mathsf{gen}}$ is in $\PSPACE$.
\end{corollary}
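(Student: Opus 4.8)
The plan is to obtain the corollary directly as an instance of the generic \PSPACE{} bound established in \autoref{thm:mainCorrect} and \autoref{thm:main}: together these say that in any polynomial-space bounded logic, \autoref{alg:sat} correctly decides satisfiability of a tableau sequent over a set $L$ of formulas in space polynomial in the combined syntactic size of~$L$. Hence it suffices to verify that non-expansive fuzzy $\Lgen$ satisfies the hypothesis, and that the natural formulation of its satisfiability problem is a tableau-sequent satisfiability question of polynomially bounded combined syntactic size. The first point is exactly \autoref{theorem:generallypolynomialspace}, which rests on \autoref{theorem:generallyexpbounded} (one-step exponentially bounded) and \autoref{theorem:generallyexpbranching} (exponentially branching); so the genuinely substantive work — the Caratheodory-based bounds on one-step models and on the branching of the modal tableau rule — has already been discharged, and what remains is bookkeeping.

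First I would record that $\Lgen$ is, in the terms of \autoref{nonexpfuzlog}, the logic $\mathcal{L}+\mathsf{At}$ where $\mathcal{L}$ has the single modal operator $\GENERALLY$ interpreted over~$\Dist$ and $\mathsf{At}$ is the ambient set of atoms (present implicitly, as noted, to prevent triviality); by \autoref{atoms:onestepexp}, \autoref{atoms:expbranching} and \autoref{atoms:polynomialspacebounded} all three relevant properties transfer from~$\mathcal{L}$ to $\mathcal{L}+\mathsf{At}$, so \autoref{theorem:generallypolynomialspace} does apply with atoms present. Second, I would observe that the satisfiability question for a formula~$\phi$ — say threshold satisfiability, asking whether $\llbracket\phi\rrbracket_M(x)\bowtie p$ is achievable — reduces in linear time to satisfiability of the singleton tableau sequent $\{\phi\in I\}$ for the interval~$I$ determined by $\bowtie$ and~$p$, and that this sequent has combined syntactic size linear in $\lvert\phi\rvert$ (only the binary-encoded shift constants and the threshold~$p$ contribute). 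Feeding this into \autoref{thm:main} yields a decision procedure running in space polynomial in $\lvert\phi\rvert$, i.e.\ \PSPACE{} membership.

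I do not expect a real obstacle: all the hard estimates are already in place in the preceding results. The only point requiring a little care is confirming that the "combined syntactic size'' measure in \autoref{thm:main} is polynomially related to the natural input size of an $\Lgen$ formula — in particular that the binary encoding of the rational shift constants~$c$ is handled consistently throughout the reduction — so that a polynomial-space bound in that measure genuinely witnesses a \PSPACE{} algorithm for $\Lgen$ satisfiability.
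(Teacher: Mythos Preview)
Your proposal is correct and matches the paper's approach: the corollary is obtained directly by combining \autoref{theorem:generallypolynomialspace} with \autoref{thm:main} (and \autoref{thm:mainCorrect}), and your additional remarks on atoms and on casting threshold satisfiability as a singleton tableau sequent are accurate bookkeeping that the paper leaves implicit.
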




\begin{example}
\begin{enumerate}[wide]
	\item We prove that the tableau sequent $\Gamma := \{\GENERALLY p \tin [0,c], \GENERALLY q \tin [0,c], \GENERALLY (p \sqcap q) \tin (c,1]\}$ is not satisfiable, which means it is valid, that in states where $\GENERALLY p \tin [0,c], \GENERALLY q \tin [0,c]$ is satisfied, we also have $\GENERALLY (p \sqcap q) \tin [0,c]$. Doing top-level decomposition gives us $\GENERALLY v_1 \tin [0,c], \GENERALLY v_2 \tin [0,c], \GENERALLY v_3 \tin (c,1]$. Then the tableau sequents for the possible successor states (filtering out immediately unsatisfiable ones) would be:
	\begin{gather*}
		q_1 := \{v_1 \tin [0,c], v_2 \tin [0,c], v_3 \tin (c,1]\}\\
		q_2 := \{v_1 \tin (c,1], v_2 \tin [0,c], v_3 \tin (c,1]\}\\
		q_3 := \{v_1 \tin [0,c], v_2 \tin (c,1], v_3 \tin (c,1]\}\\
		q_4 := \{v_1 \tin [0,c], v_2 \tin [0,c], v_3 \tin [0,c)\}\\
		q_5 := \{v_1 \tin (c,1], v_2 \tin (c,1], v_3 \tin (c,1]\}\\
		q_6 := \{v_1 \tin (c,1], v_2 \tin [0,c], v_3 \tin [0,c)\}\\
		q_7 := \{v_1 \tin [0,c], v_2 \tin (c,1], v_3 \tin [0,c)\}\\
		q_8 := \{v_1 \tin (c,1], v_2 \tin (c,1], v_3 \tin [0,c)\}
	\end{gather*}
	The conclusions of the modal tableau rule, are then all combinations of $7$ such elements, where we can choose a distribution such that it yields a one-step model satisfying the sequent.
	Taking a $q_i$ and substituting the variables for their respective formulas gives us the tableau sequents $q'_i$.
	Now we can show that $q'_1, q'_2, q'_3, q'_8$ are not satisfiable. We show this explicitly for $q'_1$:
	{
		\setlength{\jot}{0pt}
	\begin{gather*}
		\cfrac{
			\begin{array}{l}
				p \tin [0, c], q \tin [0,c], p \sqcap q \tin (c,1]
			\end{array}
		}{}(\sqcap)\\
		\cfrac{
			\begin{array}{l}
				p \tin [0, c], q \tin [0,c],\\ p \tin (c,1], q \tin (c,1]
			\end{array}
		}{}(\cap)
		\hspace*{6mm}
		\cfrac{
			\begin{array}{l}
				p \tin [0, c], q \tin [0,c],\\ p \tin (c,1], q \tin (c,1]
			\end{array}
		}{}(\cap)\\			
		\cfrac{
			\begin{array}{l}
				p \tin \emptyset, q \tin \emptyset
			\end{array}
		}{\bot}(\text{Ax})
		\hspace*{20mm}
		\cfrac{
			\begin{array}{l}
				p \tin \emptyset, q \tin \emptyset
			\end{array}
		}{\bot}(\text{Ax})
	\end{gather*}
}
	The same can be shown for $q'_2, q'_3, q'_8$, meaning that only combinations including $q_5$ can potentially yield a distribution satisfying $\GENERALLY v_3 \tin (c,1]$. However, this means we would have to assign a probability greater than $c$ to the state realizing $q_5$ in such a distribution. This then would also immediately imply that we have $\GENERALLY v_1 \tin (c,1], \GENERALLY v_2 \tin (c,1]$. As such, $\Gamma$ is not satisfiable.
	\item We now investigate $\Gamma := \{\GENERALLY p \tin [0,c], \GENERALLY q \tin [0,c], \GENERALLY (p \sqcup q) \tin (c,1]\}$. After top-level decomposition, we obtain the same tableau sequents $q_1, \ldots, q_8$ and conclusions as in the previous example, with the only difference that we now have that $v_3$ will be substituted with $(p \sqcup q)$ instead of $(p \sqcap q)$. This yields us $q'_1, \ldots, q'_8$ by resubstituting the formulas for their variables. One easily spots that $q'_1, q'_6, q'_7, q'_8$ are unsatisfiable. On the other hand, we immediately see that $q'_2, \ldots, q'_5$ are satisfiable. We can thus reduce our search to distributions over $4$ states $x_1, \ldots, x_4$, where $x_i$ realizes $q'_{i+1}$. We can assign at most probability $c$ to the state pairs $(x_1, x_4)$ and $(x_2, x_4)$ to ensure the resulting model will still satisfy the literals $\GENERALLY v_1 \tin [0,c]$ and $\GENERALLY v_2 \tin [0,c]$ respectively. At the same time, we have to assign more than probability $c$ to the state triple $(x_1, x_2, x_4)$ to satisfy $\GENERALLY v_3 \tin (c,1]$. Taking $\epsilon > 0$ small enough, we can set the probability for state $x_1$ and $x_2$ as $\frac{c}{2} + \epsilon$, respectively, $x_4$ as $0$, and for $x_3$ as $1 - c - 2 \epsilon$. This then yields a distribution $t$, such that we obtain the one-step model $M=(X:= \{x_1, x_2, x_3, x_4\}, \tau, t)$ for all $c \neq 1$, where:
	\begin{gather*}
		\tau(v_1, x_i) := \begin{cases*}
			0, &\text{if }{i}={2,3}\\
			\frac{1+c}{2}, &\text{else}
		\end{cases*}\\
		\tau(v_2, x_i) := \begin{cases*}
			0, &\text{if }{i}={1,3}\\
			\frac{1+c}{2}, &\text{else}
		\end{cases*}\\
		\tau(v_3, x_i) := \begin{cases*}
			0, &\text{if }{i}={3}\\
			\frac{1+c}{2}, &\text{else}
		\end{cases*}
	\end{gather*}
	\item We prove that $\Gamma_1 := \{\lnot \GENERALLY p \tin [c,c], \GENERALLY \lnot p \tin [0,c)\}$ and $\Gamma_2 := \{\lnot \GENERALLY p \tin [c,c], \GENERALLY \lnot p \tin (c,1]\}$ are not satisfiable, i.e. $\lnot \GENERALLY p \tin [c,c]$ implies $\GENERALLY \lnot p \tin [c,c]$. After top-level decomposition of $\Gamma_1$ and eliminating the outer negation, we obtain $\GENERALLY v_1 \tin [1-c,1-c], \GENERALLY v_2 \tin [0,c)$. The tableau sequents for the possible successor states (once again filtering out immediately unsatisfiable ones) are:
	\begin{gather*}
		q_1 := \{v_1 \tin [1-c,1-c], v_2 \tin [0,c)\}\\
		q_2 := \{v_1 \tin [1-c,1-c], v_2 \tin [c,1]\}\\
		q_3 := \{v_1 \tin [0,1-c), v_2 \tin [0,c)\}\\
		q_4 := \{v_1 \tin [0,1-c), v_2 \tin [c,1]\}\\
		q_5 := \{v_1 \tin (1-c,1], v_2 \tin [0,c)\}\\
		q_6 := \{v_1 \tin (1-c,1], v_2 \tin [c,1]\}
	\end{gather*}
	After resubstituting the formulas for $v_1$ and $v_2$ and eliminating the negation, we immediately see, that $q_1, q_3, q_6$ are not satisfiable. The vector representations of $q_2, q_4, q_5$ are $(1,1,1,0)^t$, $(0,1,1,0)^t$ and $(1,0,1,1)^t$ respectively, and the inequalities the vector representation $w$ of a possible successor structure has to satisfy are: $w_1 \geq 1-c$, $w_2 \geq c$, $w_3 \geq 0$ and $w_4 > 1-c$. However, this means we would have to allocate at least probability $c$ to the pair $(q_2, q_4)$, but at the same time, more than probability $1-c$ to $q_5$, so we can never find a suitable distribution. As such, $\Gamma_1$ is unsatisfiable. The proof for $\Gamma_2$ works analogously. Combining the two gives us that $\lnot \GENERALLY p \tin [c,c]$ implies $\GENERALLY \lnot p \tin [c,c]$ and vice versa, so we obtain that $\lnot \GENERALLY p$ and $\GENERALLY \lnot p$ are equivalent. This also obviously extends to arbitrary formulas instead of just $p$.
	\end{enumerate}
\end{example}

\begin{remark}\label{remark:lgengeneral}
	Even when not restricting ourselves to $h = \operatorname{id}$, we have that the logic $\mathcal{L}^h_\mathsf{gen}$ is polynomial-space bounded as long as there exists a terminating linear programming algorithm for $h$ and every $\alpha \in [0,1]$ finding the smallest or biggest $x \in [0,1]$ such that $h(x)=\alpha$. The idea is that the arguments of the proofs of \Cref{theorem:generallyexpbounded} and \Cref{theorem:generallyexpbranching} can still be applied with minimal adjustments.
\end{remark}

\begin{remark}
	The intuitive reason for the polynomial-space boundedness of the $\GENERALLY$ operator is that when checking if some formula $\GENERALLY \phi$ is in an interval, it only sums the successorship probabilities of some states satisfying a condition rather than doing full arithmetic on values of the successor states themselves, i.e. it does not depend on the actual values of formulas in the successor states but only on them satisfying the conditions to make their successorship degree be part of the sum or not. In terms of a (tableau) graph visualization, this means that we have independent branches for the successors.
\end{remark}

\begin{remark}
  Another modality commonly used with the endofunctor
  $T = \mathcal{D}$ is the expected value operator $\PROBABLY$,
  interpreted by the predicate lifting:
	\begin{equation*}
		(\llbracket \PROBABLY \rrbracket_X (f)) \mu := \sum_{x \in X} \mu(x) f(x)
	\end{equation*}
	for $f \colon X \rightarrow [0,1]$ and $\mu \in \mathcal{D}X$.
	While this logic is one-step exponentially bounded by similar arguments as above, one-step rectangularity fails; specifically, because tableau sequents describing the successors are arithmetically entangled, which would lead to infinitely many conclusions for a modal tableau rule. More specifically, even something as simple as $\PROBABLY v \tin [0.5, 0.5]$ when assuming $\lvert X \rvert = 2$ would lead to infinitely many conclusions for a modal tableau rule; each determined by numbers $q,r$ giving the value of $v$ in the two successor states such that there are $s,t \in [0,1]$ with $s+t=1$ and $q \cdot s+r \cdot t = 0.5$. However, every pair $(q,r)$ where at least one of $q$ or $r$ is greater than or equal to $0.5$ has such successorship values $s$ and $t$.
\end{remark}

\begin{example}
	We illustrate differences in logic consequences between the logic of generally and the logic with the expected value operator: We investigate the tableau sequents $\Gamma := \{\GENERALLY p \tin [0.9,0.9], \GENERALLY (p \sqcap 0.8) \tin [0,0.8)\}$ and $\Gamma' := \{\PROBABLY p \tin [0.9,0.9], \PROBABLY (p \sqcap 0.8) \tin [0,0.8)\}$, which differ only by switching modalities. It is easy to see that $\Gamma$ is not satisfiable, i.e. $\GENERALLY (p \sqcap 0.8) \tin [0.8,1]$ is valid under the assumption $\GENERALLY p \tin [0.9,0.9]$. However, $\Gamma'$ is satisfiable: Take a model with three states $x, y_1, y_2$ and where $\tau(x)(y_1) = 0.9, \tau(x)(y_2) = 0.1$. Put the value of $p$ in $y_1$ as $1$ and in $y_2$ as $0$. Then we have $\llbracket \PROBABLY p \rrbracket (x) = 0.9$, but also $\llbracket \PROBABLY (p \sqcap 0.8)\rrbracket (x) = 0.72$. This shows that validity in one logic is not equivalent to validity in the other.
\end{example}

\section{Quantitative Fuzzy $\ALC$}\label{quantfuzzy}
As mentioned earlier, the logic of `generally' is closely related to quantitative fuzzy $\ALC$; they are defined over the same probabilistic models, and the modality $\GENERALLY$ is just swapped for the modalities $\boldsymbol{M}_p$, which tell us the degree of satisfaction of a formula with probability bigger than $p$ among successor states. One could see this as having a (non-continuous) conversion function that assigns $0$ to probabilities smaller than or equal to $p$ and $1$ for probabilities larger than $p$. Following our treatment of the non-expansive logic of `generally', we prove that non-expansive quantitative Fuzzy $\ALC$ is also polynomial-space bounded with similar arguments.

\begin{lemma}\label{quantitativeexpbounded}
	Non-expansive quantitative fuzzy $\ALC$ is one-step exponentially bounded.
\end{lemma}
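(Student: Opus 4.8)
The plan is to follow the proof of \autoref{theorem:generallyexpbounded}. By \autoref{atoms:onestepexp} it suffices to treat the atom-free logic, i.e.\ the logic over the functor $\Dist$ with modalities $\boldsymbol{M}_p$; so fix a satisfiable tableau sequent $\Gamma$ over one-step formulas $L\subseteq\Lambda(V)$ with $\lvert L\rvert=n$, with literals $(\boldsymbol{M}_{p_i}v_i\in I_i)$, and a one-step model $M=(X,\tau,t)$ with $M\models\Gamma$. The crucial idea is \emph{not} to prune $M$, but to build a small model that reproduces the at most $n$ modal truth values $\alpha_i:=\llbracket\boldsymbol{M}_{p_i}v_i\rrbracket_M$, each of which already lies in $I_i$; the converse direction -- that a small model is a model -- is immediate.

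For the construction, group the literals by variable. For a variable $v$ occurring in $L$, let $p^v_1<\dots<p^v_{r_v}$ be the distinct thresholds appearing in literals on $v$ (dropping the trivial threshold $p=1$, for which $\boldsymbol{M}_1 v$ is constantly $0$ and the literal merely imposes $0\in I$), and put $\beta^v_j:=\llbracket\boldsymbol{M}_{p^v_j}v\rrbracket_M$. Since a smaller threshold yields a larger supremum, $\beta^v_1\ge\dots\ge\beta^v_{r_v}$ automatically. Using this monotonicity one writes down an explicit distribution $\nu_v$ on the at most $r_v+1$ values $\{0,\beta^v_1,\dots,\beta^v_{r_v}\}$ whose tail function $\alpha\mapsto\nu_v(\{a\mid a\ge\alpha\})$ is a non-increasing step function equal to $1$ on $[0,\beta^v_{r_v}]$, taking a value in $(p^v_j,p^v_{j+1}]$ on $(\beta^v_{j+1},\beta^v_j]$ for $1\le j<r_v$, and $0$ on $(\beta^v_1,1]$; reading off the point masses of $\nu_v$ as the jumps of this function, one has $\nu_v(\{a\mid a\ge\alpha\})>p^v_j$ exactly for $\alpha\in[0,\beta^v_j]$. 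Now let $M'=(X',\tau',t')$ where $X'=\prod_v\mathrm{supp}(\nu_v)$, a state $x=(a_v)_v$ has $\tau'(v)(x):=a_v$, and $t':=\bigotimes_v\nu_v$. Then for every literal, $\llbracket\boldsymbol{M}_{p^v_j}v\rrbracket_{M'}=\sup\{\alpha\mid\nu_v(\{a\mid a\ge\alpha\})>p^v_j\}=\beta^v_j\in I$, so $M'\models\Gamma$, and $\lvert X'\rvert=\prod_v(r_v+1)\le 2^{\sum_v r_v}\le 2^n$, which we take as $\overline n$ (padding with null-weight states to reach $\lvert X'\rvert=\overline n$ is harmless). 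If one wishes, Carathéodory's theorem then trims $M'$ to polynomially many states exactly as in \autoref{theorem:generallyexpbounded}, but an exponential bound already suffices here.

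The step where $\boldsymbol{M}_p$ genuinely departs from $\diamondsuit$ and $\GENERALLY$, and the reason for reasoning top-down from the target values $\beta^v_j$ rather than by pruning $M$, is that $\boldsymbol{M}_p$ uses a \emph{strict} threshold ``$>p$'' and its value is a supremum that need not be attained: the clean equivalence ``$\llbracket\boldsymbol{M}_p v\rrbracket_M\ge c$ iff $t(\{x\mid\tau(v)(x)\ge c\})>p$'' holds in models with finitely many successors, but can fail in a model where infinitely many successors carry values accumulating at $c$ from below (there the relevant successor mass can equal $p$ exactly while the supremum still reaches $c$), so a submodel of $M$ need not preserve satisfaction of $\Gamma$. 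The remaining work is routine bookkeeping: checking that the step-function recipe yields a genuine probability distribution, and handling the degenerate cases where some $\beta^v_j$ equals $0$ or $1$, where neighbouring $\beta^v_j$ coincide, or where two literals share a threshold.
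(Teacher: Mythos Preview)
Your proof is correct and takes a genuinely different route from the paper's. The paper mirrors its argument for $\GENERALLY$ almost verbatim: it recasts each literal $\boldsymbol{M}_p v\in\llparenthesis a,b\rrparenthesis$ as two inequalities on the masses $t(\{x\mid\tau(v)(x)\triangleright a\})>p$ and $t(\{x\mid\tau(v)(x)\triangleleft b\})>1-p$, encodes any one-step model as a vector in $[0,1]^{2n}$, observes that this vector is a convex combination of the singleton-model vectors in $\{0,1\}^{2n}$, and applies Carath\'eodory's theorem to cut down to $2n{+}1$ such singletons, which are then reassembled into a $(2n{+}1)$-state model. You instead fix the target truth values $\beta^v_j=\llbracket\boldsymbol{M}_{p^v_j}v\rrbracket_M$, manufacture for each variable an explicit marginal $\nu_v$ with prescribed tail behaviour, and take the product; this is entirely elementary and avoids Carath\'eodory, at the price of a looser bound ($2^n$ rather than $2n{+}1$, though either suffices for the lemma). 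Your caution about the equivalence $\llbracket\boldsymbol{M}_p v\rrbracket\ge c\Leftrightarrow t(\{x\mid\tau(v)(x)\ge c\})>p$ failing in infinite discrete models is well placed, and building $M'$ top-down from the $\beta^v_j$ is robust against it. On the other hand, the paper's vector/Carath\'eodory encoding is exactly what gets reused in the proofs of \autoref{quantitativeexpbranching} and \autoref{quantitativepolynomialspace}, so their approach pays dividends downstream where your construction would need to be replaced.
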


\begin{lemma}\label{quantitativeexpbranching}
	Non-expansive quantitative fuzzy $\ALC$ is exponentially branching.
\end{lemma}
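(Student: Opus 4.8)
The plan is to transfer the argument for \autoref{theorem:generallyexpbranching} from $\GENERALLY$ to the modalities $\boldsymbol{M}_p$, reusing the normal-form construction from the proof of \autoref{quantitativeexpbounded}. By \autoref{atoms:expbranching} we may ignore atoms and work with $T=\Dist$ and $\Lambda=\{\boldsymbol{M}_p\mid p\in[0,1]\cap\Rat\}$. Fix a tableau sequent $\Gamma$ over one-step formulas $L\subseteq\Lambda(V)$ with $\lvert L\rvert=n$; we may assume $\Gamma$ is exact, and if some literal of $\Gamma$ carries the empty interval then $\Gamma$ is unsatisfiable and the modal tableau rule with no conclusions serves, so assume otherwise and write the literals of $\Gamma$ as $\boldsymbol{M}_{p_1}v_1\in I_1,\dots,\boldsymbol{M}_{p_k}v_k\in I_k$ with $k\le n$ and $I_i=\llparenthesis a_i,b_i\rrparenthesis$.

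The one genuinely new ingredient, compared with $\mathcal{L}_{\mathsf{gen}}$, is to reduce satisfaction of $\Gamma$ to finitely many threshold conditions. In a one-step model $M=(X,\tau,t)$, the value $\llbracket\boldsymbol{M}_{p_i}v_i\rrbracket_M=\sup\{\alpha\mid t(\{x\mid\tau(v_i)(x)\ge\alpha\})>p_i\}$ depends only on the non-increasing map $\alpha\mapsto g_i(\alpha):=t(\{x\mid\tau(v_i)(x)\ge\alpha\})$, which for finitely supported $t$ is a left-continuous step function. A short case analysis on whether the endpoints of $I_i$ are open or closed then shows that $\llbracket\boldsymbol{M}_{p_i}v_i\rrbracket_M\in I_i$ is equivalent to the conjunction of a lower-bound condition $t(\{x\mid\tau(v_i)(x)\bowtie_i^{\mathrm{lo}}a_i\})>p_i$ and an upper-bound condition $t(\{x\mid\tau(v_i)(x)\bowtie_i^{\mathrm{up}}b_i\})\le p_i$, with $\bowtie_i^{\mathrm{lo}},\bowtie_i^{\mathrm{up}}\in\{\ge,>\}$; in particular no auxiliary truth-value thresholds beyond the $a_i,b_i$ are needed. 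Thus there are at most $2n$ relevant \emph{threshold questions}, each of the form ``$\tau(v)(x)\ge\theta$?'' or ``$\tau(v)(x)>\theta$?'', so every $x\in X$ receives a \emph{type} $\sigma(x)$, a vector of at most $2n$ bits recording its answers, and satisfaction of $\Gamma$ becomes a system of inequalities ``coordinate $j$ of the \emph{aggregate vector} $\sum_{x\in X}t(x)\cdot\sigma(x)$ is $\bowtie_j p_j$''. I expect this translation to be the main obstacle: because of the strict inequality $>p$ in the semantics of $\boldsymbol{M}_p$ together with the four open/closed endpoint combinations, one must verify that in a finitely supported model the defining supremum is, or is not, attained precisely at $a_i$ and $b_i$ respectively, so that exactly two threshold questions per literal suffice. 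Everything else then runs verbatim as for $\mathcal{L}_{\mathsf{gen}}$.

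Exactly as in \autoref{quantitativeexpbounded}, Carathéodory's theorem -- applied to the aggregate vector as a convex combination of the (at most $2n$-dimensional) $0/1$-points $\{\sigma(x)\mid x\in X\}$ -- lets us replace $M$ by a one-step model with at most $\overline{n}:=2n+1$ states, one per type appearing in a Carathéodory decomposition, without changing the aggregate vector and hence without changing satisfaction of $\Gamma$. From this we read off a modal tableau rule. Call a set $C$ of at most $\overline{n}$ realizable types a \emph{configuration} if the linear system expressing ``the aggregate of some probability distribution supported on one state per type in $C$ has coordinates $\bowtie_j p_j$'' is feasible -- a linear-programming question depending on $C$ alone. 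To each type $\sigma$ associate the exact tableau sequent $Q_\sigma$ over $V$ assigning to $v\in V$ the intersection of the half-open subintervals of $[0,1]$ dictated by the answers of $\sigma$ to the threshold questions involving $v$, and put $Q_C:=\{Q_\sigma\mid\sigma\in C\}$. Then $\lvert Q_C\rvert\le\overline{n}$; a pair $(X,\tau)$ realizes all sequents of $Q_C$ precisely when its states carry the types in $C$ (since $Q_\sigma$ pins $\sigma$ down), in which case the feasible distribution for $C$ yields a one-step model over $X$ satisfying $\Gamma$; conversely, any one-step model of $\Gamma$ normalizes to one whose set $C$ of occurring types is a configuration, and which realizes all sequents of $Q_C$. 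Hence $\{Q_C\mid C\text{ a configuration}\}$ is a modal tableau rule for $\Gamma$ with conclusions of size $\le\overline{n}$ (so this also establishes one-step rectangularity).

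It remains to count conclusions. There are at most $2^{2n}$ types, and each configuration is a subset of size $\le 2n+1$, so the number of configurations -- hence of conclusions of this modal tableau rule -- is at most $\sum_{j\le 2n+1}\binom{2^{2n}}{j}=2^{O(n^2)}$, which is exponential in $n$ and depends only on $n$. This exhibits the required modal tableau rule and shows that non-expansive quantitative fuzzy $\ALC$ is exponentially branching. As a by-product, deciding feasibility of a configuration and computing the $Q_\sigma$ are linear-programming tasks, which will be reused for polynomial-space boundedness.
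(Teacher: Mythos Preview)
Your proof is correct and follows essentially the same route as the paper: reduce each literal $\boldsymbol{M}_{p_i}v_i\in I_i$ to two threshold inequalities on $t$, encode states by their $\{0,1\}^{2n}$ answer vector, apply Carath\'eodory to bound to $\overline{n}=2n+1$ states, and take the feasible type configurations as the conclusions of the modal tableau rule. The only cosmetic difference is that you treat configurations as \emph{sets} of at most $2n{+}1$ types (giving the bound $2^{O(n^2)}$) whereas the paper indexes them as ordered $(2n{+}1)$-tuples of types; this yields the same rule up to redundant conclusions.
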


\begin{theorem}\label{quantitativepolynomialspace}
	Non-expansive quantitative fuzzy $\ALC$ is polynomial-space bounded.
\end{theorem}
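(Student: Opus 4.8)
The plan is to mirror the argument for the non-expansive logic of `generally' (\autoref{theorem:generallypolynomialspace}), reusing wholesale the representation of one-step models set up in the proofs of \autoref{quantitativeexpbounded} and \autoref{quantitativeexpbranching}. By \autoref{atoms:polynomialspacebounded} we may assume there are no atom-emulating modalities. Recall then that, for a tableau sequent $\Gamma$ over one-step formulae $L\subseteq\Lambda(V)$ with $\lvert L\rvert=n$, a one-step model is determined up to indistinguishability by the formulae in $L$ by a point in $[0,1]^{2n}$ recording the values of $\boldsymbol{M}_p v$ and of its counterpart $\boldsymbol{M}_p\lnot v$ for $v\in V$, and that by Caratheodory's theorem any such point is a convex combination of at most $2n+1$ vertices, each a $\{0,1\}^{2n}$-vector describing a single-state one-step model. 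A conclusion of a modal tableau rule is thus obtained from a choice of at most $2n+1$ such vertices, together with transition weights under which the aggregated $\boldsymbol{M}_p$-values land in the intervals prescribed by $\Gamma$; the $i$-th exact tableau sequent $Q_j(i)$ of the $j$-th conclusion records, for each $v\in V$, the interval that the $i$-th chosen vertex forces on $\tau(v)$.

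To compute a given $Q_j(i)$ in polynomial space I would enumerate the candidate configurations, of which there are at most exponentially many --- cf.\ the bound $(2n+1)\cdot 2^{2n}$ established in the proof of \autoref{quantitativeexpbranching} --- maintaining only two binary counters, one for the current configuration and one for the index $i$ within it, both of polynomial length. For each candidate I would test feasibility: whether there are weights $\mu_1,\dots,\mu_{2n+1}\ge 0$ summing to $1$ such that, for every literal $(\boldsymbol{M}_p v\in I)$ of $\Gamma$, the largest $\alpha$ for which the combined weight of the states whose $v$-value is at least $\alpha$ exceeds $p$ lies in $I$. Since the configuration pins down, for each relevant threshold, which states lie above it, this condition becomes a conjunction of weak and strict linear inequalities in the $\mu_k$ whose coefficients come from the binary-coded $p=a/b$; feasibility is then a linear-programming feasibility instance, decidable in nondeterministic polynomial time in the combined syntactic size of $\Gamma$. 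Advancing the configuration counter to the $j$-th feasible configuration and reading off the interval data attached to its $i$-th vertex produces $Q_j(i)$; since the procedure reuses its workspace, it runs in polynomial space, and \autoref{thm:main} then yields that satisfiability of non-expansive quantitative fuzzy $\ALC$ is in \PSPACE.

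The main obstacle is the same as in the `generally' case: turning the quantile functional $\boldsymbol{M}_p$ into a single linear feasibility test rather than an a priori unbounded case distinction. The key observation is that $\boldsymbol{M}_p v$ is piecewise linear in the weights, the pieces being determined by which successor states sit above the critical threshold $\alpha$ --- and enumerating configurations fixes exactly this incidence pattern, after which all the bounds of $\Gamma$ become linear and the strict inequalities $>p$ together with the binary encoding of $p$ cause no difficulty, by standard facts on linear programming with rational data, precisely as in \autoref{theorem:generallypolynomialspace}. The remaining points --- that the companion formulae $\boldsymbol{M}_p\lnot v$ suffice to capture the interaction of a lower with an upper interval bound, and that the conclusions computed this way indeed respect the size bounds $\overline{n}$ and $\tilde{n}$ inherited from one-step rectangularity and exponential branching --- are routine consequences of the constructions in the proofs of \autoref{quantitativeexpbounded} and \autoref{quantitativeexpbranching}.
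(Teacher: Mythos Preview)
Your proposal is correct and follows essentially the same approach as the paper: enumerate the exponentially many successor configurations, test each for feasibility via linear programming, and read off the $i$-th tableau sequent of the $j$-th feasible configuration, all in polynomial space. One minor imprecision worth noting: the $[0,1]^{2n}$ vector in the proof of \autoref{quantitativeexpbounded} records the probabilities $v_{\triangleright a}$ and $v_{\triangleleft b}$, not literally the values $\llbracket\boldsymbol{M}_p v\rrbracket$ and $\llbracket\boldsymbol{M}_p\lnot v\rrbracket$, but this does not affect the argument.
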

\noindent By \Cref{thm:main}, we obtain
\begin{corollary}
	The satisfiability problem of non-expansive quantitative fuzzy $\ALC$ is in $\PSPACE$.
\end{corollary}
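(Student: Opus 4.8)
The plan is to obtain this as a direct instance of the generic machinery of \autoref{polspacebound}: by \autoref{thm:main}, the satisfiability problem of any polynomial-space bounded logic is decidable in \PSPACE, bounded in the combined syntactic size of the input sequent, so it suffices to present non-expansive quantitative fuzzy $\ALC$ as the satisfiability problem of a polynomial-space bounded logic in our sense. This is almost verbatim the content of \autoref{quantitativepolynomialspace}, the one subtlety being the implicit propositional atoms.

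Concretely, I would first record that non-expansive quantitative fuzzy $\ALC$, as recalled in \autoref{generallyintro}, is the logic $\mathcal{L}+\mathsf{At}$, where $\mathcal{L}$ is the coalgebraic logic over the distribution functor $\Dist$ with the single family of modalities $\boldsymbol{M}_p$ interpreted by the predicate liftings $\llbracket\boldsymbol{M}_p\rrbracket$ from the Examples paragraph, and $\mathsf{At}$ is the (implicit, and here necessary) set of atoms. \autoref{quantitativepolynomialspace} establishes that $\mathcal{L}$ itself is polynomial-space bounded, and \autoref{atoms:polynomialspacebounded} then lifts this property to $\mathcal{L}+\mathsf{At}$. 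Applying \autoref{thm:main} to $\mathcal{L}+\mathsf{At}$ yields the claimed \PSPACE upper bound, and the corollary follows.

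The only thing that needs a modicum of care --- and arguably the ``main obstacle'', such as it is --- is bookkeeping about encodings: the probability thresholds $p = a/b$ in the modalities $\boldsymbol{M}_p$ are measured in binary, with $\lvert\boldsymbol{M}_p\rvert = \log a + \log b$, and the interval endpoints in tableau literals are likewise binary-encoded rationals. One should therefore check that the reductions underlying \autoref{thm:main} --- the propositional tableau calculus of \autoref{tableau}, the modal rule supplied by one-step rectangularity, and the recursion over modal depth --- keep all rationals produced of bit-length polynomial in the combined syntactic size of the input, so that ``polynomial space'' in the sense of polynomial-space boundedness really does translate into polynomial space in the binary-encoded input. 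For non-expansive quantitative fuzzy $\ALC$ this is immediate from the explicit constructions in the proofs of Lemmas~\ref{quantitativeexpbounded}--\ref{quantitativeexpbranching} and \autoref{quantitativepolynomialspace}: the numerators and denominators appearing in the conclusions of the modal tableau rule are obtained from those already present in $\Gamma$ by intersecting intervals and by solving linear programs whose coefficients are drawn from $\Gamma$, and none of these operations inflate bit-length beyond a polynomial. With that checked, no further work is required.
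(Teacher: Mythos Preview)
Your proposal is correct and follows essentially the same route as the paper: the corollary is obtained directly by feeding \autoref{quantitativepolynomialspace} into \autoref{thm:main}. Your explicit detour through \autoref{atoms:polynomialspacebounded} is in fact redundant, since the paper's statement of \autoref{quantitativepolynomialspace} already refers to non-expansive quantitative fuzzy $\ALC$ \emph{with} atoms (the atoms lemmas are invoked inside the proofs of the preceding lemmas rather than as a separate lifting step afterwards), and your paragraph on bit-length bookkeeping is extra care that the paper leaves implicit in the framework --- but neither addition is wrong.
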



\section{Fuzzy Metric Modal Logic}\label{metrictracelogic}
As our next group of examples, we recall modal logics of
crisp~\cite{AlfaroEA09} and
fuzzy~\cite{ForsterEA25} metric transition
systems, and prove polynomial-space boundedness. We focus on the fuzzy
variant; the variant with crisp transitions needs only  slight
adjustments, detailed in the appendix.

\begin{definition}
	Let $(L, d_L)$ be a metric space. Then, let $T\colon \SET \rightarrow \SET$, $TX := [0,1]^{L \times X}$ and $Tf := q\mapsto q'$ where
	\begin{equation*}
		q'(l,y) := \sup_{x \in X, f(x)=y} q(l,x).
	\end{equation*}
	Put $\Lambda = \{\diamondsuit_l ^c \mid l \in L, c \in [0,1]\}$ with the predicate liftings:
	\begin{equation*}
		(\llbracket \diamondsuit_l ^c \rrbracket_X (f)) \mu := \sup_{m \in L, v \in [0,1]} \min(\sup_{x \in X, f(x) = v} \mu(m,x), v, c \ominus d_L(l,m))
	\end{equation*}
	We refer to the resulting logic as \emph{fuzzy metric modal logic}. 
\end{definition}
Intuitively, $\diamondsuit_l^c \phi$ is the degree to which successors with a label sufficiently close to $l$ satisfy $\phi$, where we shift the degree of `closeness' by $c$. I.e. the degree of `closeness' term starts at $c$ for the label $l$ and becomes smaller the farther away a label is from $l$.

Note that we will be assuming that the distance $d_L(l,m)$ between two elements $l,m \in L$, as well as that deciding if intersections and differences of open balls are trivial, is computable in polynomial space.

\begin{lemma}\label{metricexpbounded}
	Fuzzy metric modal logic is one-step exponentially bounded.
\end{lemma}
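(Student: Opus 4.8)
The plan is to mirror the direct construction used to show that non-expansive fuzzy $\ALC$ is one-step exponentially bounded: rather than prune a given witness model, I build a fresh small one-step model that reproduces, on the nose, the truth values of all one-step formulas in $L$. First, by \autoref{atoms:onestepexp} it suffices to treat the atom-free logic, so I assume $L = \{\diamondsuit_{a_1}^{c_1}v_1,\dots,\diamondsuit_{a_m}^{c_m}v_m\} \subseteq \Lambda(V)$ with, using the freshness convention of \autoref{remark:freshvs}, $V = \{v_1,\dots,v_m\}$ and $m \le n$, and I fix a one-step model $M = (X,\tau,\mu\times d_L)$ with $M \models \Gamma$.

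The first technical step is a rewriting of the one-step semantics. Since $\min$ with a constant commutes with suprema, and since collapsing the supremum over truth values $\beta$ together with the nested supremum $\sup_{x:\tau(v)(x)=\beta}$ leaves a plain supremum over all $x \in X$ (an empty inner index set merely contributes $0$), one obtains
\[
  \llbracket \diamondsuit_{a}^{c} v \rrbracket_M \;=\; \sup_{b \in L,\ x \in X}\ \min\bigl(\mu(b,x),\ \tau(v)(x),\ c \ominus d_L(a,b)\bigr).
\]
Writing $r_k := \llbracket \diamondsuit_{a_k}^{c_k} v_k \rrbracket_M$, we then have $r_k \in I$ for every literal $(\diamondsuit_{a_k}^{c_k}v_k \in I) \in \Gamma$ and, crucially, $r_k \le c_k$, because $c_k \ominus d_L(a_k,b) \le c_k$ for all $b$; this is the one place where the truncated-subtraction shape of the modality is used.

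Now I define $M' = (X_m,\tau',\mu'\times d_L)$ with a state $x_k$ dedicated to the formula $\diamondsuit_{a_k}^{c_k}v_k$: put $\tau'(x_k)(v_k) = r_k$ and $\tau'(x_k)(v_j) = 0$ for $j \ne k$, and $\mu'(a_k,x_k) = r_k$, $\mu'(b,x_k) = 0$ for $b \ne a_k$. Applying the rewritten semantics to $M'$, every summand with $x = x_j$, $j \ne \ell$, vanishes since $\tau'(v_\ell)(x_j) = 0$, and every summand with $x = x_\ell$, $b \ne a_\ell$, vanishes since $\mu'(b,x_\ell) = 0$; the single remaining summand is $\min(r_\ell, r_\ell, c_\ell \ominus d_L(a_\ell,a_\ell)) = \min(r_\ell, c_\ell) = r_\ell$. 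Hence $\llbracket \diamondsuit_{a_\ell}^{c_\ell}v_\ell \rrbracket_{M'} = r_\ell$ for all $\ell$, so $M' \models \Gamma$. Since $m \le n$, padding with finitely many ``dead'' states (all values of $\tau'$ and $\mu'$ equal to $0$), which contribute $0$ to every supremum, yields a model of any prescribed size $\overline{n} \ge n$; taking $\overline{n} := n$ gives an (in fact linear) bound, establishing the claim. The crisp-transition variant is handled identically, up to replacing the values $\mu(b,x)\in[0,1]$ by Boolean values and adjusting the rewriting accordingly.

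The only point that really needs care — and the reason the argument comes out this clean — is the semantic rewriting together with the inequality $r_k \le c_k$: these are exactly what let a single dedicated state carry the exact value of its formula without interference from the others. In contrast with the expectation modality $\PROBABLY$, the $\sup$/$\min$ shape of $\diamondsuit_a^c$ means successor states never interact arithmetically, so no Caratheodory-type argument (as in \autoref{theorem:generallyexpbounded}) is needed here.
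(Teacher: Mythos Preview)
Your proof is correct and follows essentially the same approach as the paper's: both build a small model with one dedicated state per formula $\diamondsuit_{a_k}^{c_k}v_k$, zeroing out all other variable values and transition labels so that states do not interfere, and then observe that the single surviving term in the $\sup/\min$ expression reproduces the original truth value $r_k$. The only cosmetic difference is that the paper sets $\tau'(v_k)(x_k)=1$ while you set it to $r_k$; both work because the transition value already caps the minimum at $r_k$, and your explicit observation $r_k\le c_k$ (which the paper leaves implicit in ``trivial to see'') is exactly what makes $\min(r_k,c_k)=r_k$ go through.
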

\begin{proof}[Proof (sketch)]
	We rely on \Cref{remark:freshvs} and instead prove it for the easier case where each $v \in V$ appears at most once in a tableau sequent $\Gamma$ over one-step formulae. If $\Gamma$ is then satisfiable, we can satisfy it in a trivial model, where for each tableau literal $\diamondsuit_l ^c v \tin I$ we introduce one state $x_v$ with $\tau(v,x)=1$ and $\tau(w,x)=0$ for all $w \neq v$ and $t(l,x)$ has a small enough value in $I$ and $t(m,x) = 0$ for all $m \neq l$.
\end{proof}

\begin{lemma}\label{metricexpbranching}
	Fuzzy metric modal logic is exponentially branching.
\end{lemma}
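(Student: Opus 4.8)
The plan is to mimic the proofs of \autoref{theorem:generallyexpbranching} and \autoref{quantitativeexpbranching}, with one additional coordinate accounting for the label parameter carried by the metric diamonds. Fix $n$ and a tableau sequent $\Gamma$ over one-step formulas $L \subseteq \Lambda(V)$ with $\lvert L\rvert = n$, say $L = \{\diamondsuit_{a_1}^{c_1}v_1, \dots, \diamondsuit_{a_k}^{c_k}v_k\}$ with $k \le n$ and the $a_i$ points of the metric space. First I would invoke \autoref{metricexpbounded}: it suffices to consider one-step models $M = (X, \tau, \mu\times d_L)$ with $\lvert X\rvert \le \overline n$ for the (exponential) one-step bound $\overline n$, and --- as should be extractable from the proof of that lemma --- it suffices to let the label witnesses~$b$ in the predicate liftings range over a finite set $B_\Gamma$ of labels of size at most exponential in~$n$, consisting of those labels relevant to the centres $a_i$ appearing in~$L$ and to the finitely many thresholds occurring as interval endpoints in~$\Gamma$; triviality of the relevant ball intersections is decidable in polynomial space by assumption.

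Unfolding the predicate lifting gives $\llbracket\diamondsuit_a^c v\rrbracket_M = \max_{x\in X,\, b\in B_\Gamma}\min\bigl(\mu(b,x),\, \tau(v)(x),\, c\ominus d_L(a,b)\bigr)$ (a maximum, since $X$ and $B_\Gamma$ are finite), so a successor state~$x$ influences the one-step formulas of~$L$ only through the truth values $\tau(v_j)(x)$, with the geometric quantities $c_i\ominus d_L(a_i,b)$ being constants and the weights $\mu(b,x)$ free in $[0,1]$. I would therefore assign to each successor state a \emph{type}, namely the record of, for each variable~$v_j$, the cell of the partition of $[0,1]$ induced by the interval endpoints of~$\Gamma$ into which $\tau(v_j)(x)$ falls; there are at most exponentially many types in~$n$. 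A \emph{candidate successor structure} is a choice of at most $\overline n$ states together with their types, of which there are exponentially many, exactly as in \autoref{theorem:generallyexpbranching}.

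For each candidate successor structure I would check feasibility: is there a weight assignment $\mu\colon B_\Gamma\times X\to[0,1]$ for which the resulting values $\llbracket\diamondsuit_{a_i}^{c_i}v_j\rrbracket_M$ lie in the intervals prescribed by~$\Gamma$? Since the suprema are attained, the open/closed endpoint handling is routine, and the whole question is a piecewise-linear feasibility problem of size polynomial in the combined syntactic size of~$\Gamma$, solvable in nondeterministic polynomial time as in \autoref{theorem:generallypolynomialspace}. Discarding the infeasible structures, I would turn each surviving one into a single conclusion of a modal tableau rule, consisting of one exact tableau sequent $Q(x)$ over~$V$ per state~$x$, where $Q(x)(v_j)$ is the interval forced on $\tau(v_j)(x)$ by the state's type. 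This produces a modal tableau rule $\frac{\Gamma}{Q_1 \mid \dots \mid Q_m}$ with $m$ at most exponential in~$n$ and every $Q_i$ of size at most~$\overline n$. Its correctness is the analogue of that in \autoref{theorem:generallyexpbranching}: soundness holds because any model of~$\Gamma$ is one-step equivalent, via \autoref{metricexpbounded}, to one built from a surviving structure, whose associated conclusion is then realized; completeness holds because only feasible structures were retained, so any pair $(X,\tau)$ realizing all sequents of some $Q_i$ can be completed to a model of~$\Gamma$ by the witnessing weights~$\mu$.

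The main obstacle is the first step. Unlike in the `generally' and quantitative-$\ALC$ cases, here the diamonds carry a label ranging over the possibly infinite metric space, so one must argue --- presumably already inside the proof of \autoref{metricexpbounded} --- that only an exponentially bounded, polynomial-space-computable set $B_\Gamma$ of labels can affect the truth of the one-step formulas of~$L$, and that the $c\ominus d_L(a,b)$ terms interact with the interval bounds of~$\Gamma$ through only finitely many thresholds. Once the labels are tamed in this way, the state-type and successor-structure enumeration machinery is essentially identical to that of \autoref{theorem:generallyexpbranching}.
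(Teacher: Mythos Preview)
Your approach is genuinely different from the paper's, and the obstacle you flag is real: you never actually produce the finite label set $B_\Gamma$, and the paper does not produce one either. The paper's proof of \autoref{metricexpbranching} follows the template of the fuzzy-$\ALC$ example (\autoref{example:fuzzyalcOneStepRectangular}), \emph{not} that of \autoref{theorem:generallyexpbranching}. It takes exactly $n$ successor states, one per literal $g_i = \diamondsuit_{l_i}^{c_i} v_i \in \llparenthesis_i a_i,b_i\rrparenthesis_i$, puts $Q(i)(v_i)=\llparenthesis_i a_i,1]$, and then for each pair $(i,j)$ with overlapping label balls and incompatible bounds, \emph{branches} on whether to impose the upper-bound constraint $Q(j)(v_i)=[0,b_i\rrparenthesis_i$ or leave $Q(j)(v_i)=[0,1]$. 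The branching thus comes from these binary choices (at most $2^{n^2}$ of them), not from enumerating state types. Labels are handled not by enumeration but by the standing assumption that nonemptiness of ball intersections is decidable in polynomial space: one \emph{checks} whether suitable witnesses $l'_i$ exist for each branching choice and discards the branch otherwise. There is no Carath\'eodory step and no linear program; the functor $[0,1]^{L\times X}$ carries no convexity constraint, so once the $\tau$-constraints and label-existence conditions are fixed, the weights $t(l'_i,x_i)$ can simply be chosen pointwise.

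Your route could in principle be made to work: the proof of \autoref{metricexpbounded} shows that a satisfying model can always be taken with $t$ supported on the finitely many labels $\{l_1,\dots,l_n\}$ already occurring in $\Gamma$, which would give you your $B_\Gamma$. But you would still need to argue that this support suffices for the \emph{backward} direction of the modal tableau rule (for arbitrary $(X,\tau)$ realizing a conclusion, not just the canonical one), and your ``piecewise-linear feasibility'' check is really a Boolean combination of independent threshold constraints on the $\mu(b,x)$, not an LP. The paper's direct construction avoids these detours entirely.
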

\begin{proof}[Proof (sketch)]
	We can define a modal tableau rule with a set of conclusions that is mostly analogous to fuzzy $\ALC$, just with the metric carried along; i.e. when introducing an exact tableau sequent for one particular tableau literal $\diamondsuit_l ^c v \tin I$, we do not take the upper bounds of all other tableau literals $\diamondsuit_m ^d w \tin J$ into consideration, but instead we choose (each combination being one conclusion of the modal tableau rule) for each upper bound whether it should be obeyed or whether we restrict our choice of labels to be far enough apart. Otherwise, the construction remains largely the same.
\end{proof}

\begin{theorem}\label{metricpolynomialspace}
	Fuzzy metric modal logic is polynomial-space bounded.
\end{theorem}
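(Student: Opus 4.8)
Since by \autoref{metricexpbounded} and \autoref{metricexpbranching} fuzzy metric modal logic is one-step exponentially bounded and exponentially branching, the content of the statement is the computational clause: for a tableau sequent $\Gamma$ over one-step formulas $L\subseteq\Lambda(V)$ with $|L|=n$, each individual exact tableau sequent $Q_i(j)$ of a conclusion of a modal tableau rule for $\Gamma$ is computable in polynomial space (in the combined syntactic size of $\Gamma$). The plan is to run the same argument as in \autoref{theorem:generallypolynomialspace}, reusing the representation of one-step models and successor structures built in the proofs of \autoref{metricexpbounded} and \autoref{metricexpbranching}, and to absorb the two features specific to the metric setting --- the inner supremum over labels $b\in L$ and the shift term $c\ominus d_L(a,b)$ in the predicate lifting of $\diamondsuit_a^c$ --- into the linear programs that drive that argument.

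Concretely: after eliminating atom-emulating modalities via \autoref{atoms:onestepexp}, the proof of \autoref{metricexpbranching} classifies candidate successor states by Boolean profiles in $\{0,1\}^{\mathrm{poly}(n)}$ encoding which of the finitely many relevant label/threshold conditions a single-state one-step model based there would meet; a conclusion of the modal tableau rule corresponds to a configuration of at most $\overline n$ such profiles for which the system of linear (in)equalities encoding the bounds of $\Gamma$ --- with coefficients built from the constants $c$ and the distances $d_L(a,b)$ --- is feasible, and its $j$-th tableau sequent records the interval to which the $j$-th profile pins each variable. To output $Q_i(j)$ I would therefore enumerate profile configurations with polynomially wide binary counters, test linear-program feasibility for each (in polynomial time), select the $i$-th feasible configuration, and read off the $j$-th profile's interval constraints; the only operations not already present in \autoref{theorem:generallypolynomialspace} are computing $d_L(a,b)$ for the finitely many labels occurring in $\Gamma$ and testing whether certain intersections of open balls are trivial, and both are in $\PSPACE$ by the standing assumption, so the whole computation fits in polynomial space.

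The hard part, and the only real departure from the `generally' proof, is the inner quantification $\sup_{b\in L}$ together with the term $c\ominus d_L(a,b)$: one must argue that ranging over just the finitely many labels $b$ mentioned in $L$ (and their distances to the labels $a$ in modalities of $\Gamma$) is without loss of generality, so that the profile set stays of exponential size and the arithmetic in the linear programs stays polynomial-space computable. Making that reduction effective is exactly what the computability assumptions on $d_L$ and on triviality of ball intersections are for.
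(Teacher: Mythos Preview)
Your plan imports the wrong template. The modal tableau rule built in the proof of \autoref{metricexpbranching} is \emph{not} of the Carath\'eodory/``generally'' kind: the modality $\diamondsuit_a^c$ is a $\sup/\min$ operator (like the fuzzy-$\ALC$ diamond), not an additive one, so there is no convex-combination structure in~$t$ and hence no $\{0,1\}^{\mathrm{poly}(n)}$-profiles or linear programs in that proof. Instead, \autoref{metricexpbranching} follows the fuzzy-$\ALC$ pattern of \autoref{example:fuzzyalcOneStepRectangular}: each conclusion has exactly~$n$ tableau sequents (one per literal $g_i=\diamondsuit_{l_i}^{c_i}v_i\in\llparenthesis_i a_i,b_i\rrparenthesis_i$), the $i$-th sequent enforces the lower bound $v_i\in\llparenthesis_i a_i,1]$, and the branching comes from binary choices of which upper bounds $v_j\in[0,b_j\rrparenthesis_j$ to impose on the other sequents (the sets $J_i,J'_i$ in that proof). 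The filtering step then discards configurations for which no witnessing labels $l'_i$ satisfying the requisite distance constraints exist---this is where the \PSPACE assumption on ball-intersection triviality is invoked, not a linear-program feasibility check.

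Given that rule, computing $Q_i(j)$ in polynomial space is much simpler than you outline and closer to the fuzzy-$\ALC$ case than to \autoref{theorem:generallypolynomialspace}: enumerate the at most exponentially many choices of the $J_i$'s via binary counters, test each for the existence of witnessing labels (in \PSPACE by the standing assumption on $d_L$ and ball intersections), and read the $j$-th sequent of the $i$-th surviving configuration directly off the bounds $a_j,b_j$ without further arithmetic. The ``hard part'' you single out---reducing $\sup_{b\in L}$ to finitely many labels inside a linear program---does not arise, because that quantifier is already discharged combinatorially in the construction of the rule itself; the paper's own proof is accordingly a one-line pointer back to \autoref{metricexpbranching}.
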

\begin{proof}
	Computing a tableau sequent in a conclusion of the modal tableau rule outlined in the proof of \Cref{metricexpbranching} can be done by similar algorithms as above.
\end{proof}
\noindent By \Cref{thm:main}, we once again obtain
\begin{corollary}
	The satisfiability problem of fuzzy metric modal logic is in $\PSPACE$.
\end{corollary}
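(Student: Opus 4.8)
Since \autoref{metricpolynomialspace} is already available, the Corollary follows immediately by \autoref{thm:main}; the real content lies in establishing the three structural properties of fuzzy metric modal logic just stated, and I would prove them following closely the template used for the logic of `generally' (\autoref{lgen}) and for quantitative fuzzy $\ALC$ (\autoref{quantfuzzy}). In each case the first step is to strip off the atom-emulating modalities using \autoref{atoms:onestepexp}, \autoref{atoms:onesteprectangular}, \autoref{atoms:expbranching} and \autoref{atoms:polynomialspacebounded}, so that one works with the bare metric functor $TX=[0,1]^{L\times X}\times d_L$, whose one-step formulas are built from $\diamondsuit^c_a v$ for $v\in V$.

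For one-step exponential boundedness (\autoref{metricexpbounded}), fix a satisfiable tableau sequent $\Gamma$ over one-step formulas $L\subseteq\Lambda(V)$ with $|L|=n$ and a one-step model $(X,\tau,t)$, $t=\mu\times d_L$. Reading off the semantics, $\diamondsuit^c_a v$ has value $\ge\lambda$ iff some successor state $x$ with $\tau(v)(x)\ge\lambda$ admits a label $b$ with $d_L(a,b)\le c-\lambda$ and $\mu(b,x)\ge\lambda$; so, exactly as for $\diamondsuit$ in fuzzy $\ALC$, lower bounds in $\Gamma$ ask for single witness successors while upper bounds are universal constraints on all successors. The one new feature is the possibly infinite label set $L$: I would show that only the finitely many \emph{relevant} labels matter, namely one representative for each non-empty pattern of membership in the finite family of balls $\{b\mid d_L(a,b)\triangleleft c-q\}$, where $a$ and $c$ range over labels and constants occurring in $\Gamma$ and $q$ over the interval endpoints occurring in $\Gamma$; this is precisely where the standing assumption that $d_L$, and non-triviality of finite intersections of such balls, are computable in polynomial space is used. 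Once the semantics has been re-expressed purely in terms of these finitely many label-and-threshold coordinates, $\mu$ becomes a convex combination witnessing the corresponding incidence data, and Carath\'eodory's theorem collapses it into an at-most-exponential number of single-state one-step models, which are recombined into a small model satisfying $\Gamma$.

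For exponential branching and polynomial-space boundedness (\autoref{metricexpbranching}, \autoref{metricpolynomialspace}) I would mirror \autoref{lgen}/\autoref{quantfuzzy}: describe each successor state by a vector in $\{0,1\}^{O(n)}$ recording, for every relevant triple consisting of a one-step modal formula, a threshold and a ball pattern, whether the state counts towards the corresponding lower bound or against the corresponding upper bound; there are at most exponentially many such descriptions, each canonically associated with an exact tableau sequent over $V$. A modal tableau rule then has one conclusion for each collection of such descriptions that admits some $t=\mu\times d_L\in TX$ realising all interval bounds of $\Gamma$ — a feasibility question, like the computation of any individual conclusion tableau sequent, expressible as a linear program over the incidence data and hence solvable in (nondeterministic) polynomial space; this bounds both the number of conclusions and the number of tableau sequents in each (so the logic is one-step rectangular and exponentially branching) and shows that each conclusion tableau sequent is poly-space computable. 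The crisp-transition variant of \cite{AlfaroEA09} is the same argument with $\{0,1\}$-valued $\mu$, which merely elides the Carath\'eodory step. With all three properties in place, \autoref{thm:main} delivers the $\PSPACE$ bound, establishing the Corollary.

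I expect the metric label space to be the main obstacle: the delicate point is to prove that the $\sup$ over all $b\in L$ in the semantics of $\diamondsuit^c_a$ is faithfully captured by the $\sup$ over the finitely many ball-pattern representatives, and that this representative set can genuinely be produced within polynomial space — the hypotheses on $d_L$ and on emptiness of ball intersections are tailored exactly to this step, and once it is cleared the remainder is a fairly mechanical transfer from the probabilistic instances.
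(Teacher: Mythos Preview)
Your first sentence is exactly the paper's proof of the Corollary: once \autoref{metricpolynomialspace} is established, \autoref{thm:main} gives the $\PSPACE$ bound, and nothing more is needed.

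Where you diverge from the paper is in the volunteered sketch of \autoref{metricexpbounded}--\autoref{metricpolynomialspace}, and here there is a genuine mismatch. You propose to follow the probabilistic template of \autoref{lgen} and \autoref{quantfuzzy}, representing one-step models as vectors in $[0,1]^{O(n)}$, expressing~$\mu$ as a convex combination, and invoking Carath\'eodory's theorem. But the fuzzy metric functor $TX=[0,1]^{L\times X}\times d_L$ carries no probability-distribution structure: $\mu\in[0,1]^{L\times X}$ is a fuzzy relation, the semantics of $\diamondsuit_a^c$ is a $\sup$/$\min$ expression, and nowhere does a sum or convex combination appear. Carath\'eodory's theorem simply does not apply, and the step ``$\mu$ becomes a convex combination witnessing the corresponding incidence data'' has no content here.

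The paper instead treats fuzzy metric modal logic along the lines of non-expansive fuzzy $\ALC$ (cf.\ Example~\ref{example:fuzzyalcOneStepRectangular}), not the probabilistic instances. For \autoref{metricexpbounded} it builds a small model directly: one fresh state per tableau literal $\diamondsuit_l^c v\in\llparenthesis a,b\rrparenthesis$, with $\tau(v)(x')=1$, $\tau(w)(x')=0$ for $w\neq v$, and $t(l,x')=\llbracket\diamondsuit_l^c v\rrbracket_M$, $t(m,x')=0$ for $m\neq l$; this yields a model of size~$n$ without any convexity argument. For \autoref{metricexpbranching} the modal tableau rule again has one tableau sequent per literal, with $Q(i)(v_i)=\llparenthesis_i a_i,1]$ for the witnessing lower bound and, for each $j\neq i$, a \emph{choice} of whether the $j$-th upper bound is relevant at state~$i$; the branching comes from these choices (and from filtering by existence of suitable label witnesses~$l'_i$), not from enumerating $\{0,1\}^{O(n)}$ incidence vectors. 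Your intuition about the label space is right---the paper does use the computability assumptions on $d_L$ and on emptiness of ball intersections when filtering admissible label choices---but the overall shape of the argument is the $\sup$/$\min$ witness-selection pattern of fuzzy $\ALC$, not the linear-programming pattern of the probabilistic logics.
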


\begin{example}
	\begin{enumerate}[wide]
		\item It is easy to see that for $L= \{\bullet\}$, the modality $\diamondsuit_\bullet ^1$ is the same as the $\diamondsuit$ modality of non-expansive fuzzy $\ALC$. For $\diamondsuit_\bullet ^c$, we have equality to $c \sqcap \diamondsuit_\bullet ^1$. So for the one-point metric space, fuzzy metric modal logic and non-expansive fuzzy $\ALC$ coincide. As such, the modal rule scheme in this case is the same as that for non-expansive fuzzy $\ALC$.
		\item Fuzzy metric modal logic can be used to describe and reason about cybersecurity monitoring and threats of a system: A states represents a snapshot of the monitored system at a given time slice, while atoms describe low-level security-relevant observations or claims, such as $\mathsf{RareASN}, \mathsf{SpawnShell}, \mathsf{CredFail}, \mathsf{RemoteWMI}, \mathsf{LowNoise}$ detailing outbound connections to rarely observed autonomous systems, unexpected shell executions, anomalous authentication failures, remote WMI-based executions, and activity exhibiting low-volume, stealthy temporal patterns, respectively. As the metric space, we take the set of threat interpretations, equipped with a distance induced by similarity in tactics, techniques, and procedures, historical co-occurrence in incidents, and overlap in expected atomic signals, so that roles corresponding to closely related attack behaviours lie near each other and can partially influence the modal evaluations of each other. For example, we could have $L = \{\mathsf{LOLBin}, \mathsf{LatMove}, \mathsf{ComMal}\}$, where $\mathsf{LOLBin}$ stands for living-off-the-land binaries, $\mathsf{LatMove}$ for lateral movement, and $\mathsf{ComMal}$ for commodity malware infection. As the metric, we could take $d_L (\mathsf{LOLBin}, \mathsf{LatMove}) = 0.15$, $d_L (\mathsf{LOLBin}, \mathsf{ComMal}) = 0.45$ and $d_L (\mathsf{LatMove}, \mathsf{ComMal}) = 0.6$. Then we can easily see that any state satisfying $\diamondsuit_{\mathsf{LOLBin}}^{0.6} (\mathsf{RemoteWMI} \sqcap \mathsf{LowNoise}) \in [0.6, 1]$ also satisfies $\diamondsuit_{\mathsf{LatMove}}^{0.6} (\mathsf{LowNoise} \ominus 0.2) \in [0.4, 1]$, as the premise ensures that there is a $\mathsf{LOLBin}$-successor with at least probability $0.6$, where $\mathsf{LowNoise}$ is at least $0.6$, and considering how closely related we put $\mathsf{LOLBin}$ and $\mathsf{LatMove}$, this successor also counts towards the evaluation of the formula in the inference. In context, this means that when the observed behaviour can be reasonably explained as a stealthy, WMI-based living-off-the-land activity, it should also raise a weaker but still meaningful suspicion that a just as stealthy lateral-movement phase may be underway.\sgnote{While this example is pretty long, I thought it would be good to give an example with context, rather than just an abstract one, which sadly resulted in this bloat, as I felt like I should explain what states, labels and atoms could be interpreted as, as well as giving some actual examples of those. Combining this with an inference, and explaining why this inference holds and what it means resulted in this example being that long.}
	\end{enumerate}
\end{example}

\section{Conclusions and Future Work}

We have introduced the generic framework of \emph{non-expansive
  quantitative coalgebraic modal logic}, in which modalities in a
highly general sense are combined with the non-expansive propositional
base used in characteristic logics for behavioural
distances~\cite{BreugelWorrell05,WildEA18,KonigMikaMichalski18,WildSchroder22}
as well as in the recently introduced fuzzy description logic
non-expansive fuzzy $\ALC$~\cite{ijcai2025p502}. We
provide a criterion that guarantees decidability of threshold
satisfiability in \PSPACE. By instantiation of this result, we have
obtained new \PSPACE upper bounds for reasoning in a range of concrete
instances, including modal logics of (fuzzy) metric transition
systems as well as two non-expansive probabilistic modal logics,
specifically \emph{quantitative fuzzy $\ALC$} and the logic of
\emph{generally}~\cite{SchroderPattinson11}, in both cases
complementing a known $\NEXP$ upper bound for the respective logics
over the full \L{}ukasiewicz base. Our generic criterion works by
reduction of the full logic to the so-called one-step logic. Notably,
our criterion involves a \emph{rectangularity} condition requiring
essentially that truth values of modal arguments on successor states
can vary independently within specified bounds. 

An important remaining open issue on the side of concrete instance
logics concerns the logic of
\emph{probably}~\cite{SchroderPattinson11}, in which `probably' is
interpreted as taking expected truth values, following
Zadeh~\cite{Zadeh68} and H{\'{a}}jek~\cite{Hajek07}. The best known
upper bound for satisfiability checking in the logic of
\emph{probably} over the full \L{}ukasiewicz base is
\textsc{ExpSpace}~\cite{SchroderPattinson11}
(H{\'{a}}jek~\cite{Hajek07} proves a \PSPACE upper bound for the
fragment without nested modalities). We have noted that our main
result does not apply to the logic of \emph{probably}, essentially
because the modality causes arithmetic entanglement among
successors. The problem of giving a better upper complexity bound for
the modal logic of \emph{probably} over the non-expansive
propositional base, i.e.~for van Breugel and Worrell's characteristic
logic for behavioural distance of probabilistic transition systems
under the Kantorovich lifting~\cite{BreugelWorrell05}, thus remains
open. A further point for future research is to obtain a generic
algorithm for reasoning with global assumptions (TBoxes) in
non-expansive quantitative coalgebraic modal logic, ideally realizing
an upper bound $\EXP$ as in the base case of non-expansive fuzzy
$\ALC$~\cite{ijcai2025p502}.

\bibliography{coalg-logic}

\newpage

\appendix 
\section{Appendix: Additional Details and Omitted Proofs}\label{appendix}
In this section, we collect some of the more technical details of the proofs of the main body.

\begin{definition}
	We define the following operators:
	\begin{equation*}
		\bowtie^\circ :=
		\begin{cases}
			<, & \text{if } {\bowtie}={>}\\
			\leq, &\text{if }{\bowtie}={\geq}\\
			>, &\text{if }{\bowtie}={<}\\
			\geq, &\text{if }{\bowtie}={\leq}
		\end{cases}
		\qquad
		\overline{\bowtie} :=
		\begin{cases}
			\geq, &\text{if }{\bowtie}={>}\\
			>, &\text{if }{\bowtie}={\geq}\\
			\leq, &\text{if }{\bowtie}={<}\\
			<, &\text{if }{\bowtie}={\leq}
		\end{cases}
	\end{equation*}
\end{definition}

\begin{proof}[Proof of \Cref{localreduction}]
	If $\Gamma$ is satisfiable, there exists a state $x$ in some coalgebra $M=(X, \xi)$ that satisfies $\Gamma$. Clearly we then also have a one-step model with $t$ given as $\xi(x)$ the successor structure of $x$ and $\tau$ taking the respective values from $M$ for each formula and state. It is then trivial, that the exact tableau sequents $\Gamma_y$ are satisfiable as their respective state $y \in X$ satisfies them in $M$.
	On the other hand let $M'=(X, \tau, t)$ be a one-step model satisfying $\Gamma^\sharp$ and for each $x \in X$ we have that $\Gamma_x$ is satisfiable. Then we obtain coalgebras $M_x = (Y_x, \xi_x)$ that satisfy $\Gamma_x$ in a state which we also name $x \in Y_x$. We then obtain a full coalgebra $M=(Y, \xi)$ in the following way: Put $Y := \{y\} \mathbin{\dot{\cup}}_{x \in X} Y_x$ as the disjoint union of the coalgebras and a fresh state $y$. We can then define the coalgebra structure $\xi\colon Y \rightarrow TY$ as the universal coproduct morphism for the family $T \iota_X \circ t!\colon \{y\} \rightarrow TX \rightarrow TY$, $T \iota_{Y_x} \circ \xi_x\colon Y_x \rightarrow T Y_x \rightarrow TY$.
	Because all $M_x$ are now subcoalgebras of $M$ we have that $\llbracket \rho \rrbracket_M (x) = \llbracket \rho \rrbracket_{M_x} (x)$ for all formulas $\rho$. This means in particular we have $\llbracket (\Gamma^\flat v) \rrbracket_M \vert_X = \tau(v)$.
	We then use this fact and the naturality of predicate liftings:
	\begin{align*}
		\llbracket \heartsuit (\Gamma^\flat v) \rrbracket_M (y) &= \llbracket \heartsuit \rrbracket_Y (\llbracket (\Gamma^\flat v) \rrbracket_M) t\\
		&= \llbracket \heartsuit \rrbracket_X (\tau(v)) t
	\end{align*}
	This then in turn directly implies that $\Gamma$ is satisfied by $y$ in this coalgebra.
\end{proof}

\begin{proof}[Proof of \Cref{tableau:correctness}]
	Let $\Gamma$ be a tableau sequent over a set $L$ of one-step formulas. 
	If there is an open propositional tableau $G$ for $\Gamma$ then it is immediately clear that $\Gamma_G$ is an exact tableau sequent over a set of formulas of the form $\heartsuit v \in \mathsf{S}_0(L)$, since otherwise a rule would still be applicable and $G$ would not be a propositional tableau.
	For the following we identify $\Gamma_\bot$ as an exact tableau sequent that is never satisfiable by a one-step model, e.g. the sequent $0 \tin [1,1]$.
	Let $\Gamma$ be a sequent and let $C$ or $C_1, C_2$ be the conclusion(s) of a rule where $\Gamma$ matches the premise. We show that a one-step model $M$ satisfies $\Gamma$ iff it satisfies $C$ or at least one of $C_1$ and $C_2$.
	We investigate each rule separately:
	\begin{itemize}
		\item $(\text{Ax})$: If $\Gamma$ matches $S, \phi \tin \emptyset$ then $\Gamma (\phi) = \emptyset$ and as such $\Gamma$ is never satisfiable.
		\item $(\text{Ax} 0)$: If $\Gamma$ matches $S, 0 \tin I$ with $0 \notin I$ then $\Gamma (0) = J \subseteq I$ and because for any one-step model we have $\llbracket 0 \rrbracket = 0$ we have that $\Gamma$ is never satisfiable.
		\item $(\cap)$: We immediately have satisfiability of $\Gamma$ and $C$ are equivalent in this case.
		\item $(\lnot)$: If $\Gamma$ matches $S, \lnot\phi \tin I$ then $M$ satisfies $\Gamma$ iff $\llbracket \lnot\phi \rrbracket_M \in I$ and for every $(\psi \tin J) \in S$ we have $\llbracket \psi \rrbracket_M \tin J$. This is equivalent to $\llbracket \phi \rrbracket_M \in (1 - I)$ and $\llbracket \psi \rrbracket_M \tin J$ for every $(\psi \tin J) \in S$, which are exactly the conditions for $C$ being satisfied.
		\item $(\ominus)$: If $\Gamma$ matches $S, \phi \ominus c \tin I$ with $I = \llparenthesis a,b \rrparenthesis$ and $0 \notin I$ then $M$ satisfies $\Gamma$ iff $\llbracket \phi \ominus c \rrbracket_M \in I$ and for every $(\psi \tin J) \in S$ we have $\llbracket \psi \rrbracket_M \in J$. The first condition is equivalent to $\llbracket \phi \rrbracket_M$ being in $\llparenthesis a+c,b+c \rrparenthesis \cap [0,1] = I+c$. Again this is then directly equivalent to the conditions under which $C$ is satisfied.
		\item $(\ominus')$: If $\Gamma$ matches $S, \phi \ominus c \tin I$ with $I = \llparenthesis a,b \rrparenthesis$ and $0 \in I$ then $M$ satisfies $\Gamma$ iff $\llbracket \phi \ominus c \rrbracket_M \in I$ and for every $(\psi \tin J) \in S$ we have $\llbracket \psi \rrbracket_M \in J$. The first condition is equivalent to $\llbracket \phi \rrbracket_M \leq b+c$ or $\llbracket \phi \rrbracket_M < b+c$ depending on $\rrparenthesis$. This simplifies to $\llbracket \phi \rrbracket_M \in [0, b+c\rrparenthesis \cap [0,1]$.
		Again this is then directly equivalent to the conditions under which $C$ is satisfied.
		\item $(\sqcap)$: If $\Gamma$ matches $S, \phi \sqcap \psi \tin I$ with $I = \llparenthesis a,b \rrparenthesis$ then $M$ satisfies $\Gamma$ iff $\llbracket \phi \sqcap \psi \rrbracket_M \in I$ and for every $(\psi \tin J) \in S$ we have $\llbracket \psi \rrbracket_M \in J$. Let $\triangleright = \geq$ if $\llparenthesis = [$ and $\triangleright = >$ otherwise. Similarly, let $\triangleleft = \leq$ if $\llparenthesis = ]$ and $\triangleleft = <$ otherwise. Now $\llbracket \phi \sqcap \psi \rrbracket_M$ being in $I$ is equivalent to both $\llbracket \phi \rrbracket_M \triangleright a, \llbracket \psi \rrbracket_M \triangleright a$ being true as well as at least one of $\llbracket \phi \rrbracket_M \triangleleft b$ or $\llbracket \psi \rrbracket_M \triangleleft b$ being true. This then corresponds to $\llbracket \phi \rrbracket_M \in \llparenthesis a, b \rrparenthesis, \llbracket \psi \rrbracket_M \in \llparenthesis a, 1]$ or $\llbracket \phi \rrbracket_M \in \llparenthesis a, 1], \llbracket \psi \rrbracket_M \in \llparenthesis a, b \rrparenthesis$. Combined with the condition that for every $(\psi \tin J) \in S$ we have $\llbracket \psi \rrbracket_M \in J$ this is equivalent to the condition of either $C_1$ or $C_2$ being satisfied by $M$.
	\end{itemize}
\end{proof}

\begin{example}[Details of Example \ref{example:notEverythingExpBounded}]
	We show that the predicate lifting is natural. Let $h\colon X \rightarrow Y, f \in [0,1]^Y$ and $U \subseteq X$. Then we have:
	\begin{align*}
		\llbracket \heartsuit \rrbracket_X (f \circ h) (U) &= \sup_{x \in U, f(h(x)) \neq 1} f(h(x)) =\sup_{y \in \operatorname{Im} h \vert_U, f(y) \neq 1} f(y)\\
		&= \sup_{y \in \mathcal{P}(h)(U), f(y) \neq 1} f(y) =\llbracket \heartsuit \rrbracket_Y (f) \circ \mathcal{P}(h) (U)
	\end{align*}
\end{example}

\begin{remark}
	We can restrict ourselves to just pairs $(X, \tau)$ that strictly realize some conclusion~$Q_i$ of a modal tableau rule in the one-step rectangularity property: We begin with a pair $(X', \tau')$ that realizes all tableau sequents of~$Q_i$ and put $X = \emptyset$. Then, for every tableau sequent $q \in Q_i$, we take some state $x_q$ from $(X', \tau')$ and introduce a new state $x_q$ in $X$ with the same values of $\tau$. Note that if we choose the same state of $(X', \tau')$ more than once, we introduce a fresh state in $X$ each time. Then by definition, we now have that $(X, \tau)$ strictly realizes $Q_i$.
	We then use the one-step rectangularity property for the strictly realizing pair $(X, \tau)$ to find $t$ such that $(X, \tau, t)$ is a one-step model satisfying $\Gamma$. Finally, write $f \colon X \rightarrow X'$ mapping each $x$ to the $x'$ it originated from and $t' := Tf (t)$. Then, by the naturality of predicate liftings, we have that the one-step model $(X', \tau', t')$ also satisfies $\Gamma$.
\end{remark}

\begin{example}[Details of Example \ref{example:fuzzyalcOneStepRectangular}]
	We show non-expansive fuzzy $\ALC$ is one-step rectangular: Given an exact tableau sequent $\Gamma$ over one-step formulas $L \subseteq \Lambda(V)$ with $\lvert L \rvert = n$ we build a set of conclusions $Y = \{Q\}$ for a modal tableau rule as a singleton $Q$ with $n$ exact tableau sequents over $V$ in the following way: First without loss of generality assume $V=\{v_1, \ldots, v_n\}$ again. Let $\llparenthesis_i a_i, b_i \rrparenthesis_i = \Gamma(\diamondsuit v_i)$ for all $1 \leq i \leq n$. Put $Q = \{Q(1), \ldots, Q(n)\}$, $Q(i)(v_i) = \llparenthesis_i a_i, 1]$ and for $i \neq j$ put $Q(i)(v_j) = [0,b_j \rrparenthesis_j$ if $\Gamma(\diamondsuit v_i) \cap [0,b_j \rrparenthesis_j = \emptyset$ and $Q(i)(v_j) = [0,1]$ otherwise. The fact that for a pair $(X_n, \tau)$ that strictly realizes all the tableau sequents of~$Q$ (w.l.o.g. $x_i$ realizes $Q(i)$), we can find a~$t$ such that $(X_n, \tau, t) \models \Gamma$ is then immediately obvious by taking $t(x_i)$ to be a small enough value in $Q(i)(v_i)$. More specifically, take a value from $Q(i)(v_i) \cap\bigcap_{j, \Gamma(\diamondsuit v_i) \cap [0,b_j \rrparenthesis_j \neq \emptyset} [0,b_j \rrparenthesis_j$.
	One can then show that for $(X_n, \tau)$, one can never find a $t$ such that $(X_n, \tau, t) \models \Gamma$ when one $Q(i)$ is not realized in $(X_n, \tau)$. The idea is that either the lower bound in $\Gamma$ for $\diamondsuit v_i$ is not satisfied, or some upper bound in $\Gamma$ for some $\diamondsuit v$ is not satisfied. More specifically, let $Q(i)$ not be realized in $\Gamma$: If we do not have some $x \in X$ where $\tau(v_i)(x) \in \llparenthesis_i a_i, 1]$ then no matter how we define $t$, the lower bound for $\diamondsuit v_i$ in $\Gamma$ can never hold. So assume we have $x \in X$ where $\tau(v_i)(x) \in \llparenthesis_i a_i, 1]$. If we do not choose $t(x) \in \llparenthesis_i a_i, 1]$ (for at least one $x \in X$ where $\tau(v_i)(x) \in \llparenthesis_i a_i, 1]$), the lower bound for $\diamondsuit v_i$ in $\Gamma$ does not hold. So we assume that we have a $x \in X$ with $\tau(v_i)(x) \in \llparenthesis_i a_i, 1]$ and if a $t$ exists such that $(X, \tau, t)$ satisfies $\Gamma$ then $t(x) \in \llparenthesis_i a_i, 1]$. If $x$ now also has $\tau(v_j)(x) \in [0,b_j \rrparenthesis_j$ for all $i \neq j$ with $\Gamma(\diamondsuit v_i) \cap [0,b_j \rrparenthesis_j = \emptyset$, then $x$ would realize $Q(i)$. So there is some $i \neq j$ where $\Gamma(\diamondsuit v_i) \cap [0,b_j \rrparenthesis_j = \emptyset$ with $\tau(v_j)(x) \notin [0,b_j \rrparenthesis_j$. However, then a $t$ with $t(x)\in \llparenthesis_i a_i, 1]$ would immediately imply that $\llbracket \diamondsuit v_j \rrbracket_{(X, \tau, t)} \notin [0,b_j \rrparenthesis_j$ as both $t(x) \notin [0,b_j \rrparenthesis_j$ and $\tau(v_j)(x) \notin [0,b_j \rrparenthesis_j$.
\end{example}

\begin{proof}[Proof of \Cref{thm:mainCorrect}]
	We use induction over the modal depth: If the modal depth is $0$, Algorithm \ref{alg:sat} reduces to using the tableau algorithm to decide propositional satisfiability directly.
	For the induction step, we first represent $\Gamma$ as a top-level decomposition $(V, \Gamma^\flat, \Gamma^\sharp)$ and then non-deterministically guess an open propositional tableau $G$ for $\Gamma^\sharp$ or terminate if no open graph exists, in which case the tableau sequent is clearly unsatisfiable. We then fix a modal tableau rule of $\Gamma^\sharp_G$: Now let $Q_i = \{Q_i (1), \ldots, Q_i (m_i)\}$ be a conclusion of the modal tableau rule. Then we sequentially test for each $1 \leq j \leq m_i$ whether the tableau sequent $Q_i(j)_{\Gamma^\flat}$, which we can compute by the polynomial space bounded property, is satisfiable via recursively calling the algorithm. Here, the algorithm is correct by the induction hypothesis. If one of these tableau sequents is not satisfiable, then we set the variable $\operatorname{sat}$ to $\bot$, indicating that this conclusion is not satisfiable. If all of them are instead satisfiable, then by \Cref{localreduction2}, $\Gamma$ is satisfiable.
	Similarly, if no $Q_i$ exists where all tableau sequents $Q_i(j)_{\Gamma^\flat}$ are satisfiable, then by \Cref{localreduction2}, $\Gamma$ is not satisfiable.
\end{proof}

\begin{proof}[Proof of \Cref{thm:main}]
	Let $f^\mathcal{L}$ be the exponential function the exponentially branching property and $f_\mathcal{L}$ the exponential function in the one-step exponentially bounded property for $n \in \mathbb{N}$.
	We once again use induction over the modal depth: If we have modal depth $0$, the tableau algorithm to decide propositional satisfiability works in nondeterministic polynomial time.
	For the induction step, a top-level decomposition of $\Gamma$ can be computed in polynomial time, and guessing a propositional tableau $G$ for $\Gamma^\sharp$ can be done in nondeterministic polynomial time. Choosing a modal tableau rule can be fixed when implementing the algorithm. Note that $s$ and all $m_i$ are bounded by the upper bounds $f^\mathcal{L}(n)$ and $f_\mathcal{L}(n)$, respectively, which are at most exponential in $n$ and in binary representation take up at most polynomial amounts of space.
	Computing some $Q_i(j)$ can be done in polynomial amounts of space, which is reused after each loop iteration, by the polynomial-space bounded property. Satisfiability of $Q_i(j)_{\Gamma^\flat}$ can then be decided in $\PSPACE$, where the space needed is also reused after each loop iteration.
	Since modal depth is bounded by the combined syntactic size of $L$, we have a strict bound on the maximum number of recursion steps, each taking up at most polynomial amounts of space; so in total, the whole algorithm uses at most polynomial amounts of space.
\end{proof}

\begin{proof}[Proof of \Cref{theorem:generallyexpbounded}]
	We can ignore modalities emulating atoms by \Cref{atoms:onestepexp}.
	Throughout let ${\triangleright_1} = {>}$ if ${\llparenthesis_1} = {(}$, and ${\triangleright_1} = {\geq}$ otherwise. Furthermore let ${\triangleleft_2} = {<}$ if ${\rrparenthesis_2} = {)}$, and ${\triangleleft_2} = {\leq}$ otherwise.
	We consider what it means for $\GENERALLY v$ to be in $\llparenthesis_1 a,b \rrparenthesis_2$ in a one-step model $M=(X, \tau, t)$:
	The lower bound tells us that $\llbracket \GENERALLY v \rrbracket = \sup_{\alpha \in [0,1]} \{ \min (\alpha, t(\{x \in X \mid \tau(v) (x) \geq \alpha\}) )\} \triangleright_1 a$, which is equivalent to $t(\{x \in X \mid \tau (v) (x) \triangleright_1 a\}) \triangleright_1 a$.
	Satisfaction of the upper bound  is equivalent to $t(\{x \in X \mid 1 -\tau(v) (x) \triangleleft_2 ^\circ 1-b\}) \triangleleft_2 ^\circ 1-b$ being true. We then write $v_{\triangleright_1 a} := t(\{x \in X \mid \tau(v) (x) \triangleright_1 a\})$ and $(\lnot v)_{\triangleleft_2 ^\circ 1-b} := t(\{x \in X \mid 1-\tau(v) (x) \triangleleft_2 ^\circ 1-b\})$.
	Thus if $\Gamma$ is a propositionally satisfiable tableau sequent, then it is satisfiable iff $\Gamma_G$ is satisfiable for some open propositional tableau $G$ iff there exists a one-step model $M=(X, \tau, t)$ such that for all $\GENERALLY v$ appearing in $\Gamma_G$ with $\Gamma_G(\GENERALLY v) = \llparenthesis_1 a,b \rrparenthesis_2$ we have $v_{\triangleright_1 a} \triangleright_1 a$ and $(\lnot v)_{\triangleleft_2 ^\circ 1-b} \triangleleft_2 ^\circ 1-b$.
	This allows us to reduce any one-step model to just its values for $v_{\triangleright_1 a}$ and $(\lnot v)_{\triangleleft_2 ^\circ 1-b}$ for all $\llparenthesis_1 a, b \rrparenthesis_2 = \Gamma_G(\GENERALLY v)$, and check for $v_{\triangleright_1 a} \triangleright_1 a$ and $(\lnot v)_{\triangleleft_2 ^\circ 1-b} \triangleleft_2 ^\circ 1-b$, giving us a reduced representation $r(M)$ of each one-step model $M$ as a vector in $[0,1]^{2n}$. We then have some one-step model $M$ satisfying all the conditions iff each coordinate of its vector representation $r(M)$ satisfies its respective inequality.
	
	Now, for each $x \in X$ of a one-step model $M=(X, \tau, t)$ we obtain a singleton one-step model $M_x = (\{x\}, \tau_x, t_x)$ where $\tau_x (v,x) = \tau(v,x)$ for all $v \in V$ and $t_x$ is the unique distribution on $\{x\}$. The reduced representation $r(M_x)$ of such a singleton one-step model then is a vector in $\{0,1\}^{2n}$, where a $1$ indicates that the bound of that respective position is met and a $0$ indicates that the bound is not met. 
	From now on, we write $\GENERALLY v_1, \ldots, \GENERALLY v_n$ for all one-step formulas appearing in $\Gamma_G$ and $(\llparenthesis_1)_i a_i, b_i (\rrparenthesis_2)_i = \Gamma_G(\GENERALLY v_i)$.
	We know that the vector representation $r(M)$ of $M$ is the convex combination of the vector representations $r(M_x)$ of $M_x$ with the coefficients of $t$:
	\begin{align*}
		r(M) &= \Bigg(\sum_{x \in X, \tau(v_1) (x) (\triangleright_1)_1 a_1} t(x), \sum_{x \in X, 1-\tau(v_1) (x) (\triangleleft_2)_1 ^\circ 1-b_1} t(x), \ldots,\\
		&\sum_{x \in X, \tau(v_n) (x) (\triangleright_1)_n a_n} t(x), \sum_{x \in X, 1-\tau(v_n) (x) (\triangleleft_2)_n ^\circ 1-b_n} t(x)\Bigg)^t\\
		&=\left(\sum_{x \in X} t(x) (r(M_x))_1, \sum_{x \in X} t(x) (r(M_x))_2, \ldots, \sum_{x \in X} t(x) (r(M_x))_{2n}\right)^t\\
		&= \sum_{x \in X} t(x) r(M_x)
	\end{align*}
	By Caratheodory's theorem this however implies that $r(M)$ can be equivalently written as a convex combination of at most $2n+1$ vector representations of the singleton one-step models. Let $x_1, \ldots, x_{2n+1} \in X$ be a collection of such elements, i.e. there exists $\lambda_1, \ldots, \lambda_{2n+1}$ such that
	\begin{equation*}
		r(M) = \lambda_1 r(M_{x_1}) +  \lambda_2 r(M_{x_2}) + \ldots +  \lambda_{2n+1} r(M_{x_{2n+1}}), \qquad \sum_{i=1}^{2n+1} \lambda_i = 1
	\end{equation*} 
	Then we obtain a new model $M' = (\{x_1, \ldots, x_{2n+1}\}, \tau', t')$ by putting $\tau'(v,x_i) = \tau(v,x_i)$ for all $v \in V, 1 \leq i \leq 2n+1$ and putting $t'(x_i) = \lambda_i$. This model then has the same vector representation so $r(M) = r(M')$, so $M$ satisfies $\Gamma_G$ iff $M'$ satisfies $\Gamma_G$.
	Thus the existence of a one-step model that satisfies a tableau sequent $\Gamma$ is equivalent to the existence of a one-step model with at most $\lvert X \rvert = 2n+1$ that satisfies the same tableau sequent $\Gamma$.
\end{proof}

\begin{proof}[Proof of \Cref{theorem:generallyexpbranching}]
	Let $\Gamma$ be an exact tableau sequent over one-step formulae $L \subseteq \Lambda(V)$ with $\lvert L \rvert = n$. Again, we write $L = \{\GENERALLY v_1, \ldots, \GENERALLY v_n\}$ and $(\llparenthesis_1)_i a_i, b_i (\rrparenthesis_2)_i = \Gamma(\GENERALLY v_i)$.
	As apparent from the proof of \Cref{theorem:generallyexpbounded}, there are $2^{2n}$ possibilities for each of the $2n+1$ successor states for a one-step model satisfying $\Gamma$, so there are only exponentially many possible configurations of successor states; we restrict to configurations whose convex hull contains a vector satisfying the inequalities, i.e. we have a vector $r = \sum_{x \in X} t(x) r(M_x)$ with $t(x) \geq 0, \sum_{x \in X} t(x) = 1$ and $r$ satisfies $\Gamma$ in the following sense: $r \in [0,1]^{2n}$ satisfies $\Gamma_G$ if for all $1 \leq i \leq n$ we have $r_{2i-1} (\triangleright_1)_i a_i$ and $r_{2i} (\triangleleft_2)_i ^\circ 1 - b_i$. Each such state then has to either have $\tau(v_i)(x) (\triangleright_1)_i a_i$ or $\tau(v_i)(x) \overline{(\triangleright_1)_i}^\circ a_i$ for each of the odd numbered $n$ inequalities and $1-\tau(v_i)(x) (\triangleleft_2)_i^\circ 1-b_i$ or $1-\tau(v_i)(x) \overline{(\triangleleft_2)_i} 1 - b_i$ for the other $n$ inequalities.
	Taking the inequalities for each $v_i$ for each state $x_i$ and writing them as an interval $Q(i)(v)$ then gives us a set of exact tableau sequents $Q = \{Q(1), \ldots, Q(2n+1)\}$ by construction. Taking all exponentially many possible configurations that can satisfy the inequalities and generating a set of exact tableau sequent $Q_i$ for each in this way then gives us an exponentially bounded set of conclusions $\{Q_1, \ldots, Q_m\}$ for a modal tableau rule with $m \leq 2^{2n \cdot (2n+1)}$ by construction.
	We can once again ignore modalities emulating atoms by \Cref{atoms:expbranching}.
\end{proof}

\begin{remark}
	Intuitively, we can describe the conclusions of the modal tableau rule one constructs in the proof of \Cref{theorem:generallyexpbranching} in the following way: We describe each possible successor by whether it will count towards satisfying $\GENERALLY v \triangleright a$ and/or towards satisfying $\GENERALLY v \triangleleft b$ for each $\GENERALLY v \tin \llparenthesis a, b \rrparenthesis$. This immediately tells us for each possible successor what interval each $v$ must be in and allows us to display each successor as a truth vector detailing which bounds it counts towards. Putting all these intervals for $v$ together for one possible assignment of states and which bounds they count towards then gives us a set of exact tableau sequents $\{Q(1), \ldots, Q(2n+1)\}$ over $V$. We now filter for configurations where we can find a~$t$ that assigns weights in such a way that all literals $\GENERALLY v \tin \llparenthesis a, b \rrparenthesis$ are satisfied in the sense that to states where $v \triangleright a$ holds we assign a combined value $t_v \triangleright a$ and to states where $v \triangleleft b$ holds a combined value of $t_{\lnot v} \triangleleft^\circ 1-b$. This will ensure that, as long as we choose $\tau(v)(x_i) \in Q(i)(v)$ for each $x_i \in X_{2n+1}$ and $v \in V$, we obtain a one-step model satisfying $\Gamma$ by using this~$t$.
	These configurations then correspond to ones where the convex hull of the truth vector description of states of the configuration contains an element that has values satisfying the inequalities outlined for $t$.
	Taking all possible configurations, filtering out only those that can result in a $t$ satisfying the inequalities, and then describing this configuration as a set of exact tableau sequents over $V$, as outlined, yields a modal tableau rules conclusions by construction.
\end{remark}

\begin{remark}[Details of \Cref{remark:lgengeneral}]
	The reason that the conditions outlined in \Cref{remark:lgengeneral} are sufficient is the following: $\GENERALLY v$ being in $\llparenthesis a,b \rrparenthesis$ now corresponds to $h(t(\{x \in X \mid \tau (v) (x) \triangleright a\})) \triangleright a$ and $h(1-t(\{x \in X \mid 1 -\tau(v) (x) \triangleleft^\circ 1-b\})) \triangleleft b$ having to be true. Again, we may write any state as just its values for $v_{\triangleright a}$ and $(\lnot v)_{\triangleleft^\circ 1-b}$ for all $\llparenthesis a, b \rrparenthesis_{\GENERALLY v}$, giving us the representation of each one-step model as a vector in $[0,1]^{2n}$.
	We can then check $h(v_{\triangleright a}) \triangleright a$ and $h(1-(\lnot v_{\triangleleft^\circ 1-b})) \triangleleft b$ for all $\llparenthesis a, b \rrparenthesis_{\GENERALLY v}$ to see if a one-step model satisfies the tableau sequent.
	Now we use Caratheodory's Theorem again to reduce our search to just one-step models with at most $2n+1$ states, proving that the logic is one-step exponentially bounded. By the same argument as in the proof of \Cref{theorem:generallyexpbranching} the logic is also exponentially branching. The logic is polynomial-space bounded by the same argument as in the proof of \Cref{theorem:generallypolynomialspace} and as such deciding satisfiability remains in $\PSPACE$.
\end{remark}

\begin{proof}[Proof of \Cref{quantitativeexpbounded}]
	We first investigate again what it means for $\boldsymbol{M}_p v$ to be in $\llparenthesis a,b \rrparenthesis$: The lower bound tells us that if $M=(X, \tau, t)$ is a one-step model $\llbracket \boldsymbol{M}_p v \rrbracket = \sup\{\alpha \mid \sum_{x \in X, \tau(v)(x) \geq \alpha} t(x) > p\} \triangleright a$ has to be true (where $\triangleright$ corresponds to $\llparenthesis$), which is equivalent to $t(\{x \in X \mid \tau (v)(x) \triangleright a\}) > p$ being true.
	Satisfaction of the upper bound meanwhile is equivalent to $t(\{x \in X \mid \tau (v)(x) \triangleleft b\}) > 1-p$ being true (where $\triangleleft$ corresponds to $\rrparenthesis$). We then write $v_{\bowtie c} := t(\{x \in X \mid \tau(v) (x) \bowtie c\})$.
	This allows us to reduce any one-step model to just its values for $v_{\triangleright a}$ and $v_{\triangleleft b}$ for all $\llparenthesis a, b \rrparenthesis = \Gamma(\boldsymbol{M}_p v)$ and check for $v_{\triangleright a} > p$ and $v_{\triangleleft b} > 1-p$, giving us a representation of each one-step model as a vector in $[0,1]^{2n}$. We then have some one-step model satisfying all the conditions if each coordinate of its vector representation satisfies its respective inequality.
	However any such vector can also be written as a convex combination of at most $2n+1$ elements in $\{0,1\}^{2n}$ (by the Caratheodory's theorem). These elements correspond to one-step models with $X = \{x\}$ that each either satisfy $\tau (v)(x) \triangleright a$ or $\tau (v)(x) \triangleleft b$ if there is a $1$ at that respective position and do not satisfy it if there is a $0$ at that respective position. 
	Thus the existence of a one-step model that satisfies all the constraints by the intervals is equivalent to the existence of a one-step model with at most $\lvert X \rvert = 2n+1$ that satisfies all the constraints by the intervals. 
\end{proof}

\begin{proof}[Proof of \Cref{quantitativeexpbranching}]
	Following the proof of \Cref{quantitativeexpbounded} we represent successor states as elements $\{0,1\}^{2n}$. There are $2^{2n}$ possibilities for each of the $2n+1$ elements, so there are only exponentially many possible configurations of successor states and we restrict to configurations where their convex hull contains a vector satisfying the inequalities. Each such state has to either have $\tau(v)(x) \triangleright c$ or $\tau(v)(x) \overline{\triangleright}^\circ c$ for each of the first $n$ inequalities and $\tau(v)(x) \triangleleft c$ or $\tau(v)(x) \overline{\triangleleft}^\circ c$ for the other $n$ inequalities. Taking the inequalities for each $v$ for each state $x_i$ and writing them as an interval for $Q(i)(v)$ then gives us exact tableau sequents $Q(i)$ over $V$. Then for each possible configuration, we build a set of these tableau sequents and take these sets as our elements $Q_1, \ldots, Q_m$, which form a set of conclusions of a modal tableau rule by construction.
	This also immediately gives us that the amount of conclusions is at most exponential in size.
\end{proof}

\begin{proof}[Proof of \Cref{quantitativepolynomialspace}]
	Following the proofs of \Cref{quantitativeexpbounded} and \Cref{quantitativeexpbranching}, we can compute the $i$-th tableau sequent of the $n$-th conclusion of a modal tableau rule in the following way: Iterate over the possible configurations of successor structures and check for each if it can solve the inequalities that correspond to the bounds of the original sequent $\Gamma$. Take the $n$-th configuration that can solve the inequalities. Finally, take the vector representation of the $i$-th successor state and construct the tableau sequent for it. Computing this for a configuration can be done in nondeterministic polynomial time as a linear programming problem.
\end{proof}

\begin{proof}[Proof of \Cref{metricexpbounded}]
	Let $\Gamma$ be a tableau sequent over $R \subseteq \Lambda (V)$, $\lvert R \rvert = n$ and $M = (X, \tau, t)$ a one-step model satisfying $\Gamma$. By \Cref{remark:freshvs} it is sufficient to prove this property for tableau sequents where each $v \in V$ is used at most once. Then we obtain a new model $M' = (X', \tau', t')$ by the following: We start with $X' = \emptyset$. Then for each tableau literal $\diamondsuit_l ^c v \tin \llparenthesis a, b \rrparenthesis$ in $\Gamma$ we introduce a new state $x'$ and put $X' = X' \cup \{x'\}$. We then put $\tau(v)(x') = 1$ and $\tau(w)(x') = 0$ for $w \neq v$. Furthermore put $t(l, x') = \rrbracket \diamondsuit_l ^c v \rrbracket_M$ and $t(m, x') = 0$ for $m \neq l$. It is then trivial to see that $M'$ also satisfies $\Gamma$ as we have $\rrbracket \diamondsuit_l ^c v \rrbracket_{M'} = \rrbracket \diamondsuit_l ^c v \rrbracket_M$ for all $\diamondsuit_l ^c v \in R$.
\end{proof}

\begin{proof}[Proof of \Cref{metricexpbranching}]
	We first investigate what $\diamondsuit_l ^c v \tin \llparenthesis a, b \rrparenthesis$ means for a one-step model $M=(X, \tau, t)$ with only finitely many states; if $M$ satisfies this constraint, we have to have for at least one $m \in L$ with $d_L(l,m) \triangleright^\circ c - a$ at least one $x \in X$ such that $\tau(v)(x) \triangleright a$ and $t(m,x) \triangleright a$ and for all $m \in L$ with $d_L(l,m) \triangleleft c - b$ we have for all $x \in X$ that either $\tau(v)(x) \triangleleft b$ or $t(m,x) \triangleleft b$.
	We now construct a set of conclusions for a modal tableau rule in the following way: Let $\Gamma = \{g_1, g_2, \ldots, g_n\}$ be a tableau sequent with $g_i \in \Lambda(V)$ for all $1 \leq i \leq n$. We write $\diamondsuit_{l_i} ^{c_i} v_i \tin \llparenthesis_i a_i, b_i \rrparenthesis_i$ for the tableau literal $g_i$. We then define a set of exact tableau sequents $Q = \{Q(1), \ldots, Q(n)\}$ over $V$ element wise: We put $Q(i) (v_i) = \llparenthesis_i a_i, 1]$ and for all $i \neq j$ with $\{l \in L \mid d_L(l,l_i) \triangleright_i^\circ c_i - a_i, d_L(l,l_j) \triangleright_j^\circ c_j - a_j\} \neq \emptyset$ and $\llparenthesis_i a_i, b_j \rrparenthesis_j = \emptyset$ we either put $Q(j)(v_i) = [0, b_i \rrparenthesis_i$ or $Q(j)(v_i) = [0,1]$; that is, each conclusion of the modal tableau rule differs by the upper bounds we chose to be relevant for each tableau sequent. For each $i$ we then have a set $J_i$ which contains all $j$ where we chose to put $Q(j)(v_i) = [0, b_i \rrparenthesis_i$ and a set $J'_i$ which contains all $j$ where we chose to put $Q(j)(v_i) = [0,1]$ during this step. We then also filter out the conclusions, where no suitable choices for of a $l'_i$ for each $l_i$ exist, such that $d_L(l_i,l'_i) \triangleright_i^\circ c_i - a_i$ but also $d_L(l'_i,l'_j) \overline{\triangleleft_j}^\circ c_j - b_j$ for every $j \in J'_i$.
	For all other $j$ and $v_i$ we finally put $Q(j)(v_i) = [0,1]$.
	Then by construction, it is clear that any one-step model $M$ can only satisfy $\Gamma$ if each of the tableau sequents of some $Q(i)$ is realized in it; if not, either a lower or an upper bound of some tableau literal will not be satisfied. 
	Similarly, the algorithm to build the conclusion also outlines how to choose $t$ for a pair $(X_n, \tau)$ that strictly realizes the tableau sequents of $Q$: Take for each state $x_i$ some value $t_i$ in $\llparenthesis_i a_i, 1] \cap \bigcap_{j \in J_i} [0, b_j \rrparenthesis_j$ which is non-empty by construction, put $t(l'_i, x_i) = t_i$ and put $t(l', x_i) = 0$ for all $l' \neq l'_i$. Again, by construction we immediately obtain that the model $(X_n, \tau, t)$ satisfies $\Gamma$.
\end{proof}

\begin{proof}[Proof of \Cref{metricpolynomialspace}]
	Computing a tableau sequent in a conclusion of the modal tableau rule outlined in the proof of \Cref{metricexpbranching} can be done by similar algorithms as above.
\end{proof}

\begin{remark}
	The proofs above also work for fuzzy metric modal logic with crisp transitions; one only needs to adjust the requirements for relevant upper bounds in the following way: The upper bound of the $i$-th tableau literal $\diamondsuit_{l_i} ^{c_i} v_i \tin \llparenthesis_i a_i, b_i \rrparenthesis_i$ now asserts that for all $m \in L$ with $d_L(l,m) \triangleleft c - b$ we have for all $x \in X$ that either $\tau(v)(x) \triangleleft b$ or there is no $m$-transition to $x$. In essence, this means that we now choose $Q(j)(v_i) = [0, b_i \rrparenthesis_i$ or $Q(j)(v_i) = [0,1]$ for all $i \neq j$ with $\{l \in L \mid d_L(l,l_i) \triangleright_i^\circ c_i - a_i, d_L(l,l_j) \triangleright_j^\circ c_j - a_j\} \neq \emptyset$.
\end{remark}

\end{document}